\newskip\subfigtoppskip \subfigtopskip = -1mm
\newskip\subfigcapskip \subfigcapskip = -1mm
\DeclareMathOperator*{\argmin}{arg\,min}
\newcommand*\circled[1]{\raisebox{.5pt}{\textcircled{\raisebox{-.9pt}{#1}}}}
\newcommand{\sk}{spatial-keyword\xspace}
\newcommand{\gt}{geo-textual\xspace}
\newcommand{\topk}{top-$k$\xspace}
\newcommand{\pubsub}{publish\slash subscribe\xspace}
\newcommand{\ours}{\textsf{Skype}\xspace}
\newcommand{\moduleone}{message dissemination module\xspace}
\newcommand{\moduletwo}{\topk re-evaluation module\xspace}
\newcommand{\csky}{\textsf{cSkyband}\xspace}
\newcommand{\naive}{\textsf{Skyband}\xspace}
\newcommand{\kmax}{\textsf{kmax}\xspace}
\newcommand{\ipt}{\textsf{IPT}\xspace}
\newcommand{\igpt}{\textsf{IGPT}\xspace}
\newcommand{\ciq}{\textsf{CIQ}\xspace}
\newcommand{\igptcsky}{\textsf{IGPT-cSkyband}\xspace}
\newcommand{\igptnaive}{\textsf{IGPT-Skyband}\xspace}
\newcommand{\igptkmax}{\textsf{IGPT-Kmax}\xspace}
\newcommand{\ciqkmax}{\textsf{CIQ-Kmax}\xspace}
\newcommand{\rtree}{\textsf{R-Tree}\xspace}
\newcommand{\qd}{\textsf{Quadtree}\xspace}
\newcommand{\ifile}{inverted file\xspace}
\newcommand{\skyband}{$k$-skyband\xspace}
\newcommand{\irtree}{\textsf{IR-Tree}\xspace}
\newcommand{\sti}{\textsf{S2I}\xspace}
\newcommand{\tweets}{\textit{TWEETS}\xspace}
\newcommand{\gn}{\textit{GN}\xspace}
\newcommand{\yelp}{\textit{YELP}\xspace}
\newcommand{\amp}{\textit{AMP}\xspace}
\newcommand{\emp}{\textit{EMP}\xspace}
\newcommand{\score}{\textsf{Score}}
\newcommand{\tsim}{\textsf{TSim}}
\newcommand{\ssim}{\textsf{SSim}}
\newcommand{\wt}{\textsf{wt}}
\newcommand{\maxwt}{\textsf{maxwt}}
\newcommand{\wtsum}{\textsf{wtsum}}
\newcommand{\kscore}{\textsf{kScore}}
\newcommand{\tsimlb}{\textsf{$\lambda_{T}$}}
\newcommand{\ssimub}{\textsf{SSimUB}}
\newcommand{\tsimub}{\textsf{TSimUB}}
\newcommand{\ssimlb}{\textsf{$\lambda_{S}$}}
\newcommand{\prob}{\textsf{prob}}
\newcommand{\pref}{\textsf{pref}}
\newcommand{\prefplus}{\textsf{pref$_{+}$}}
\newcommand{\zipf}{\textsf{zipf}\xspace}
\newcommand{\storm}{\textsf{Storm}\xspace}
\newcommand{\hadoop}{\textsf{Hadoop}\xspace}
\newcommand{\sparkstreaming}{\textsf{Spark Streaming}\xspace}
\newcommand{\samza}{\textsf{Samza}\xspace}
\newcommand{\oursdis}{\textsf{DSkype}\xspace}
\newcommand{\kdtree}{\textsf{KD-Tree}\xspace}
\newcommand{\dmhashing}{\textsf{hashing-based}\xspace}
\newcommand{\dmlocation}{\textsf{location-based}\xspace}
\newcommand{\dmkeyword}{\textsf{keyword-based}\xspace}
\newcommand{\dmprefix}{\textsf{prefix-based}\xspace}
\newcommand{\tornado}{\textsf{Tornado}\xspace}
\newcommand{\sfirst}{\textsf{spatial-first}\xspace}
\newcommand{\kfirst}{\textsf{keyword-first}\xspace}
\begin{document}

\title{Top-k Spatial-keyword Publish/Subscribe Over Sliding Window
}

\titlerunning{Top-k Spatial-keyword Publish/Subscribe Over Sliding Window}        

\author{Xiang Wang	\and
		Ying Zhang	\and
		Wenjie Zhang	\and
		Xuemin Lin	\and
		Zengfeng Huang
}

\authorrunning{X. Wang et al.} 

\institute{X. Wang (\Letter) \and W. Zhang \and X. Lin \and Z. Huang \at 	
						School of Computer Science and Engineering,\\
              			The University of New South Wales\\
              			Sydney, Australia\\
              			\email{xiangw@cse.unsw.edu.au}           
           \and
           Y. Zhang \at CAI, University of Technology,\\
           				Sydney, Australia\\
              			\email{ying.zhang@uts.edu.au}           
           \and
           W. Zhang \at \email{zhangw@cse.unsw.edu.au}
           \and
           X. Lin \at \email{lxue@cse.unsw.edu.au}
           \and
           Z. Huang \at \email{huangzengfeng@gmail.com}
}


\maketitle

\begin{abstract}
With the prevalence of social media and GPS-enabled devices, a massive amount of \textit{geo-textual} data has been generated in a stream fashion,
leading to a variety of applications such as location-based recommendation and information dissemination.
In this paper, we investigate a novel real-time \topk monitoring problem over sliding window of streaming data;
that is, we continuously maintain the \textit{top-k} most relevant \textit{\gt messages} (e.g., geo-tagged tweets) for a large number of \textit{\sk subscriptions} (e.g., registered users interested in \textit{local events}) simultaneously.
To provide the most recent information under controllable memory cost, sliding window model is employed on the streaming \gt data.
To the best of our knowledge, this is the first work to study \topk \sk \pubsub over sliding window.
A novel centralized system, called \ours (Top-k \textbf{S}patial-\textbf{k}e\textbf{y}word \textbf{P}ublish\slash Subscrib\textbf{e}), is proposed in this paper.
In \ours, to continuously maintain \topk results for massive subscriptions, we devise a novel indexing structure upon subscriptions such that each incoming message can be immediately delivered on its arrival.
To reduce the expensive \topk re-evaluation cost triggered by message expiration, we develop a novel \textit{cost-based \skyband} technique to reduce the number of re-evaluations in a cost-effective way.
Extensive experiments verify the great efficiency and effectiveness of our proposed techniques.
Furthermore, to support better scalability and higher throughput, we propose a distributed version of \ours, namely, \oursdis, on top of \storm, which is a popular distributed stream processing system.
With the help of fine-tuned subscription/message distribution mechanisms, \oursdis can achieve orders of magnitude speed-up than its centralized version.

\end{abstract}

\section{Introduction}
\label{sec:intro}

Recently, with the ubiquity of social media and GPS-enabled mobile devices,
large volumes of \gt data have been generated in a stream fashion, leading to the popularity of \textit{\sk \pubsub system} (e.g.,~\cite{DBLP:conf/kdd/LiWWF13,DBLP:conf/sigmod/ChenCC13,DBLP:conf/icde/WangZZLW15,DBLP:conf/icde/HuLLFT15,DBLP:conf/icde/ChenCCT15}) in a variety of applications such as location-based recommendation and social network.
In such a system, each individual user can register her interest (e.g., favorite food or sports) and location as a \textit{\sk subscription}.
A stream of \textit{\gt messages} (e.g., e-coupon promotion and tweets with location information) continuously generated by publishers (e.g., local business) are rapidly fed to the relevant users.

\begin{figure}[t]
\centering
\includegraphics[width=0.9\columnwidth]{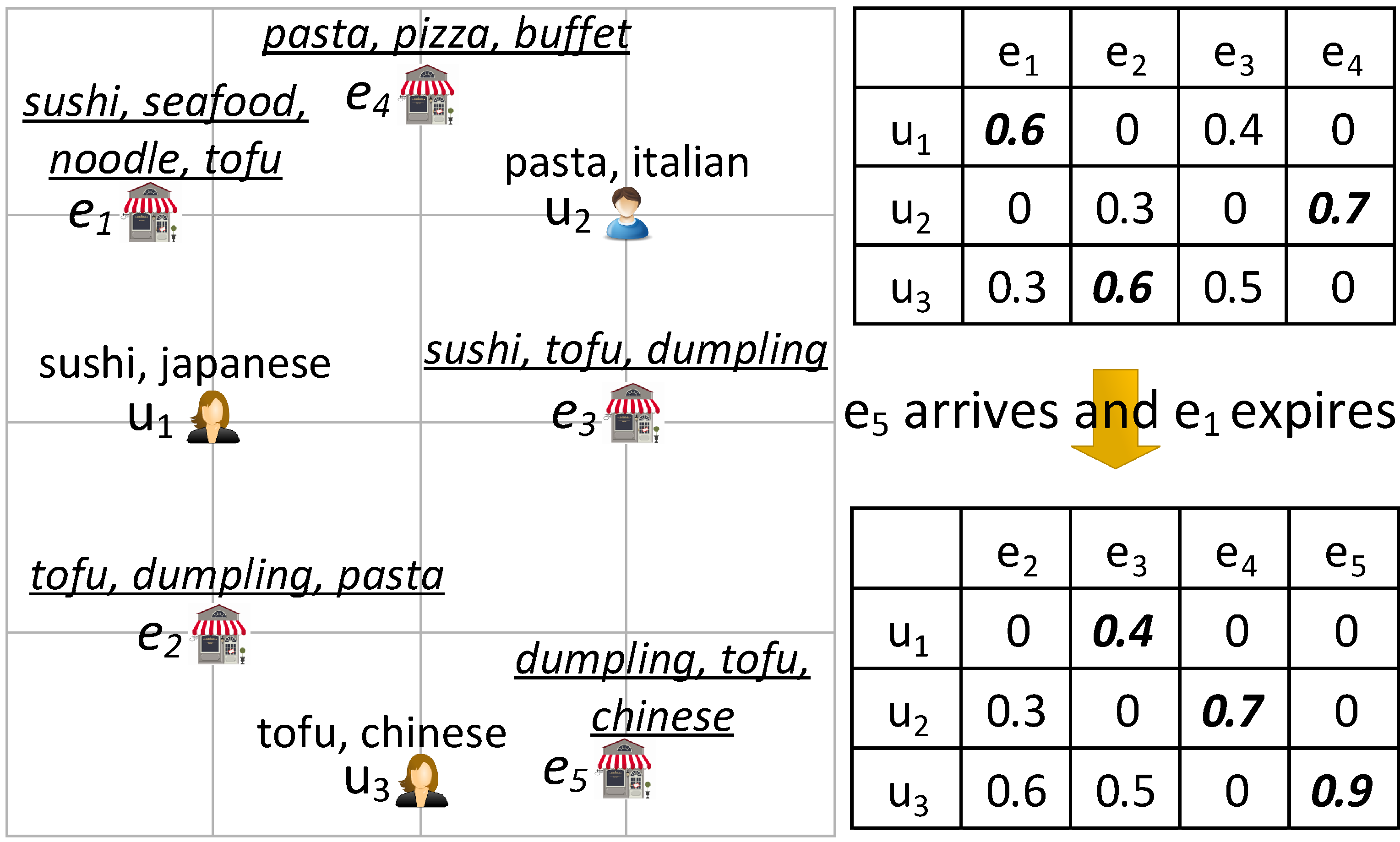}
\caption{\small{E-coupon recommendation system}}
\label{fig:intro:example_real}
\end{figure}

The \sk \pubsub system has been studied in several existing work (e.g.,~\cite{DBLP:conf/kdd/LiWWF13,DBLP:conf/sigmod/ChenCC13,DBLP:conf/icde/WangZZLW15}).
Most of them are geared towards boolean matching, thus making the size of messages received by users unpredictable.
This motivates us to study the problem of \topk \sk \pubsub such that only the \topk most relevant messages are presented to users.
Moreover, we adopt the popular sliding window model~\cite{DBLP:conf/pods/BabcockBDMW02} on \gt stream to provide the fresh information under controllable memory usage.
In particular,
for each subscription, we score a message based on their \gt similarity,
and the \topk messages are continuously maintained against the update of the sliding window (i.e., message arrival and expiration).
Below is a motivating example.

%


\begin{example}
Figure~\ref{fig:intro:example_real} shows an example of location-aware e-coupon recommendation system.
Three users interested in nearby restaurants are registered with their locations and favorite food,
intending to keep an eye on the most relevant e-coupon issued recently.
We assume the system only stores the most recent four e-coupons.
An e-coupon $e$ will be delivered to a user $u$ if $e$ has the highest score w.r.t. $u$ according to their spatial and textual similarity.
Initially, we have four e-coupons, and the top-1 answer of each user is shown in bold in the upper-right table,
where the relevance score between user and e-coupon is depicted.
When a new e-coupon $e_5$ arrives and the old e-coupon $e_1$ expires, the updated results are shown in bottom-right table.
Particularly, the top-1 answer of $u_1$ is replaced by $e_3$ since $e_1$ is discarded from the system,
while the answer of $u_3$ is replaced by $e_5$, as $e_5$ is the most relevant to $u_3$.
The top-1 answer of $u_2$ remains unchanged.
\end{example}

\noindent \textbf{Challenges.}
Besides the existing challenges in spatial-keyword query processing~\cite{DBLP:conf/icde/FelipeHR08,DBLP:journals/pvldb/CongJW09,DBLP:conf/ssd/RochaGJN11,DBLP:conf/sigir/ZhangCT14},
our problem presents two new challenges.

The first challenge is to devise an efficient indexing structure for a huge number of subscriptions, such that each message from the high-speed stream can be disseminated immediately on its arrival.
The only work that supports \topk \sk \pubsub is proposed by Chen~\emph{et al.}~\cite{DBLP:conf/icde/ChenCCT15}.
In a nutshell, they first deduce a textual bound for each subscription and then employ DAAT (Document-at-a-time~\cite{DBLP:conf/cikm/BroderCHSZ03}) paradigm to traverse the \ifile built in each spatial node.
However, we observe that the continuous \topk monitoring problem is essentially a \textit{threshold-based similarity search problem} from the perspective of message; that is, a new message will be delivered to a subscription if and only if its score is not less
than the current threshold score (e.g., $k$-th highest score) of the subscription.
Consequently, although the DAAT paradigm has been widely used for \topk search (e.g.,~\cite{DBLP:conf/sigir/DingS11}), it is not suitable to our problem
because the advanced threshold-based pruning techniques cannot be naturally integrated under DAAT paradigm.

%

The second challenge is the \topk re-evaluation problem triggered by frequent message expiration from the sliding window.
For example, in Figure~\ref{fig:intro:example_real}, the expiration of $e_1$ invalidates the current top-$1$ answer (i.e., $e_1$) of $u_1$,
and thus the system has to re-compute the new result for $u_1$ over the sliding window.
It is cost-prohibitive to re-evaluate all the affected subscriptions from scratch when a message expires.
Some techniques have been proposed to solve this problem (e.g.,~\cite{DBLP:conf/icde/YiYYXC03,DBLP:conf/sigmod/MouratidisBP06,DBLP:conf/icde/BohmOPY07,DBLP:journals/tods/PripuzicZA15}).
Yi~\emph{et al.}~\cite{DBLP:conf/icde/YiYYXC03} introduce a \kmax strategy, trying to maintain top-$k'$ results, with $k'$ being a value between $k$ and \kmax, rather than buffering the exact \topk results.
Later, Mouratidis~\emph{et al.}~\cite{DBLP:conf/sigmod/MouratidisBP06} notice that \kmax ignores the dominance relationship between messages, and propose a novel idea to convert \topk maintenance into \textit{partial \skyband} maintenance to reduce the number of re-evaluations.
Nevertheless, they simply use the $k$-th score of a continuous query (i.e., subscription in our paper) as the threshold of its \skyband without theoretical underpinnings, which may result in poor performance in practice.




On the other hand, the limited computational resources (e.g., CPU, memory) in a single machine often become the bottleneck when we increase the scale of real-life applications, where millions of active users need to be maintained simultaneously.
To alleviate this issue, we extend \ours on top of \storm\footnote{Apache storm project. http://storm.apache.org/}, an open-source distributed real-time in-memory processing system, to leverage parallel processing such that high throughput can be achieved.
\storm itself is intrinsically designed to solve real-time stream processing tasks, which therefore best suits our \topk \pubsub problem.
The main challenge here lies in how to partition and distribute subscriptions and messages such that workload balance and high throughput can be achieved at a small communication cost.


In this paper, we propose a novel centralized system, i.e.,~\ours,
to efficiently support \topk \textbf{S}patial-\textbf{k}e\textbf{y}word \textbf{P}ublish/Subscrib\textbf{e} over sliding window.
Two key modules, \textit{\moduleone} and \textit{\moduletwo}, are designed to address the above challenges.
Specifically, the \moduleone aims to rapidly deliver each arriving message to its affected subscriptions on its arrival.
We devise efficient subscription indexing techniques which carefully integrate both spatial and textual information.
Following the TAAT (Term-at-a-time~\cite{DBLP:conf/sigir/BuckleyL85}) paradigm, we significantly reduce the number of non-promising subscriptions
for the incoming message by utilizing a variety of spatial and textual pruning techniques.
On the other hand, the \moduletwo is designed to refill the \topk results of subscriptions when their results expire.
To alleviate frequent re-evaluations, we develop a novel \textit{cost-based \skyband} technique which carefully selects the messages to be buffered based on a threshold value determined by a cost model, considering both \topk re-evaluation cost and \skyband maintenance cost.
In addition, to speed-up real-time processing, we follow most of the existing \pubsub systems (e.g.,~\cite{DBLP:conf/kdd/LiWWF13,DBLP:conf/icde/WangZZLW15,DBLP:conf/icde/HuLLFT15,DBLP:conf/icde/ChenCCT15}) to implement all our indexes in main memory.

To support better scalability beyond \ours, we pioneer a novel distributed real-time processing system, namely,~\oursdis, which is a distributed version of \ours deployed on top of \storm.
We propose four different distribution mechanisms, i.e., hashing-based, location-based, keyword-based and prefix-based, to distribute subscriptions and messages to relevant components.
Among them, prefix-based technique yields the best overall performance in terms of both throughput and communication cost.
For example, it can process nearly $1300$ messages per second over $5M$ subscriptions on a small-size cluster.

\noindent \textbf{Contributions.} Our principal contributions are summarized as follows:
\begin{itemize}
\item We propose a novel framework, called \ours, which continuously maintains \topk \gt messages for a large number of subscriptions over sliding window model.
To the best of our knowledge, this is the first work to integrate sliding window model into \sk \pubsub system. (Section~\ref{sec:framework})
\item For \moduleone, we propose both \textit{individual pruning technique} and \textit{group pruning technique} to significantly improve the dissemination efficiency following the TAAT paradigm. (Section~\ref{sec:disseminate})
\item For \moduletwo,
a novel \textit{cost-based \skyband} method is developed to determine the best threshold value with in-depth theoretical analysis.
It is worth mentioning that our technique is a general approach which can be applied to other continuous \topk problems over sliding window. (Section~\ref{sec:refill})
\item We extend \ours on top of \storm, a distributed real-time processing environment.
By introducing to \storm a distribution layer which employs several efficient distribution mechanisms, the distributed version can achieve high throughput with better scalability.
As far as we know, this is the first work which extends \topk \pubsub system on top of \storm. (Section~\ref{sec:distributed})
\item We conduct extensive experiments to verify the efficiency and effectiveness of both \ours and its distributed version \oursdis.
It turns out that \ours usually achieves up to orders of magnitude improvement compared to its competitors, while \oursdis achieves further improvement over \ours with better scalability and large margin. (Section~\ref{sec:exp})
\end{itemize}


\section{Related Work}
\label{sec:related}

\subsection{Spatial-keyword Search}
Spatial-keyword search has been widely studied in literatures.
It aims to retrieve a set of \gt objects based on boolean matching
(e.g.,~\cite{DBLP:conf/cikm/ZhouXWGM05,DBLP:conf/ssdbm/HariharanHLM07,DBLP:conf/icde/FelipeHR08})
or score function (e.g.,~\cite{DBLP:journals/pvldb/CongJW09,DBLP:conf/ssd/RochaGJN11,DBLP:conf/cikm/ChristoforakiHDMS11,DBLP:conf/sigir/ZhangCT14}) by combining both spatial index (e.g.,~\rtree, \qd) and textual index (e.g.,~\ifile).
A nice summary of \sk query processing is available in~\cite{DBLP:journals/pvldb/ChenCJW13}.
Several extensions based on \sk processing have also been investigated, such as moving \sk query~\cite{DBLP:conf/sigmod/GuoZLTB15}, collective \sk query~\cite{DBLP:conf/sigmod/GuoCC15} and reverse \sk query~\cite{DBLP:conf/sigmod/LuLC11}.
Note that a spatial-keyword search is an ad-hoc/snapshot query (i.e., user-initiated model) while our problem focuses on continuous query (i.e., server-initiated model).

\subsection{Publish\slash Subscribe System}
Users register their interest as long-running queries in a \pubsub system, and streaming publications are delivered to relevant users whose interests are satisfied.
Nevertheless, most of the existing \pubsub systems (e.g.,~\cite{DBLP:journals/pvldb/WhangBSVVYG09,DBLP:conf/sigmod/SadoghiJ11,DBLP:journals/pvldb/ZhangCT14,DBLP:journals/pvldb/ShraerGFJ13}) do not consider spatial information.
Recently, \sk \pubsub system has been studied in a line of work (e.g.,~\cite{DBLP:conf/kdd/LiWWF13,DBLP:conf/sigmod/ChenCC13,DBLP:conf/icde/WangZZLW15,DBLP:conf/icde/HuLLFT15,DBLP:conf/icde/ChenCCT15}).
Among them, \cite{DBLP:conf/kdd/LiWWF13,DBLP:conf/sigmod/ChenCC13,DBLP:conf/icde/WangZZLW15} study the boolean matching problem
while \cite{DBLP:conf/icde/HuLLFT15} studies the similarity search problem, where each subscription has a \textit{pre-given} threshold.
These work are inherently different from ours, and it is non-trivial to extend their techniques to support \topk monitoring.

The \ciq index proposed by Chen~\emph{et al.}~\cite{DBLP:conf/icde/ChenCCT15} is the only close work that supports \topk \sk \pubsub
(shown in Figure~\ref{fig:related:example_ciq}).
In \ciq, a \qd is used to partition the whole space.
Each subscription is assigned to a number of covering cells, forming a disjoint partition of the entire space.
In Figure~\ref{fig:related:example_ciq}, we assume all the subscriptions have the same cell covering, i.e., from $c_1$ to $c_7$.
A textual bound (e.g., \textsf{MinT}) is precomputed for each subscription w.r.t. each assigned cell, as shown in the tables where the textual bounds w.r.t. $c_2$ and $c_7$ are displayed.
An \ifile ordered by subscription id is built to organize the subscriptions assigned to each cell.
For a new message (e.g., $m_1$),
\ciq traverses all the inverted files with corresponding cells penetrated by message location (e.g., $c_2$) in DAAT paradigm,
and finds all the subscriptions with textual similarity higher than the precomputed bound as candidates, which are then verified to get final results.
However, we notice that DAAT paradigm employed in \ciq cannot integrate some advanced techniques for threshold-based similarity search, given that the nature of our problem is a threshold-based search problem.
Contrary to \ciq, our indexing structure is designed for the TAAT paradigm, combined with advanced techniques for threshold-based pruning, thus enabling us to exclude a significant number of subscriptions.
Moreover, \ciq indexes each subscription into multiple cells, taking advantage of precomputed spatial bound.
However, the gain is limited since the number of covering cells for each subscription cannot be too large; otherwise, it would lead to extremely high memory cost.
Thus, we turn to an \textit{on-the-fly} spatial bound computation strategy, where each subscription is assigned to a single cell with finer spatial granularity.
Finally, we remark that \ciq integrates a time decay function rather than a sliding window, which, in the worst case, may overwhelm the limited memory.

\begin{figure}[t]
\centering
\includegraphics[width=0.95\columnwidth]{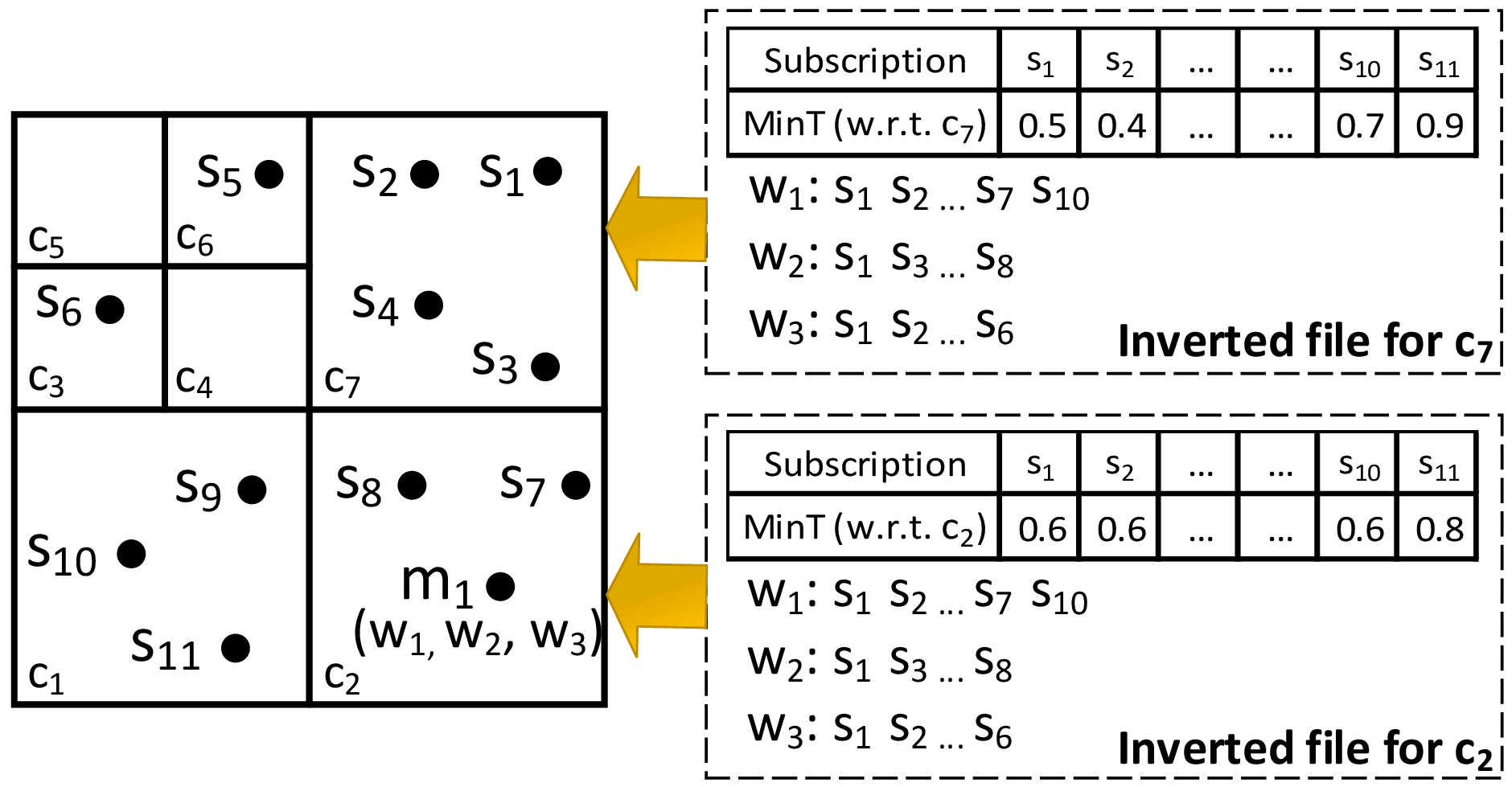}
\caption{\small{Example of \ciq index}}
\label{fig:related:example_ciq}
\end{figure}

\subsection{Top-k Maintenance Over Sliding Window}
One critical problem for \topk maintenance over sliding window is that,
when an old element (i.e., message in this paper) expires, we have to recompute the \topk results for the affected
continuous queries (i.e., subscriptions in this paper), which is cost-expensive if we simply re-evaluate from scratch.
On the flip side, it is also infeasible to buffer all elements and their scores for each individual query to avoid \topk re-evaluation.
Several techniques are proposed aiming to identify a trade-off between the number of re-evaluations and the buffer size.
In~\cite{DBLP:conf/icde/YiYYXC03}, Yi~\emph{et al.} introduce a \kmax approach.
Rather than maintain exact \topk results, they continuously maintain top-$k'$ results where $k'$ is between $k$ and a parameter \kmax.
However, followed by observation from Mouratidis~\emph{et al.}~\cite{DBLP:conf/sigmod/MouratidisBP06},
\kmax may contain redundant elements due to the overlook of dominance relationship.
Thus, Mouratidis~\emph{et al.} propose a \skyband based algorithm to remove redundancy.
Since it is very expensive to maintain the \textit{full \skyband} for each individual query,
they only keep elements with scores not lower than the $k$-th highest score determined by the most recent \topk re-evaluation.
We observe that this setting is rather ad-hoc and thus may result in unsatisfactory performance in practice.
B{\"{o}}hm~\emph{et al.}~\cite{DBLP:conf/icde/BohmOPY07} utilize a delay buffer to avoid inserting the newly-arriving objects with low scores into the \skyband.
However, since each object has to probe query index twice during its life time, their method is not suitable to our problem given the large number of registered queries (i.e., subscriptions).
Pripuzic~\emph{et al.}~\cite{DBLP:journals/tods/PripuzicZA15} propose a probabilistic \skyband method to drop the data which is unlikely to become \topk results in order to save space and improve efficiency.
However, their technique may discard some \topk elements due to its probabilistic nature.
In this paper, we propose a novel \textit{cost-based \skyband} technique to carefully determine the size of \skyband buffer based on a cost model.

\subsection{Distributed Spatial Query Processing}
There are a bunch of work studying spatial query processing by utilizing distributed system.
Nishimura~\emph{et al.}~\cite{DBLP:conf/mdm/NishimuraDAA11} extend HBase\footnote{Apache HBase project. https://hbase.apache.org/} to support multi-dimensional index.
Aji~\emph{et al.}~\cite{DBLP:journals/pvldb/AjiWVLL0S13} propose Hadoop-GIS, a distributed data warehouse infrastructure built on top of \hadoop, which provides functionality of spatial data analytics.
Later, Eldawy~\emph{et al.}~\cite{DBLP:conf/icde/EldawyM15} develop SpatailHadoop, a full-fledged system which supports various spatial queries by integrating spatial-awareness in each \hadoop layer.
Aly~\emph{et al.} present an adaptive mechanism on top of \hadoop to partition large-scale spatial data for efficient query processing~\cite{DBLP:journals/pvldb/AlyMHAOEQ15}.
Xie~\emph{et al.}~\cite{xiesimba} introduce a system called Simba to provide efficient in-memory spatial analytics by extending Spark SQL engine.
All the work above focus on some fundamental spatial queries, such as range query and kNN query, which is inherently different from our \topk \sk \pubsub problem.
A very relevant work, called \tornado, which also supports \sk stream processing on \storm, appears in~\cite{DBLP:journals/pvldb/MahmoodAQRDMAHA15}.
\tornado is a general spatial-keyword stream processing system to support both snapshot and continuous queries.
However, their main focus is not on the index construction over subscription queries, which nevertheless is the main contribution of our paper.
Besides, they cannot support the \topk \sk subscription queries as ours.

On the other hand, many stream processing systems, such as \sparkstreaming\footnote{Apache spark project. http://spark.apache.org/streaming/}, \samza\footnote{Apache samza project. http://samza.apache.org/} and \storm, have been developed to support efficient processing of real-time data.
Most of them are featured with open-source, low-latency, distributed, scalable and fault-tolerant characteristics.
A nice comparison between different stream processing systems can be found in~\cite{DBLP:journals/cloudcomp/Ranjan14}.
We choose \storm here mainly because of its simplicity, efficiency, well-documented APIs and very active community\footnote{https://github.com/apache/storm}.
To the best of our knowledge, our work is the first one to support \topk \sk \pubsub in a distributed environment.

\section{Preliminary}
\label{sec:preliminary}
In this section, we formally present some concepts which are used throughout this paper.

\begin{definition}[Geo-textual Message]
A \gt message is defined as $m=(\psi, \rho, t)$, where $m.\psi$ is a collection of keywords from a vocabulary $\mathcal{V}$, $m.\rho$ is a point location, and $m.t$ is the arrival time.
\end{definition}

\begin{definition}[Spatial-keyword Subscription]
A \sk subscription is denoted as $s=(\psi, \rho, k, \alpha)$, where $s.\psi$ is a set of keywords, $s.\rho$ is a point location, $s.k$ is the number of messages that $s$ is willing to receive and $s.\alpha$ is the preference parameter used in the score function.
\end{definition}

To buffer the most recent data from \gt stream, we adopt a count-based sliding window defined as follows.

\begin{definition}[Sliding Window]
Given a stream of geo-textual messages arriving in time order, the sliding window $\mathcal{W}$ over the stream with size $|\mathcal{W}|$ consists of most recent $|\mathcal{W}|$ \gt messages.
\end{definition}


In the following of the paper, we abbreviate \textit{\gt message} and \textit{\sk subscription} as \textit{message} (denoted as $m$) and \textit{subscription} (denoted as $s$) respectively if there is no ambiguity.
We assume that the keywords in vocabulary $\mathcal{V}$, as well as the keywords in subscription and message, are sorted in increasing order of their term frequencies.
Note that sorting keywords in increasing order of frequency is a widely-adopted heuristic to speed up similarity search~\cite{DBLP:conf/icde/ChaudhuriGK06,DBLP:conf/www/BayardoMS07,DBLP:journals/tods/XiaoWLYW11}.
The $i$-th keyword in $s$ is denoted as $s.\psi[i]$,
and we use $s.\psi[i:j]$ to denote a subset of $s.\psi$, i.e., $\cup_{i \leq k \leq j}{\{s.\psi[k]\}}$.
Particularly, $s.\psi[i:]$ denotes $\cup_{i \leq k \leq |s.\psi|}{\{s.\psi[k]\}}$.
Message $m$ follows the similar notations.

\noindent \textbf{Score function.}
To measure the \textit{relevance} between a subscription $s$ and a message $m$, we employ a score function defined as follows:

\begin{equation}
\begin{split}
\label{equ:score}
\score(s,m) & = s.\alpha \cdot \ssim(s.\rho, m.\rho) \\
& + (1 - s.\alpha) \cdot \tsim(s.\psi, m.\psi)
\end{split}
\end{equation}
where $\ssim(s.\rho,m.\rho)$ is the spatial proximity and $\tsim(s.\psi,m.\psi)$ is the textual relevance between $s$ and $m$.
Thus, a subscriber can receive messages which are not only close to her location but also fulfil her interest.
Meanwhile, the parameter $\alpha$ can be adjusted by subscribers to best satisfy their diverse preferences.

To compute spatial proximity, we utilize Euclidean distance as $\ssim(s.\rho, m.\rho) = 1 - \frac{\textsf{EDist}(s.\rho, m.\rho)}{\textsf{MaxDist}}$,
where $\textsf{EDist}(s.\rho,m.\rho)$ is the Euclidean distance between $s$ and $m$, and $\textsf{MaxDist}$ is maximum distance in the space.

For textual similarity, we employ the well-known \textit{cosine similarity} \cite{manning2008introduction} as $\tsim(s.\psi, m.\psi) = \sum_{w \in s.\psi \cap m.\psi}{\wt(s.w) \cdot \wt(m.w)}$,
where $\wt(s.w)$ and $\wt(m.w)$ are \textit{tf-idf} weights of keyword $w$ in $s$ and $m$ respectively.
Note that the weighting vectors of both $s$ and $m$ are normalized to unit length.
Also, same as~\cite{DBLP:conf/icde/ChenCCT15}, to guarantee the \topk results are \textit{textual-relevant}, a message must contain at least one common keyword with a subscription to become its \topk results.

\noindent \textbf{Problem statement.}
Given a massive number of \sk subscriptions and a \gt stream, we aim to continuously monitor \topk results for all the subscriptions against the stream over a sliding window $\mathcal{W}$ in real time.

\section{Framework}
\label{sec:framework}

\begin{figure}[t]
\centering
\includegraphics[width=0.9\columnwidth]{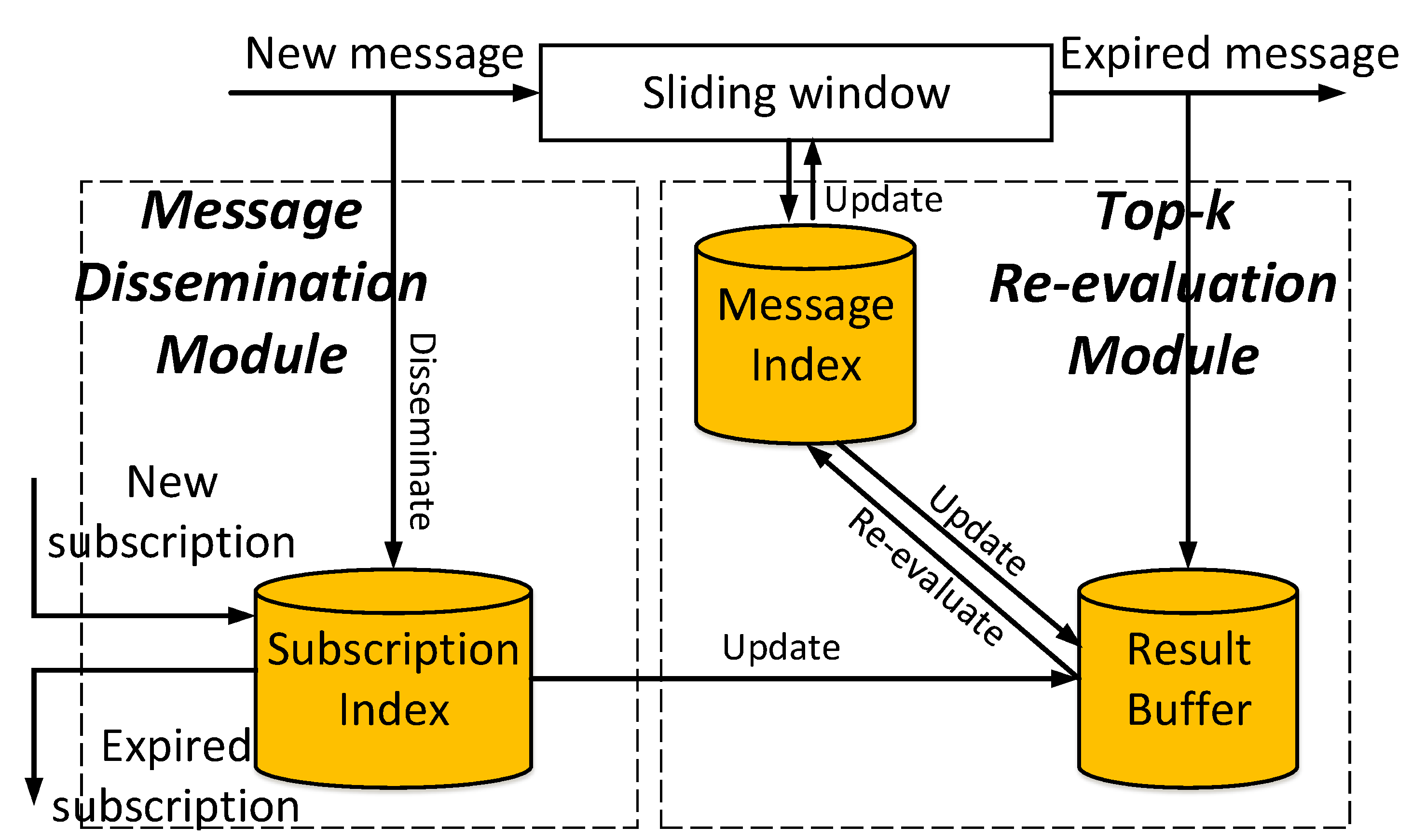}
\caption{\small{Framework of \ours}}
\label{fig:framework}
\end{figure}

Figure~\ref{fig:framework} shows the framework of \ours (Top-$k$ \textbf{S}patial-\textbf{k}e\textbf{y}word \textbf{P}ublish/Subscrib\textbf{e}).
We assume our system already has some registered subscriptions.
An arriving message will be processed by \textbf{\moduleone}, where a \textit{subscription index} is built to find all the affected subscriptions and update their \topk results.
An expired message will be processed by \textbf{\moduletwo}.
Specifically, it will check against a \textit{result buffer}, which maintains the \topk results (possibly including some non-\topk results) of all the subscriptions.
For the subscriptions that cannot be refilled through result buffer, their \topk results will be re-evaluated from scratch against a \textit{message index} containing all the messages over the sliding window.
Note that the message index can be implemented with any existing \sk index, such as \irtree~\cite{DBLP:journals/pvldb/CongJW09} and \sti~\cite{DBLP:conf/ssd/RochaGJN11}.
\ours can also support subscription update efficiently.
A new subscription will be inserted into subscription index, with its \topk results being initialized against message index, while an unregistered subscription will be deleted from both subscription index and result buffer.
Note that the subscription index and message index serve different purposes and cannot be trivially combined together.


\section{Message Dissemination}
\label{sec:disseminate}

In this section, we introduce a novel subscription index, which groups similar subscriptions, to support real-time dissemination against message stream.
Specifically, two key techniques, i.e., individual pruning and group pruning, are proposed in Section~\ref{sec:disseminate:ipt} and Section~\ref{sec:disseminate:gpt} respectively,
followed by the detailed indexing structure in Section~\ref{sec:disseminate:data_structure}.
Finally, we introduce dissemination algorithm in Section~\ref{sec:disseminate:algorithm} and index maintenance in Section~\ref{sec:disseminate:index_maintenance}.

\subsection{Individual Pruning Technique}
\label{sec:disseminate:ipt}

For each incoming message $m$, the key challenge is to determine all the subscriptions whose \topk results are affected.
Specifically, we denote the $k$-th highest score of a subscription $s$ as $\kscore(s)$.
Then the \topk results of $s$ need to be updated if $\kscore(s) \leq \score(s,m)$.
In this section, we propose a novel \textit{location-aware prefix filtering} technique to skip an individual subscription efficiently.

\subsubsection{Location-aware Prefix Filtering}
For ease of exposition, we denote a spatial similarity upper bound between a subscription $s$ and a message $m$ as $\ssimub(s.\rho,m.\rho)$, which will be discussed in detail in Section~\ref{sec:disseminate:ipt:spatial_bound}.
Based on Equation~\ref{equ:score}, we can derive a textual similarity threshold for pruning purpose accordingly:
\begin{equation}
\label{equ:tsimlb}
\tsimlb(s.\psi,m.\psi) = \frac{\kscore(s)}{1 - s.\alpha} - \frac{s.\alpha}{1 - s.\alpha} \cdot \ssimub(s.\rho,m.\rho)
\end{equation}

Then the following lemma claims that if the textual similarity between $s$ and $m$ is less than $\tsimlb(s.\psi,m.\psi)$, we can safely skip $s$.
\begin{lemma}
\label{lemma:tsimlb}
A message $m$ cannot affect \topk results of a subscription $s$ if $\tsim(s.\psi,m.\psi) < \tsimlb(s.\psi,m.\psi)$.
\end{lemma}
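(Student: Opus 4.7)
The plan is to show directly that the stated hypothesis $\tsim(s.\psi,m.\psi) < \tsimlb(s.\psi,m.\psi)$ forces $\score(s,m) < \kscore(s)$, which by the very meaning of $\kscore(s)$ (the current $k$-th highest score for $s$) means $m$ cannot displace any of the current top-$k$ messages of $s$. So the whole task is an algebraic manipulation of Equation~\ref{equ:score} combined with the defining identity for $\tsimlb$ in Equation~\ref{equ:tsimlb}.

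First, I would start from the score function and replace $\ssim(s.\rho,m.\rho)$ by its upper bound $\ssimub(s.\rho,m.\rho)$, which is legal because $\ssim \leq \ssimub$ by definition of the upper bound and because the coefficient $s.\alpha$ is non-negative. This gives
\[
\score(s,m) \;\leq\; s.\alpha \cdot \ssimub(s.\rho,m.\rho) + (1-s.\alpha)\cdot \tsim(s.\psi,m.\psi).
\]
Next I would use the hypothesis $\tsim(s.\psi,m.\psi) < \tsimlb(s.\psi,m.\psi)$, again multiplying by the non-negative factor $(1-s.\alpha)$, to obtain a strict inequality in which only $\ssimub$ and $\tsimlb$ appear on the right-hand side.

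Finally, I would substitute the explicit expression for $\tsimlb$ from Equation~\ref{equ:tsimlb} and observe that the $s.\alpha \cdot \ssimub(s.\rho,m.\rho)$ term cancels, leaving exactly $\kscore(s)$. Combined with the two previous inequalities this yields $\score(s,m) < \kscore(s)$, completing the argument.

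The only subtlety, rather than a genuine obstacle, is handling the factor $1-s.\alpha$ in the denominator of $\tsimlb$: the derivation implicitly assumes $s.\alpha < 1$, and I would note explicitly that when $s.\alpha = 1$ the score is purely spatial so textual pruning is vacuous and the lemma is trivially handled separately. Also, the step that multiplies the hypothesis by $(1-s.\alpha)$ needs the same positivity, so I would flag this as the one place where a sign check is required, but otherwise the proof is a two- or three-line chain of inequalities.
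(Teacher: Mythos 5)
Your proof is correct and is exactly the chain of inequalities the paper leaves implicit when it declares the lemma ``immediate from Equation~\ref{equ:score} and Equation~\ref{equ:tsimlb}'': bound $\ssim$ by $\ssimub$, multiply the hypothesis by $(1-s.\alpha)$, and cancel the $s.\alpha\cdot\ssimub$ term to get $\score(s,m)<\kscore(s)$. Your explicit flag of the $s.\alpha<1$ sign condition is a sensible addition that the paper omits, but it does not change the approach.
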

\begin{proof}
It is immediate from Equation~\ref{equ:score} and Equation~\ref{equ:tsimlb}.
\end{proof}


To utilize Lemma~\ref{lemma:tsimlb}, we employ prefix filtering technique, which is widely adopted in textual similarity join problems (e.g.,~\cite{DBLP:conf/icde/ChaudhuriGK06,DBLP:conf/www/BayardoMS07,DBLP:journals/tods/XiaoWLYW11}).
Prefix filtering is based on the fact that \tsim~is essentially a vector product; therefore, we can determine the similarity upper bound between two objects by only comparing their prefixes.
Before we introduce prefix filtering technique, we first introduce a threshold value for each keyword in $s$:
\begin{equation}
\wtsum(s.\psi[i]) = \sum_{ i \leq j \leq |s.\psi|}{\wt(s.\psi[j])}
\end{equation}
Then we define a \textit{location-aware prefix} as follows.
\begin{definition}[Location-aware Prefix]
\label{def:loc_aware_prefix}
Given a subscription $s$, a message $m$ and a textual similarity threshold $\tsimlb(s.\psi,m.\psi)$,
we use $\pref(s|m) = s.\psi[1:p]$ to denote the location-aware prefix of $s$ w.r.t. $m$,
where $p = \argmin_{i}{ \{ \wtsum(s.\psi[i+1]) < \tsimlb(s.\psi,m.\psi) \} }$.
\end{definition}

The following lemma claims that location-aware prefix is sufficient to decide whether a message can be \topk result of a subscription.
\begin{lemma}
\label{lemma:prune_s}
Given a subscription $s$ and a message $m$, $\pref(s|m) \cap m.\psi = \emptyset$ is sufficient to skip $s$ regarding $m$.
\end{lemma}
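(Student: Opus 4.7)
The plan is to reduce the claim to Lemma~\ref{lemma:tsimlb}: I will show that the hypothesis $\pref(s|m)\cap m.\psi=\emptyset$ forces $\tsim(s.\psi,m.\psi) < \tsimlb(s.\psi,m.\psi)$, after which $m$ provably cannot enter the top-$k$ of $s$.

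First I would exploit the definition of the location-aware prefix. If none of the keywords $s.\psi[1],\dots,s.\psi[p]$ appears in $m.\psi$, then every shared keyword must sit in the suffix $s.\psi[p{+}1{:}]$, i.e.\ $s.\psi\cap m.\psi \subseteq s.\psi[p{+}1{:}]$. Writing out the cosine inner product and restricting the sum accordingly gives
\begin{equation*}
\tsim(s.\psi,m.\psi) \;=\! \sum_{w\in s.\psi\cap m.\psi}\!\! \wt(s.w)\,\wt(m.w) \;\leq\! \sum_{w\in s.\psi[p+1:]}\!\!\!\wt(s.w)\,\wt(m.w).
\end{equation*}
Because the weighting vector of $m$ is $L_2$-normalized to unit length, each coordinate satisfies $\wt(m.w)\leq 1$, so the right-hand side is at most $\sum_{w\in s.\psi[p+1:]}\wt(s.w)=\wtsum(s.\psi[p{+}1])$. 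By Definition~\ref{def:loc_aware_prefix}, $p$ was chosen precisely so that $\wtsum(s.\psi[p{+}1]) < \tsimlb(s.\psi,m.\psi)$, which chains together to yield $\tsim(s.\psi,m.\psi) < \tsimlb(s.\psi,m.\psi)$. Invoking Lemma~\ref{lemma:tsimlb} then completes the argument, since $m$ cannot affect the top-$k$ list of $s$.

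The main obstacle I expect is justifying the step $\wt(m.w)\leq 1$ cleanly within the paper's notation; this is where the unit-length normalization assumption stated right after the cosine definition does the crucial work. A secondary subtlety is that the argmin in Definition~\ref{def:loc_aware_prefix} must be interpreted as the smallest $p$ for which the suffix sum first drops below the threshold, so that the suffix $s.\psi[p{+}1{:}]$ indeed satisfies $\wtsum(s.\psi[p{+}1])<\tsimlb$; once this convention is fixed, the chain of inequalities above is routine and the lemma follows directly.
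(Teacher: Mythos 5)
Your proposal is correct and follows essentially the same route as the paper's proof: bound the shared keywords by the suffix $s.\psi[p{+}1{:}]$, upper-bound each $\wt(m.w)$ by $1$ so that $\tsim(s.\psi,m.\psi)\leq\wtsum(s.\psi[p{+}1])<\tsimlb(s.\psi,m.\psi)$, and conclude via Lemma~\ref{lemma:tsimlb}. Your version merely spells out the intermediate steps (the containment $s.\psi\cap m.\psi\subseteq s.\psi[p{+}1{:}]$ and the choice of $p$ in Definition~\ref{def:loc_aware_prefix}) more explicitly than the paper does.
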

\begin{proof}
Since $\pref(s|m) \cap m.\psi = \emptyset$, $\tsim(s.\psi,m.\psi) \leq \sum_{ p+1 \leq i \leq |s.\psi| }{  \wt(s.\psi[i]) \cdot 1.0 } < \tsimlb(s.\psi,m.\psi)$, where $p$ is defined in Definition~\ref{def:loc_aware_prefix} and $1.0$ is the maximum weight for keyword in $m$.
Then the lemma holds immediately based on Lemma~\ref{lemma:tsimlb}.
\end{proof}
\begin{example}
Figure~\ref{fig:disseminate:example_spatial_prefix} shows an example of location-aware prefix, with 3 registered subscriptions and 3 incoming messages.
The underlined value to the right of each keyword corresponds to its weight, and we do not normalize the keyword weight for simplicity.
Assuming $\ssimub(s_1.\rho, m_1.\rho) = 0.98$, then $\tsimlb(s_1.\psi, m_1.\psi) = \frac{0.7}{1 - 0.6} - \frac{0.6}{1 - 0.6} \cdot 0.98 = 0.28$.
Thus, $\pref(s_1| m_1) = \{w_1, w_2, w_3\}$.
Since $m_1.\psi \cap \pref(s_1| m_1) = \emptyset$, we can skip $s_1$ w.r.t. $m_1$.
\end{example}

\begin{figure}[t]
	\centering
	\includegraphics[width=\columnwidth]{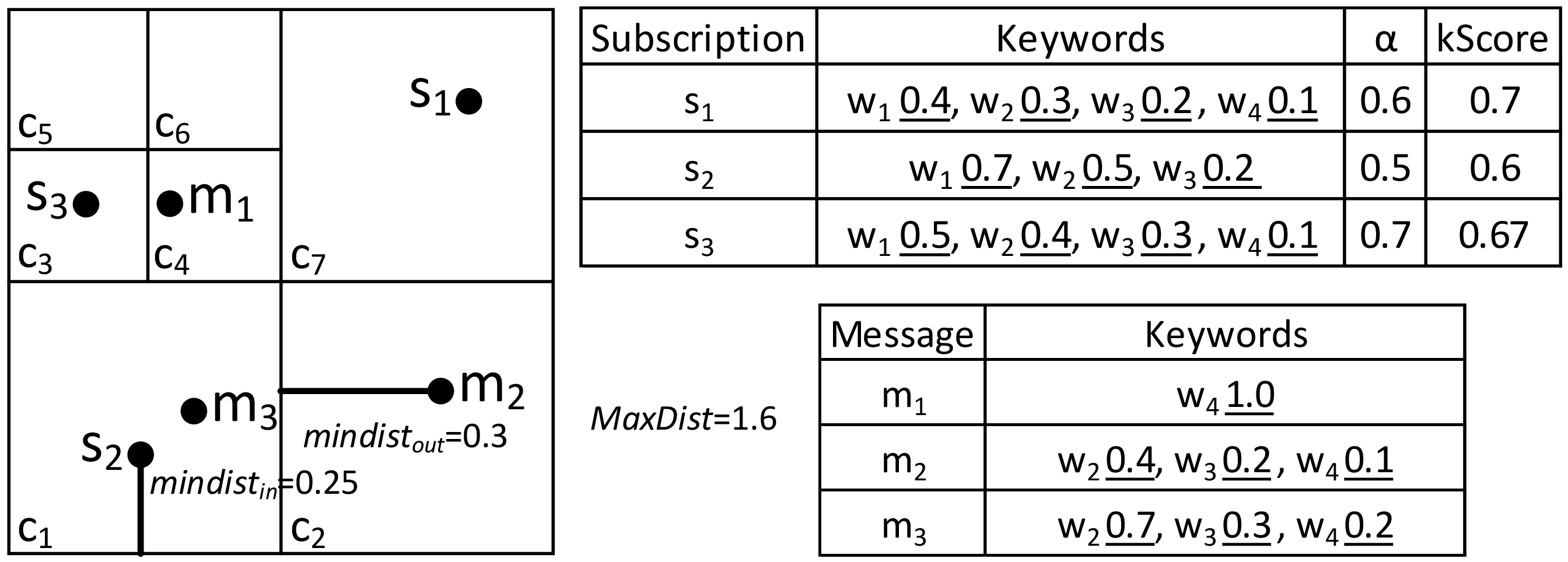}
	\caption{\small{Example of location-aware prefix}}
	\label{fig:disseminate:example_spatial_prefix}	
\end{figure}

It is noticed that different from conventional prefix technique (e.g.,~\cite{DBLP:conf/www/BayardoMS07,DBLP:journals/tods/XiaoWLYW11}) where only the prefix of a data entry needs to be indexed,
our location-aware prefix is dependent on the spatial location of messages,
and different locations may lead to different prefixes.
Thus, it is impossible to pre-compute and index the prefix of subscriptions.
To address this issue, we utilize the threshold value $\wtsum$ for each keyword in $s.\psi$ to indicate whether this keyword should occur in the prefix regarding a message $m$, which is stated formally in the following lemma.
\begin{lemma}
\label{lemma:single:prefix_filter}
Given a subscription $s$, a message $m$ and $\tsimlb(s.\psi,m.\psi)$,
if $m.\psi$ does not contain any keyword $s.w \in s.\psi$ that satisfies $\wtsum(s.w) \geq \tsimlb(s.\psi,m.\psi)$,
we can safely skip $s$ regarding m.
\end{lemma}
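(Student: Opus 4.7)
The plan is to reduce Lemma~\ref{lemma:single:prefix_filter} to Lemma~\ref{lemma:prune_s} by showing that the hypothesis of the new lemma is exactly the empty-intersection condition $\pref(s|m) \cap m.\psi = \emptyset$, just restated in terms of the precomputable quantity $\wtsum$ rather than the location-dependent prefix endpoint $p$. This is attractive because $\wtsum(s.w)$ depends only on $s$, so the resulting filter can be applied at index-scan time without materializing $\pref(s|m)$ for every incoming $m$.

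The first step is to record that $\wtsum$ is monotonically non-increasing along the keyword order of $s.\psi$: since $\wtsum(s.\psi[i]) = \sum_{i \le j \le |s.\psi|} \wt(s.\psi[j])$ and weights are non-negative, $\wtsum(s.\psi[i]) \ge \wtsum(s.\psi[i+1])$ for every $i$. The second step is to unpack the definition of $p$ in Definition~\ref{def:loc_aware_prefix}: $p$ is the smallest index for which $\wtsum(s.\psi[p+1]) < \tsimlb(s.\psi,m.\psi)$, so $\wtsum(s.\psi[p]) \ge \tsimlb(s.\psi,m.\psi)$ while $\wtsum(s.\psi[p+1]) < \tsimlb(s.\psi,m.\psi)$. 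Combining these two facts gives the characterization
\begin{equation*}
\pref(s|m) \;=\; \{\, s.w \in s.\psi \;:\; \wtsum(s.w) \ge \tsimlb(s.\psi,m.\psi) \,\},
\end{equation*}
i.e., the location-aware prefix is precisely the set of keywords whose $\wtsum$ clears the textual-similarity threshold.

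The final step is to apply this identity directly to the hypothesis of the lemma. The assumption that $m.\psi$ contains no keyword $s.w \in s.\psi$ with $\wtsum(s.w) \ge \tsimlb(s.\psi,m.\psi)$ is exactly the statement that $m.\psi \cap \pref(s|m) = \emptyset$. Invoking Lemma~\ref{lemma:prune_s} then yields the conclusion that $s$ can be safely skipped with respect to $m$.

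The main obstacle I anticipate is a minor bookkeeping one rather than a conceptual one: carefully handling the boundary cases where $\pref(s|m) = s.\psi$ (every $\wtsum$ exceeds the threshold) or $\pref(s|m) = \emptyset$ (the threshold is so large that even $\wtsum(s.\psi[1])$ falls short), and verifying that the monotonicity argument still gives the clean equivalence in these degenerate regimes. Once that is checked, the proof is a short chain of definitional rewrites plus a single appeal to Lemma~\ref{lemma:prune_s}.
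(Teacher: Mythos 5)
Your proof is correct and follows exactly the paper's route: the paper's own proof is the one-liner ``immediate from Definition~\ref{def:loc_aware_prefix} and Lemma~\ref{lemma:prune_s},'' and you have simply filled in the implicit step, namely that the monotone non-increase of $\wtsum$ along the keyword order makes $\pref(s|m)$ coincide with the set of keywords whose $\wtsum$ clears $\tsimlb(s.\psi,m.\psi)$, so the lemma's hypothesis is a restatement of $\pref(s|m)\cap m.\psi=\emptyset$. No gap; the boundary cases you flag are handled by the same monotonicity observation.
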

\begin{proof}
It is immediate from Definition~\ref{def:loc_aware_prefix} and Lemma~\ref{lemma:prune_s}.
\end{proof}

In this way, we can dynamically determine the location-aware prefix of a subscription w.r.t. an arriving message.
Also, since $\wtsum(s.w)$ is irrelevant to incoming messages, it can be materialized for each subscription.
\begin{example}
\label{example:spatial_prefix}
Following the same example in Figure~\ref{fig:disseminate:example_spatial_prefix},
since $\wtsum(s_1.w_4) = 0.1 < \tsimlb(s_1.\psi, m_1.\psi) = 0.28$, and $w_4$ is the only keyword in $m_1$, Lemma~\ref{lemma:single:prefix_filter} holds and we can skip $s_1$ w.r.t. $m_1$.
\end{example}

\noindent \textbf{Max-weight refinement.}
We notice that for a specific message $m$, we can compute a better location-aware prefix for $s$ by considering the maximum weight for the keywords in $m$.
We first define $\maxwt(m.\psi[i])$ as:
\begin{equation}
\maxwt(m.\psi[i]) = \max_{i \leq j \leq |m.\psi|}{\wt(m.\psi[j])}
\end{equation}

Then we define a \textit{refined location-aware prefix}:
\begin{definition}[Refined Location-aware Prefix]
\label{def:loc_aware_prefix_refined}
Given a subscription $s$, a message $m$ and $\tsimlb(s.\psi,m.\psi)$,
we use $\prefplus(s|m) = s.\psi[1:p]$ to denote the refined location-aware prefix of $s$ w.r.t. $m$,
where $p = \argmin_{i} \{  \maxwt(m.\psi[j]) \times \wtsum(s.\psi[i+1]) < \tsimlb(s.\psi,m.\psi) \} $ with $m.\psi[j] = s.\psi[i+1]$.
\end{definition}

The following theorem claims that $\prefplus(s|m)$ is sufficient to decide whether a message can be \topk result of a subscription.
\begin{theorem}
\label{theorem:single:prefix_filter_refined}
Given a subscription $s$ and a message $m$,
$\prefplus(s|m) \cap m.\psi = \emptyset$ is sufficient to skip $s$ regarding $m$.
\end{theorem}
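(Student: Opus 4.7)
The plan is to reduce the claim to Lemma~\ref{lemma:tsimlb} by showing that the disjointness assumption forces $\tsim(s.\psi, m.\psi) < \tsimlb(s.\psi, m.\psi)$. Suppose $\prefplus(s|m) \cap m.\psi = \emptyset$. Then every keyword in $s.\psi \cap m.\psi$ lies in the suffix $s.\psi[p+1:]$. Let $i^{*} \geq p+1$ be the smallest position in $s$ occupied by a common keyword, and let $j^{*}$ be the position of $s.\psi[i^{*}]$ in $m.\psi$, so $m.\psi[j^{*}] = s.\psi[i^{*}]$.

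Next, I would exploit the global ordering of keywords by frequency (shared by the vocabulary $\mathcal{V}$, subscription, and message). Because both $s.\psi$ and $m.\psi$ are sorted in this common order, every other common keyword $w \in s.\psi[i^{*}+1:] \cap m.\psi$ appears in $m.\psi$ at a position $\geq j^{*}$. Hence $\wt(m.w) \leq \maxwt(m.\psi[j^{*}])$ for every $w \in s.\psi \cap m.\psi$. Combined with the definition of $\tsim$, this yields
\begin{equation*}
\tsim(s.\psi, m.\psi) \;=\; \sum_{w \in s.\psi \cap m.\psi} \wt(s.w) \cdot \wt(m.w) \;\leq\; \maxwt(m.\psi[j^{*}]) \cdot \wtsum(s.\psi[i^{*}]).
\end{equation*}

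The last step is to invoke the definition of $p$ to argue that the right-hand side is strictly below $\tsimlb(s.\psi, m.\psi)$. Since $i^{*} - 1 \geq p$ and the pair $(i^{*}-1, j^{*})$ is a valid choice in Definition~\ref{def:loc_aware_prefix_refined} (because $s.\psi[i^{*}] \in m.\psi$), the fact that $p$ is the \emph{smallest} qualifying index, together with monotonicity of $\wtsum(s.\psi[\cdot])$ in its argument and of $\maxwt(m.\psi[\cdot])$ along the subsequence of common keywords, gives $\maxwt(m.\psi[j^{*}]) \cdot \wtsum(s.\psi[i^{*}]) < \tsimlb(s.\psi, m.\psi)$. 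Applying Lemma~\ref{lemma:tsimlb} then yields the conclusion.

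The hard part will be this last monotonicity argument: one must carefully check that as $i$ increases through positions with $s.\psi[i+1] \in m.\psi$, the corresponding $j$ is non-decreasing, so both $\wtsum$ and $\maxwt$ are non-increasing and the product stays below the threshold once it drops. Positions $i$ with $s.\psi[i+1] \notin m.\psi$ need to be handled as well, but they contribute nothing to $\tsim$ and therefore do not affect the upper bound. Once this monotone chain is established, the theorem follows essentially immediately from Lemma~\ref{lemma:tsimlb}, in direct analogy with the basic case Lemma~\ref{lemma:single:prefix_filter}.
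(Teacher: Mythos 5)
Your proposal is correct and follows essentially the same route as the paper's proof: bound $\tsim(s.\psi,m.\psi)$ by $\wtsum$ of the suffix of $s$ starting at the first common keyword times $\maxwt$ of the corresponding suffix of $m$, and then invoke the defining property of $p$ together with Lemma~\ref{lemma:tsimlb}. The only difference is that you spell out the monotonicity of $\wtsum$ and $\maxwt$ (and the case where $s.\psi[p+1]\notin m.\psi$) that the paper's one-line proof leaves implicit; this is a faithful elaboration rather than a different argument.
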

\begin{proof}
Since $\prefplus(s|m) \cap m.\psi = \emptyset$, $\tsim(s.\psi,m.\psi) \leq 0 + \wtsum(s.\psi[p+1]) \cdot \maxwt(m.\psi[j]) < \tsimlb(s.\psi,m.\psi)$ holds, where $p$ is defined in Definition~\ref{def:loc_aware_prefix_refined} and $m.\psi[j] = s.\psi[p+1]$.
Then the theorem holds immediately based on Lemma~\ref{lemma:tsimlb}.
\end{proof}
\begin{example}
Assuming $\ssimub(s_1.\rho,m_2.\rho) = 0.99$ in Figure~\ref{fig:disseminate:example_spatial_prefix}, then $\tsimlb(s_1.\psi,m_2.\psi) = 0.26$.
Based on Lemma~\ref{lemma:single:prefix_filter}, $\pref(s_1|m_2) = \{w_1,w_2,w_3\}$, and thus $s_1$ cannot be skipped w.r.t. $m_2$.
However, if we consider \maxwt, then $\prefplus(s_1|m_2) = \{w_1\}$.
Thus, $s_1$ can be skipped w.r.t. $m_2$.
\end{example}

\subsubsection{Spatial Bound Estimation}
\label{sec:disseminate:ipt:spatial_bound}
In this section, we discuss the computation of $\ssimub(s.\rho,m.\rho)$ between a subscription $s$ and a message $m$ in order to get a better threshold $\tsimlb(s.\psi,m.\psi)$ for efficient location-aware prefix filtering.
To this end, we employ a spatial index to group subscriptions with similar locations, such that the spatial upper bound for a group of subscriptions can be computed simultaneously.
Due to the easy implementation and well-adaptiveness to skewed spatial distributions, we choose \qd to index subscriptions.
Specifically, each subscription $s$ is assigned into a leaf cell $c$ with range $c.r$ based on its location $s.\rho$.
Then the following two types of spatial bounds can be defined and utilized.

\begin{definition}[Inner Spatial Bound]
\label{def:inner_spatial_bound}
Given a subscription $s$ and its residing cell $c$,
inner spatial bound, denoted as $\ssimub_{in}(s.\rho,c.r)$ is computed as $1.0 - \frac{mindist(s.\rho,c.r)}{MaxDist}$, where $mindist(s.\rho,c.r)$ is the $mindist$ from $s$ to the nearest boundary of cell range $c.r$.
\end{definition}

It is obvious that for any $s \in c$ and $m \notin c$, we have $\ssimub_{in}(s.\rho,c.r) \geq \ssim(s.\rho,m.\rho)$.
An example is shown in Figure~\ref{fig:disseminate:example_spatial_prefix}.
Since the $mindist$ from $s_2$ to $c_1$ is $0.25$, $\ssimub_{in}(s_2.\rho,c_1.r) = 1 - \frac{0.25}{1.6} = 0.84$ if we assume the $\textsf{MaxDist}$ in the space is $1.6$.

\begin{definition}[Outer Spatial Bound]
\label{def:outer_spatial_bound}
Given a message $m$ and an outer cell $c$,
outer spatial bound, denoted as $\ssimub_{out}(m.\rho,c.r)$ is computed as $1.0 - \frac{mindist(m.\rho,c.r)}{MaxDist}$, where $mindist(m.\rho,c.r)$ is the $mindist$ from $m$ to $c$.
\end{definition}

For any $s \in c$ and $m \notin c$, we have $\ssimub_{out}(m.\rho,c.r) \geq \ssim(s.\rho,m.\rho)$.
An example is also shown in Figure~\ref{fig:disseminate:example_spatial_prefix}.
The $mindist$ from $m_2$ to $c_1$ is $0.3$, and thus $\ssimub_{out}(m_2.\rho,c_1.r) = 1 - \frac{0.3}{1.6} = 0.81$.

\begin{definition}[Spatial Upper Bound]
\label{def:spatial_upper_bound}
Given a subscription $s$ inside a cell $c$ and a message $m$ outside $c$,
the spatial upper bound, denoted as $\ssimub(s.\rho,m.\rho)$ is computed as $1.0 - \frac{ mindist(s.\rho,c.r) + mindist(m.\rho,c.r)}{MaxDist}$, where $mindist(s.\rho,c.r)$ and $mindist(m.\rho,c.r)$ are the same as Definition~\ref{def:inner_spatial_bound} and~\ref{def:outer_spatial_bound}.
\end{definition}

Following the example in Figure~\ref{fig:disseminate:example_spatial_prefix}, by combining both inner and outer distance, we can get a tighter spatial upper bound between $s_2$ and $m_2$ as $\ssimub(s_2.\rho,m_2.\rho) = 1 - \frac{0.25 + 0.3}{1.6} = 0.65$.

Note that the inner spatial bound can be precomputed and materialized, while the outer spatial bound has to be computed on-the-fly as it is relevant to the location of an arriving message.
However, the computation cost of $\ssimub_{out}(m.\rho,c.r)$ is not expensive since we only need to compute this value against each leaf cell.
Finally, we remark that when $s$ and $m$ are within the same cell, both $\ssimub_{in}(s.\rho,c.r)$ and $\ssimub_{out}(m.\rho,c.r)$ are always $1.0$.
\begin{example}
An example is shown in Figure~\ref{fig:disseminate:example_spatial_prefix}.
If we assume the $\ssimub(s_2.\rho,m_2.\rho) = 1.0$, we have $\tsimlb(s_2,m_2) = \frac{0.6}{1-0.5}-\frac{0.5}{1-0.5} \cdot 1.0 = 0.20$, and $\prefplus(s_2|m_2) = \{w_1, w_2\}$.
Thus, $s_2$ cannot be skipped w.r.t. $m_2$.
However, if we utilize the inner spatial bound and outer spatial bound together, we have $\ssimub(s_2.\rho,m_2.\rho) = 1 - \frac{0.25+0.3}{1.6} = 0.65$, $\tsimlb(s_2,m_2) = 0.55$, and $\prefplus(s_2|m_2) = \{w_1\}$.
In this case, we can safely skip $s_2$ w.r.t. $m_2$.
\end{example}

\subsubsection{Bound Estimation for Unseen Keywords}
Since we employ TAAT paradigm to visit \ifile, we can estimate a textual upper bound for unseen keywords.
If this upper bound plus the textual similarity that has already been computed is still less than the required threshold, we can safely skip $s$.
The textual upper bound between the unseen keywords of $s$ and $m$ can be computed as follows:
\begin{align}
& \tsimub(s.\psi[i:],m.\psi[j:]) =  min \Big\{ \wtsum(s.\psi[i]) \nonumber \\
& \times \maxwt(m.\psi[j]), \wtsum(m.\psi[j]) \times \maxwt(s.\psi[i]) \Big\}
\end{align}
where $i$ and $j$ are starting positions of unseen keywords.
Then the following theorem claims we can skip a subscription by utilizing the textual upper bound.
\begin{theorem}
\label{theorem:single:positional_filter}
Given a subscription $s$, a message $m$ and their textual similarity threshold $\tsimlb(s.\psi,m.\psi)$,
assuming we have already computed the partial similarity between $s.\psi[1:i]$ and $m.\psi[1:j]$, denoted as $\tsim(s.\psi[1:i],m.\psi[1:j])$,
then $\tsim(s.\psi[1:i],m.\psi[1:j]) + \tsimub(s.\psi[i+1:],m.\psi[j+1:]) < \tsimlb(s.\psi,m.\psi)$ is sufficient to skip $s$.
\end{theorem}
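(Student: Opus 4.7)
The plan is to reduce the theorem to Lemma~\ref{lemma:tsimlb} by establishing an upper bound on the full textual similarity $\tsim(s.\psi,m.\psi)$ and then invoking that lemma. I would proceed in three steps.

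First, I would decompose the full similarity into the already-accumulated portion and the unseen portion as
\begin{equation*}
\tsim(s.\psi, m.\psi) = \tsim(s.\psi[1:i], m.\psi[1:j]) + \tsim(s.\psi[i+1:], m.\psi[j+1:]).
\end{equation*}
Since keywords in every subscription and message are arranged according to the globally fixed frequency order of $\mathcal{V}$, the TAAT traversal advances along that global order, so the cutoffs at positions $i$ and $j$ correspond to the same location in the global ranking. Any keyword in $s.\psi[i+1:]$ therefore sits strictly after every keyword in $m.\psi[1:j]$ in the global order, ruling out cross contributions and validating the decomposition.

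Second, I would upper-bound the residual $\tsim(s.\psi[i+1:],m.\psi[j+1:])$. Writing it as the sum of $\wt(s.w)\cdot\wt(m.w)$ over the remaining shared keywords and pulling out $\maxwt(m.\psi[j+1])$ as a uniform bound on every $m$-factor gives
\begin{equation*}
\tsim(s.\psi[i+1:], m.\psi[j+1:]) \leq \maxwt(m.\psi[j+1]) \cdot \wtsum(s.\psi[i+1]),
\end{equation*}
and a symmetric argument factoring out $\maxwt(s.\psi[i+1])$ instead yields the dual bound. Hence the residual is at most the minimum of the two expressions, which matches $\tsimub(s.\psi[i+1:], m.\psi[j+1:])$ by definition.

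Third, combining the decomposition with the residual bound and then invoking the hypothesis gives $\tsim(s.\psi,m.\psi) < \tsimlb(s.\psi,m.\psi)$, at which point Lemma~\ref{lemma:tsimlb} implies that $m$ cannot affect the \topk results of $s$, so $s$ may be safely skipped. The main obstacle I anticipate is justifying the decomposition cleanly: it is tempting to treat the TAAT partial sum as a trivial restriction to the first few positions, but the argument actually relies on a subtle invariant about how the two pointers progress in lock-step along the globally ordered vocabulary. Once that invariant is stated explicitly, the remainder of the proof is a direct algebraic chain using the definitions of $\wtsum$, $\maxwt$, and $\tsimub$, with no further surprises.
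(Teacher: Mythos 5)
Your proof is correct and follows essentially the same route as the paper's: bound the full similarity by the accumulated partial similarity plus $\tsimub$ on the unseen suffixes, then invoke Lemma~\ref{lemma:tsimlb}. The only difference is that you explicitly justify why the cross terms between seen and unseen keyword positions vanish (via the shared global frequency ordering), a point the paper's one-line proof silently assumes when it writes the inequality $\tsim(s.\psi,m.\psi) \leq \tsim(s.\psi[1:i],m.\psi[1:j]) + \tsimub(s.\psi[i+1:],m.\psi[j+1:])$.
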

\begin{proof}
As $\tsim(s.\psi,m.\psi) \leq \tsim(s.\psi[1:i],m.\psi[1:j]) + \tsimub(s.\psi[i+1:],m.\psi[j+1:])$, we have $\tsim(s.\psi,m.\psi) < \tsimlb(s.\psi,m.\psi)$.
The theorem holds immediately from Lemma~\ref{lemma:tsimlb}.
\end{proof}
\begin{example}
In Figure~\ref{fig:disseminate:example_spatial_prefix},
consider that we are currently disseminating $m_3$.
Based on the dissemination algorithm to be discussed later in Section~\ref{sec:disseminate:algorithm},
we need to traverse the inverted lists in cell $c_3$ (where $s_3$ resides) for all the keywords in $m_3.\psi$ one by one.
We first check the inverted list of $w_2$ since $w_2$ is the 1st keyword of $m_3$.
Assuming $\tsimlb(s_3.\psi,m_3.\psi) = 0.48$, we cannot skip $s_3$ since $w_2 \in \prefplus(s_3|m_3) = \{w_1, w_2\}$.
However, since $w_2$ is the 2nd keyword in $s_3.\psi$, we can compute $\tsim(s_3.\psi[1:2],m_3.\psi[1:1]) = 0.4 \cdot 0.7 = 0.28$,
and $\tsimub(s_3.\psi[3:],m_3.\psi[2:]) = \min(0.4 \cdot 0.3, 0.5 \cdot 0.3) = 0.12$.
Because $0.28 + 0.12 < \tsimlb(s_3.\psi,m_3.\psi) = 0.48$, we can immediately skip $s_3$.
\end{example}

\subsection{Group Pruning Technique}
\label{sec:disseminate:gpt}

After applying individual pruning technique, many subscriptions can be skipped without the need to compute their exact similarity w.r.t. a message.
To further enhance the performance, we propose a novel \textit{Group Pruning Technique} such that we can skip a group of subscriptions without the need to visit them individually.
To begin with, we first define \textit{subscription-dependent prefix} for a message.
\begin{definition}[Refined Sub-dependent Prefix]
\label{def:refined_sub_aware_prefix}
Given a message $m$, a subscription $s$ and $\tsimlb(s.\psi,m.\psi)$,
we use $\prefplus(m|s) = m.\psi[1:p]$ to denote the refined subscription-dependent prefix of $m$ w.r.t. $s$,
where $p = \argmin_{i} \{ \maxwt(s.\psi[j]) \times \wtsum(m.\psi[i+1]) $
$< \tsimlb(s.\psi,m.\psi) \} $ with $s.\psi[j] = m.\psi[i+1]$.
\end{definition}
The following lemma claims the refined subscription-dependent prefix is sufficient to determine whether a message could be \topk result of a subscription.
\begin{lemma}
\label{lemma:group:prune_s}
Given a subscription $s$ and a message $m$, $\prefplus(m|s) \cap s.\psi = \emptyset$ is sufficient to skip $s$ regarding $m$.
\end{lemma}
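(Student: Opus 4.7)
The plan is to mirror the proof of Theorem~\ref{theorem:single:prefix_filter_refined}, exploiting the symmetry between the roles of $s$ and $m$ in the score function. The only difference is that the prefix now lives on the message side rather than the subscription side, so the weight-sum and max-weight factors swap places.

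First, I would observe that under the hypothesis $\prefplus(m|s) \cap s.\psi = \emptyset$, every keyword shared by $s$ and $m$ must lie in the suffix $m.\psi[p+1:]$, where $p$ is the cutoff from Definition~\ref{def:refined_sub_aware_prefix}. Restricting the cosine sum to these terms, the $m$-side factors contribute at most $\wtsum(m.\psi[p+1])$ in aggregate. For the $s$-side factors, I would use the fact that keywords are globally sorted in increasing order of term frequency: any keyword occurring at position $\geq p+1$ in $m.\psi$ must, when present in $s.\psi$, occur at position $\geq j$, where $j$ is defined by $s.\psi[j]=m.\psi[p+1]$. Hence each such $\wt(s.w)$ is bounded by $\maxwt(s.\psi[j])$. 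Combining the two bounds gives
\[
\tsim(s.\psi,m.\psi) \;\leq\; \maxwt(s.\psi[j]) \cdot \wtsum(m.\psi[p+1]).
\]

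Second, I would invoke the choice of $p$ in Definition~\ref{def:refined_sub_aware_prefix}, which guarantees that the right-hand side is strictly less than $\tsimlb(s.\psi,m.\psi)$. An appeal to Lemma~\ref{lemma:tsimlb} then concludes that $m$ cannot displace any entry in the top-$k$ list of $s$, so $s$ can be safely skipped.

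The only nontrivial step is the second bound: justifying why global frequency ordering forces any shared keyword appearing in $m.\psi[p+1:]$ to also reside in $s.\psi[j:]$. This is essentially the same argument implicit in the proof of Theorem~\ref{theorem:single:prefix_filter_refined}, merely relabelled, so I expect no real obstacle once it is spelled out. The rest of the derivation is a direct symmetric rewrite of the earlier theorem and should be stated in a single short display.
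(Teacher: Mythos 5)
Your proposal is correct and follows essentially the same route as the paper: bound the shared-keyword contribution by $\wtsum(m.\psi[p+1]) \cdot \maxwt(s.\psi[j])$ with $s.\psi[j]=m.\psi[p+1]$, use the choice of $p$ in Definition~\ref{def:refined_sub_aware_prefix} to get strict inequality against $\tsimlb(s.\psi,m.\psi)$, and conclude via Lemma~\ref{lemma:tsimlb}. You merely spell out the global-frequency-ordering justification that the paper leaves implicit, which is a welcome addition but not a different argument.
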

\begin{proof}
Since $\prefplus(m|s) \cap s.\psi = \emptyset$, $\tsim(s.\psi,m.\psi) \leq 0 + \wtsum(m.\psi[p+1]) \cdot \maxwt(s.\psi[j]) < \tsimlb(s.\psi,m.\psi)$, where $p$ is defined in Definition~\ref{def:refined_sub_aware_prefix} and $s.\psi[j] = m.\psi[p+1]$.
Then the lemma holds immediately based on Lemma~\ref{lemma:tsimlb}.
\end{proof}

Let us denote the posting list of keyword $w$ in cell $c$ as $plist(c,w)$, which contains all the subscriptions having $w$ and residing in $c$.
Then based on Lemma~\ref{lemma:group:prune_s},
for a subscription $s$ in $plist(c,w)$,
if $s.w \notin \prefplus(m|s)$, we can safely skip $s$.
Further, if this holds for a group of subscriptions on $plist(c,w)$, we can safely skip the whole group as follows.
\begin{lemma}
\label{lemma:group:skip}
Given a message $m$, a keyword $w \in m.\psi$, a posting list $plist(c,w)$ and a group of subscriptions $\mathcal{G}$ inside $plist(c,w)$,
$\max_{s \in \mathcal{G}}{ \left\{ \maxwt(s.w) \right\} \cdot \wtsum(m.w)  } < \min_{s \in \mathcal{G}}{ \left\{ \tsimlb(s.\psi,m.\psi) \right\} }$ is sufficient to skip the whole group $\mathcal{G}$.
\end{lemma}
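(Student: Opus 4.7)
The plan is to reduce the group claim to a per-subscription bound on textual similarity and then invoke Lemma~\ref{lemma:tsimlb}. First, I would unpack the $\max$ and $\min$ in the hypothesis: for every $s \in \mathcal{G}$,
\[
\maxwt(s.w) \cdot \wtsum(m.w) \leq \max_{s' \in \mathcal{G}}\{\maxwt(s'.w)\} \cdot \wtsum(m.w) < \min_{s' \in \mathcal{G}}\{\tsimlb(s'.\psi, m.\psi)\} \leq \tsimlb(s.\psi, m.\psi).
\]

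Second, I would translate this per-subscription inequality into a bound on $\tsim(s.\psi, m.\psi)$. Since $s \in plist(c,w)$ I can write $w = s.\psi[j]$, and since $w \in m.\psi$ I can write $w = m.\psi[i+1]$. Under the keyword-sort convention shared by $s.\psi$ and $m.\psi$ (increasing vocabulary frequency), every common keyword of $s$ and $m$ whose vocabulary index is at least that of $w$ sits at position $\geq j$ in $s.\psi$ and $\geq i+1$ in $m.\psi$, so its weight in $s$ is bounded by $\maxwt(s.w)$. Summing such contributions gives
\[
\sum_{w' \in s.\psi \cap m.\psi[i+1:]} \wt(s.w') \cdot \wt(m.w') \leq \maxwt(s.w) \cdot \wtsum(m.w) < \tsimlb(s.\psi, m.\psi).
\]
Under the algorithmic invariant that group pruning fires only when the members of $\mathcal{G}$ have not yet matched any keyword of $m.\psi[1:i]$---a natural condition when posting lists are visited in the sorted keyword order of $m$---the full $\tsim(s.\psi, m.\psi)$ collapses to this left-hand side, so $\tsim(s.\psi, m.\psi) < \tsimlb(s.\psi, m.\psi)$ and Lemma~\ref{lemma:tsimlb} yields that $m$ cannot affect the \topk results of $s$. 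The argument is uniform over $\mathcal{G}$, so the entire group is skipped in one shot.

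The main obstacle will be justifying the invariant used in the second step. If the dissemination algorithm does not guarantee that members of $\mathcal{G}$ are untouched by $m.\psi[1:i]$ at the moment of the group check, the proof must instead be routed through Lemma~\ref{lemma:group:prune_s}, which asks for $\prefplus(m|s) \cap s.\psi = \emptyset$. The $\argmin$ clause of Definition~\ref{def:refined_sub_aware_prefix} directly forces only $w \notin \prefplus(m|s)$, and an auxiliary argument---leveraging the monotonicity of $\wtsum(m.\psi[\cdot])$ along the sorted keyword sequence and the corresponding ordering of weights in $s.\psi$---is needed to rule out earlier common keywords inside the prefix. This is the piece of the proof that will require the most care.
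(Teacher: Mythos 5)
Your first step is exactly the paper's: unpacking the $\max$/$\min$ gives, for every $s \in \mathcal{G}$, the per-subscription inequality $\maxwt(s.w) \cdot \wtsum(m.w) < \tsimlb(s.\psi,m.\psi)$, and by the $\argmin$ clause of Definition~\ref{def:refined_sub_aware_prefix} this says precisely that $s.w \notin \prefplus(m|s)$. The paper then concludes in one line via Lemma~\ref{lemma:group:prune_s}. Where you diverge, both of your routes leave a genuine gap. Your primary route needs the invariant that no member of $\mathcal{G}$ shares any keyword of $m.\psi[1:i]$ at the moment the group check fires; nothing in the algorithm guarantees this --- a subscription in $plist(c,w)$ may perfectly well contain an earlier keyword of $m$ and have been touched in an earlier posting list --- so the collapse of $\tsim(s.\psi,m.\psi)$ to the suffix sum is unjustified, and indeed the claim ``every $s\in\mathcal{G}$ satisfies $\tsim(s.\psi,m.\psi)<\tsimlb(s.\psi,m.\psi)$'' is false in general.

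Your fallback route is the paper's route, but the ``auxiliary argument'' you ask for --- deducing $\prefplus(m|s) \cap s.\psi = \emptyset$ from the hypothesis --- does not exist and is not what is needed: the group-level condition cannot rule out $s$ sharing, say, $m.\psi[1]$. The missing idea is the candidate-generation semantics of prefix filtering that the paper states just before the lemma: a subscription only ever needs to be generated as a candidate from the posting lists of keywords in $\prefplus(m|s)$, because a subscription sharing \emph{no} keyword of $\prefplus(m|s)$ is safe to drop entirely by Lemma~\ref{lemma:group:prune_s}, while one sharing \emph{some} prefix keyword is generated when that keyword's posting list is traversed. Hence showing $w \notin \prefplus(m|s)$ for every $s \in \mathcal{G}$ is already sufficient to skip the whole group \emph{in the traversal of} $plist(c,w)$, which is what the lemma asserts; no bound on the full $\tsim(s.\psi,m.\psi)$ and no emptiness of the prefix intersection is required. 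Without this reinterpretation of ``skip,'' neither of your routes closes.
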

\begin{proof}
For each $s \in \mathcal{G}$, $\maxwt(s.w) \cdot \wtsum(m.w) < \tsimlb(s.\psi,m.\psi)$ holds which indicates $s.w \notin \prefplus(m|s)$ according to Definition~\ref{def:refined_sub_aware_prefix}.
Thus, $s$ can be skipped based on Lemma~\ref{lemma:group:prune_s}, and therefore $\mathcal{G}$ can be skipped immediately.
\end{proof}

The left side of the inequality in Lemma~\ref{lemma:group:skip} can be computed in $O(1)$ time since we can materialize $\max_{s \in \mathcal{G}}{ \left\{ \maxwt(s.w) \right\} }$ for each group.
However, for the right side, it would be quite inefficient if we compute it on the fly for each new message.
To avoid this, we propose a lower bound for $ \min_{s \in \mathcal{G}}{ \left\{ \tsimlb(s.\psi,m.\psi) \right\} }$ which can be computed in constant time.
In the following, we first present the subscription grouping strategy and then introduce the details of the lower bound deduction.

\subsubsection{$\alpha$-Partition Scheme}
\label{sec:disseminate:gpt:alpha_partition}
Intuitively, we should group subscriptions with similar $\tsimlb(s.\psi,m.\psi)$ such that we can get a tighter textual threshold for the group.
We first let $\ssimub(s.\rho,m.\rho) = \ssimub_{out}(m.\rho, c.r)$.
Note that we compute $\ssimub(s.\rho,m.\rho)$ by utilizing $\ssimub_{out}(m.\rho, c.r)$ only in order to make $\ssimub(s.\rho,m.\rho)$ independent of a specific subscription.
Therefore, it is observed from Equation~\ref{equ:tsimlb} that, for the computation of $\tsimlb(s.\psi,m.\psi)$, only $\frac{\kscore(s)}{1-s.\alpha}$ and $\frac{s.\alpha}{1-s.\alpha}$ are dependent on $s$ while $\ssimub(s.\rho,m.\rho)$ is irrelevant to $s$.
For simplicity, we denote $\frac{\kscore(s)}{1-s.\alpha}$ as $\kscore^{*}(s)$ and $\frac{s.\alpha}{1-s.\alpha}$ as $s.\alpha^{*}$ respectively.
Then, we partition subscriptions into groups based on their $\alpha^{*}$ values, such that the subscriptions inside a group have similar $\alpha^{*}$ values.
We employ a quantile-based method to partition the domain of $\alpha^{*}$ to ensure that each group has similar number of subscriptions.
Then, we can skip the whole group $\mathcal{G}$ as stated in the following theorem.
\begin{theorem}
\label{theorem:group:lower_bound}
Given a group $\mathcal{G}$ generated by $\alpha$-partition in a posting list $plist(c,w)$, we denote $\min_{s \in \mathcal{G}}{ \left\{ \kscore^{*}(s) \right\} }$ as $\kscore^{*}(\mathcal{G})$ and $\max_{s \in \mathcal{G}}{ \left\{ s.\alpha^{*} \right\} }$ as $\mathcal{G}.\alpha^{*}$.
then $\max_{s \in \mathcal{G}}{ \left\{ \maxwt(s.w) \right\} \cdot \wtsum(m.w) } <  \kscore^{*}(\mathcal{G}) - \mathcal{G}.\alpha^{*} \cdot \ssimub_{out}(m.\rho, c.r)$ is sufficient to skip the whole group $\mathcal{G}$.
\end{theorem}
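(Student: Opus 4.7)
The plan is to derive the theorem as a direct consequence of Lemma~\ref{lemma:group:skip}, by showing that the right-hand side appearing in the theorem is a valid lower bound for the quantity $\min_{s \in \mathcal{G}}\{\tsimlb(s.\psi,m.\psi)\}$ that appears on the right-hand side of the lemma's sufficient condition.

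First I would unfold the definition of $\tsimlb$. Recall from the $\alpha$-partition setup in Section~\ref{sec:disseminate:gpt:alpha_partition} that for each $s \in \mathcal{G}$ the quantity $\ssimub(s.\rho,m.\rho)$ in Equation~\ref{equ:tsimlb} is taken to be $\ssimub_{out}(m.\rho,c.r)$, which is independent of the particular $s$. With the shorthands $\kscore^{*}(s) = \frac{\kscore(s)}{1-s.\alpha}$ and $s.\alpha^{*} = \frac{s.\alpha}{1-s.\alpha}$, this gives exactly
\begin{equation*}
\tsimlb(s.\psi,m.\psi) \;=\; \kscore^{*}(s) - s.\alpha^{*} \cdot \ssimub_{out}(m.\rho,c.r).
\end{equation*}

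Next I would bound the minimum over $\mathcal{G}$ from below. Since $\ssimub_{out}(m.\rho,c.r) \geq 0$ and the minimum of a difference is at least the difference of the relevant extrema, we have
\begin{equation*}
\min_{s \in \mathcal{G}}\bigl\{\tsimlb(s.\psi,m.\psi)\bigr\} \;\geq\; \min_{s \in \mathcal{G}}\{\kscore^{*}(s)\} - \max_{s \in \mathcal{G}}\{s.\alpha^{*}\} \cdot \ssimub_{out}(m.\rho,c.r),
\end{equation*}
which by definition equals $\kscore^{*}(\mathcal{G}) - \mathcal{G}.\alpha^{*} \cdot \ssimub_{out}(m.\rho,c.r)$. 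Then, if the assumed inequality in the theorem holds, we get $\max_{s \in \mathcal{G}}\{\maxwt(s.w)\}\cdot \wtsum(m.w) < \min_{s \in \mathcal{G}}\{\tsimlb(s.\psi,m.\psi)\}$, so Lemma~\ref{lemma:group:skip} applies and $\mathcal{G}$ can be safely skipped.

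There is no real obstacle in this proof; it is essentially a substitution followed by a min/max inequality. The only subtlety worth flagging explicitly is that the step $\min(a-b) \geq \min a - \max b$ requires the coefficient of $s.\alpha^{*}$, namely $\ssimub_{out}(m.\rho,c.r)$, to be nonnegative, which is immediate from Definition~\ref{def:outer_spatial_bound} since spatial similarities are in $[0,1]$. This is also the one place where the relaxation loses tightness, motivating the quantile-based $\alpha$-partition: grouping subscriptions with similar $\alpha^{*}$ keeps $\mathcal{G}.\alpha^{*}$ close to the actual $s.\alpha^{*}$ values inside $\mathcal{G}$, so the lower bound stays close to $\min_{s}\tsimlb$ and the theorem's skip condition remains effective in practice.
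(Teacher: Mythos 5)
Your proof is correct and follows essentially the same route as the paper: substitute $\ssimub_{out}(m.\rho,c.r)$ into Equation~\ref{equ:tsimlb}, observe that $\kscore^{*}(\mathcal{G}) - \mathcal{G}.\alpha^{*}\cdot\ssimub_{out}(m.\rho,c.r)$ lower-bounds $\tsimlb(s.\psi,m.\psi)$ for every $s\in\mathcal{G}$ (equivalently, lower-bounds the minimum), and invoke Lemma~\ref{lemma:group:skip}. Your explicit remark that the step needs $\ssimub_{out}(m.\rho,c.r)\geq 0$ is a point the paper leaves implicit, and it does hold by Definition~\ref{def:outer_spatial_bound}.
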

\begin{proof}
It is obvious that for any subscription $s$ in $\mathcal{G}$, we have
$ \max_{s' \in \mathcal{G}}{ \left\{ \maxwt(s'.w) \right\} \cdot \wtsum(m.w) } < \kscore^{*}(\mathcal{G}) - \mathcal{G}.\alpha^{*} \cdot \ssimub(c.r,m.\rho)  \leq
\kscore^{*}(s) - s.\alpha^{*} \cdot \ssimub(c.r,m.\rho)
= \tsimlb(s.\psi,m.\psi)$.
Thus, we have $\max_{s' \in \mathcal{G}}{ \left\{ \maxwt(s'.w) \right\} \cdot \wtsum(m.w) } < \min_{s' \in \mathcal{G}}{ \left\{ \tsimlb(s'.\psi,m.\psi) \right\} }$.
Combined with Lemma~\ref{lemma:group:skip}, the theorem holds immediately.
\end{proof}

\noindent \textbf{Time complexity.} The condition checking in Theorem~\ref{theorem:group:lower_bound} takes $O(1)$ time, since we can precompute the values of $\kscore^{*}(\mathcal{G})$ and $\mathcal{G}.\alpha^{*}$.

\vspace{-2mm}
\subsubsection{Early Termination Within Group}
When a group $\mathcal{G}$ cannot be skipped given a message, we have to check each subscription in it.
To avoid this, we propose an \textit{early termination technique} to early stop within a group when the group cannot be skipped totally.
To enable early termination,
for each group $\mathcal{G}$ in $plist(c,w)$,
we sort the subscriptions in $\mathcal{G}$ by their $\kscore^{*}$ values increasingly.
For each subscription $s$ in $\mathcal{G}$, we denote the subscriptions with $\kscore^{*}$ not less than $\kscore^{*}(s)$ as $\mathcal{G}[s] = \left\{ s' \in \mathcal{G} | \kscore^{*}(s') \geq \kscore^{*}(s) \right\}$,
and maintain two statistics $\maxwt(\mathcal{G}[s])$ and $\mathcal{G}[s].\alpha^{*}$ w.r.t. keyword $w$ as follows:
\begin{equation}
\maxwt(\mathcal{G}[s]) = \max_{s' \in \mathcal{G}[s] }{ \left\{ \maxwt(s'.w) \right\} }
\end{equation}
\vspace{-4mm}
\begin{equation}
\mathcal{G}[s].\alpha^{*} = \max_{s' \in \mathcal{G}[s] }{ \left\{ s'.\alpha^{*} \right\} }
\end{equation}
Then we can employ early termination as follows.
\begin{theorem}
\label{theorem:group:early_terminate}
Given a group $\mathcal{G}$ inside a posting list $plist(c,w)$,
and assuming $\hat{s}$ is the subscription with smallest position in $\mathcal{G}$ such that the following inequality holds:
$\maxwt(\mathcal{G}[\hat{s}]) \cdot \wtsum(m.w) <
\kscore^{*}(\hat{s}) - \mathcal{G}[\hat{s}].\alpha^{*} \cdot \ssimub_{out}(m.\rho,c.r)$,
then there is no need to check the subscriptions after $\hat{s}$ (including $\hat{s}$ itself).
\end{theorem}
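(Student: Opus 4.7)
The plan is to recognize that this theorem is essentially a corollary of Theorem~\ref{theorem:group:lower_bound} applied not to the entire group $\mathcal{G}$ but to its suffix $\mathcal{G}[\hat{s}]$. Since the subscriptions in $\mathcal{G}$ are sorted by $\kscore^{*}$ in increasing order, the set $\mathcal{G}[\hat{s}] = \{ s' \in \mathcal{G} \mid \kscore^{*}(s') \geq \kscore^{*}(\hat{s}) \}$ coincides exactly with the contiguous suffix starting at $\hat{s}$, i.e., $\hat{s}$ and every subscription listed after it. So proving the theorem reduces to showing that $\mathcal{G}[\hat{s}]$ satisfies the group-skip condition of Theorem~\ref{theorem:group:lower_bound}.

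First I would identify the three per-suffix quantities that play the role of $\kscore^{*}(\mathcal{G})$, $\mathcal{G}.\alpha^{*}$ and the max-weight aggregate in Theorem~\ref{theorem:group:lower_bound}. By the sorted order, $\min_{s' \in \mathcal{G}[\hat{s}]} \kscore^{*}(s') = \kscore^{*}(\hat{s})$; and by the definitions of the precomputed statistics, $\max_{s' \in \mathcal{G}[\hat{s}]} s'.\alpha^{*} = \mathcal{G}[\hat{s}].\alpha^{*}$ and $\max_{s' \in \mathcal{G}[\hat{s}]} \maxwt(s'.w) = \maxwt(\mathcal{G}[\hat{s}])$. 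Substituting these three identities into the hypothesis of Theorem~\ref{theorem:group:lower_bound} with $\mathcal{G}$ replaced by $\mathcal{G}[\hat{s}]$ yields exactly the inequality assumed in the statement of the current theorem.

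The second step is to invoke Theorem~\ref{theorem:group:lower_bound} on $\mathcal{G}[\hat{s}]$ to conclude that the entire suffix $\mathcal{G}[\hat{s}]$ can be skipped, which is precisely the claim that no subscription at or after the position of $\hat{s}$ needs to be checked.

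I do not expect any real obstacle here: the argument is a direct monotonicity/aggregation observation. The only subtlety worth stating carefully is why the suffix-indexed aggregates $\maxwt(\mathcal{G}[s])$ and $\mathcal{G}[s].\alpha^{*}$ match the suprema required by Theorem~\ref{theorem:group:lower_bound}, which follows immediately from their definitions, and why $\kscore^{*}(\hat{s})$ is the minimum $\kscore^{*}$ over $\mathcal{G}[\hat{s}]$, which follows from the sort order. No further computation on prefixes, spatial bounds, or cosine similarity is needed because all of that was already absorbed into Theorem~\ref{theorem:group:lower_bound}.
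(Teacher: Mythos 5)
Your proof is correct and is essentially the paper's own argument in modular form: the paper's proof simply unfolds, for each $s'$ after $\hat{s}$, the same inequality chain you obtain by applying the group-skip condition to the suffix set $\mathcal{G}[\hat{s}]$, using the same three monotonicity/aggregation identities for $\kscore^{*}$, $\maxwt(\mathcal{G}[\hat{s}])$ and $\mathcal{G}[\hat{s}].\alpha^{*}$. Two cosmetic points: since $\mathcal{G}[\hat{s}]$ is not itself an $\alpha$-partition group you should formally invoke Lemma~\ref{lemma:group:skip} rather than Theorem~\ref{theorem:group:lower_bound} (whose proof never uses the $\alpha$-partition property anyway), and when there are ties in $\kscore^{*}$ the set $\mathcal{G}[\hat{s}]$ may strictly contain the contiguous suffix, which only makes your conclusion stronger than what the theorem requires.
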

\begin{proof}
For any subscription $s'$ after $\hat{s}$, the following inequalities hold:
$\kscore^{*}(\hat{s}) \leq \kscore^{*}(s')$,
$\maxwt(\mathcal{G}[\hat{s}]) \geq \maxwt(\mathcal{G}[s'])$,
$\mathcal{G}[\hat{s}].\alpha^{*} \geq \mathcal{G}[s'].\alpha^{*}$.
Thus, $ \maxwt(s'.w) \cdot \wtsum(m.w)  \leq
\maxwt(\mathcal{G}[s']) \cdot \wtsum(m.w)  \leq
\maxwt(\mathcal{G}[\hat{s}]) \cdot \wtsum(m.w)  <
\kscore^{*}(\hat{s}) - \mathcal{G}[\hat{s}].\alpha^{*} \cdot \ssimub_{out}(m.\rho,c.r) \leq
\kscore^{*}(s') - \mathcal{G}[s'].\alpha^{*} \cdot \ssimub_{out}(m.\rho,c.r) \leq
\kscore^{*}(s') - s'.\alpha^{*} \cdot \ssimub_{out}(m.\rho,c.r) = \tsimlb(s'.\psi,m.\psi)$.
Based on Definition~\ref{def:refined_sub_aware_prefix},
we know that $s'.w \notin \prefplus(m|s')$.
Thus $s'$ can be skipped based on Lemma~\ref{lemma:group:prune_s}.
Thus, the theorem holds immediately.
\end{proof}
\noindent \textbf{Time complexity.} To speed-up the real-time processing, we precompute $\maxwt(\mathcal{G}[s])$ and $\mathcal{G}[s].\alpha^{*}$  and store them with each subscription in the group $\mathcal{G}$.
The condition checking in Theorem~\ref{theorem:group:early_terminate} can be efficiently computed in $O(log |\mathcal{G}|)$ time with a binary search method.

\begin{figure*}[t]
	\centering
	\includegraphics[width=\linewidth]{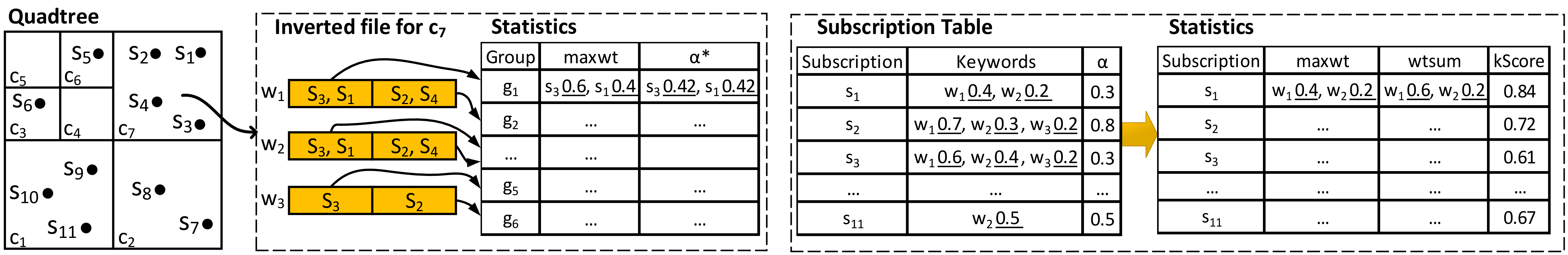}
	\caption{\small{Subscription index}}
	\label{fig:disseminate:our_data_structure}	
\end{figure*}
\vspace{-6mm}

\subsubsection{Cell-based Pruning}
Besides the above group pruning technique, we notice that for some cells which are far away from the location of an arriving message, we can safely skip the whole cell.
Specifically, for each subscription $s$ within a cell $c$, we can derive a spatial similarity threshold as follows:
\begin{equation}
\ssimlb(s.\rho) = \frac{\kscore(s)}{s.\alpha} - \frac{1 - s.\alpha}{s.\alpha}
\end{equation}
where we assume the textual similarity achieves the largest value, i.e., $1$.
Then we can reach the following lemma.
\begin{lemma}
\label{lemma:other:cell_prune}
Given a cell $c$, if $ \min_{s \in c}{ \ssimlb(s.\rho) } > \ssimub_{out}(m.\rho,c.r)$,
we can safely skip all the subscriptions in cell $c$.
\end{lemma}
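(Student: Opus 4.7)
The plan is to establish the lemma by showing that the hypothesis forces the full score $\score(s,m)$ to fall strictly below $\kscore(s)$ for every subscription $s$ residing in cell $c$, which by definition means $m$ cannot enter the top-$k$ result of any such $s$.

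First, I would derive the meaning of $\ssimlb(s.\rho)$ from the scoring function. Starting from the update condition $\score(s,m) \geq \kscore(s)$ and bounding the textual term by its maximum possible value $1$ (recall that the weight vectors are unit-normalized, so $\tsim(s.\psi,m.\psi) \leq 1$), Equation~\ref{equ:score} yields
\[
s.\alpha \cdot \ssim(s.\rho,m.\rho) + (1-s.\alpha) \geq \kscore(s),
\]
and solving for $\ssim(s.\rho,m.\rho)$ gives exactly $\ssim(s.\rho,m.\rho) \geq \ssimlb(s.\rho)$. Hence $\ssim(s.\rho,m.\rho) < \ssimlb(s.\rho)$ is sufficient to guarantee that $m$ cannot affect the top-$k$ of $s$, independent of any textual information.

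Second, I would invoke the outer spatial bound from Definition~\ref{def:outer_spatial_bound}: for every $s \in c$, we have $\ssim(s.\rho,m.\rho) \leq \ssimub_{out}(m.\rho,c.r)$. Combining this with the hypothesis $\min_{s \in c}\ssimlb(s.\rho) > \ssimub_{out}(m.\rho,c.r)$, I obtain for each $s \in c$
\[
\ssim(s.\rho,m.\rho) \leq \ssimub_{out}(m.\rho,c.r) < \min_{s' \in c}\ssimlb(s'.\rho) \leq \ssimlb(s.\rho),
\]
so the disqualifying condition holds uniformly across the cell, and no subscription in $c$ can have its top-$k$ altered by $m$.

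There is really no hard part here: the statement is a direct contrapositive of the update criterion, weakened by the trivial textual upper bound $\tsim \leq 1$ and the cell-level spatial bound. The only subtlety worth flagging is that the derivation of $\ssimlb(s.\rho)$ implicitly assumes $s.\alpha > 0$ (otherwise the spatial dimension plays no role in the score and the bound is vacuous); this is consistent with the preference parameter semantics used throughout the paper, so I would simply note it in passing rather than treat it as a separate case.
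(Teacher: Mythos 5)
Your proof is correct and takes essentially the same route as the paper's: both combine the trivial textual bound $\tsim \le 1$ with the cell-level outer spatial bound $\ssim(s.\rho,m.\rho) \le \ssimub_{out}(m.\rho,c.r)$ to conclude $\score(s,m) < \kscore(s)$ for every $s \in c$, yours merely phrased as the contrapositive of the update criterion. Your side remark that the derivation of $\ssimlb(s.\rho)$ implicitly requires $s.\alpha > 0$ is a fair observation that the paper also leaves implicit.
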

\begin{proof}
For $\forall s \in c$, we have $\ssimlb(s.\rho) = \frac{\kscore(s)}{s.\alpha} - \frac{1 - s.\alpha}{s.\alpha} > \ssimub_{out}(m.\rho,c.r)$. Thus, $\kscore(s) > (1 - s.\alpha) + s.\alpha \cdot \ssimub_{out}(m.\rho,c.r) >= \score(s,m)$.
Thus, $m$ cannot be \topk results of any $s$ in $c$.
\end{proof}
\vspace{-6mm}

\subsection{Subscription Index}
\label{sec:disseminate:data_structure}

Relying on all the techniques discussed above, our subscription index is essentially a \qd structure integrated with \ifile in each leaf cell, as shown in Figure~\ref{fig:disseminate:our_data_structure}.
For each registered subscription, we store its detailed information and relevant statistics in a subscription table, and insert it into a leaf cell of \qd based on its spatial location.
Note that in \qd, we only store the subscription id referring to its detailed information in subscription table.
Within each leaf cell, an \ifile is built upon all the subscriptions inside the cell.
Then each posting list in \ifile is further partitioned into groups based on the subscription preference $\alpha^{*}$ to enable group pruning.
Each group is also associated with some statistics mentioned above.
Finally, to facilitate early termination, the subscriptions within each group are ordered based on their $\kscore^{*}$.

\subsection{Dissemination Algorithm}
\label{sec:disseminate:algorithm}
Algorithm~\ref{alg:disseminate} shows our message dissemination algorithm.
We follow a \textit{filtering-and-verification} paradigm, where we first generate a set of candidate subscriptions (Lines~\ref{alg:disseminate:emptymap}-\ref{alg:disseminate:set_infty}),
and then compute the exact scores to determine the truly affected ones, with the updated results being disseminated accordingly (Line~\ref{alg:disseminate:verify}).
Specifically, we first initialize an empty map $\mathcal{R}$ to store candidates with their scores (Line~\ref{alg:disseminate:emptymap}).
Then the $\maxwt$ and $\wtsum$ values for all the keywords in the arriving message $m$ are computed for later use (Line~\ref{alg:disseminate:comp_maxwt}).
For each leaf cell $c$ surviving from cell pruning (Line~\ref{alg:disseminate:cell_pruning}),
we first compute $\ssimub_{out}(m.\rho,c.r)$ and then traverse the \ifile in cell $c$ following a TAAT manner.
For each group $\mathcal{G}$ encountered in $plist(c,w)$ (Line~\ref{alg:disseminate:each_group}),
we skip $\mathcal{G}$ if group pruning can be applied (Line~\ref{alg:disseminate:group_pruning});
otherwise, we identify $\hat{s}$ for early termination based on Theorem~\ref{theorem:group:early_terminate} (Line~\ref{alg:disseminate:identify_hat_s} and Line~\ref{alg:disseminate:early_termination}).
For each surviving subscription $s$, we employ location-aware prefix filtering (Line~\ref{alg:disseminate:prefix_filter}) and bound estimation for unseen keywords (Line~\ref{alg:disseminate:positional_filter}) to skip it as early as possible.
For the surviving subscriptions, we store the accumulated textual similarity so far w.r.t. $m$ in $\mathcal{R}$, while for the skipped subscriptions, we set $\mathcal{R}[s]$ to negative infinity (Line~\ref{alg:disseminate:set_infty}).
Finally, for each subscription in $\mathcal{R}$ with $\mathcal{R}[s] > 0$, we verify it and update its \topk results if needed (Line~\ref{alg:disseminate:verify}).
Note that when verifying a candidate $s$, we only need to compute the exact spatial similarity to get the final score because the textual similarity, i.e., $\mathcal{R}[s]$, has already been computed.
The statistics relevant to pruning techniques are also updated in Line~\ref{alg:disseminate:verify}.

\begin{algorithm}[tb]
\SetVline 
\SetFuncSty{textsf}
\SetArgSty{textsf}
\small
\caption{\textsf{MessageDissemination}($m$)}
\label{alg:disseminate}
\Input
{
    $m:$ a new incoming message \\
}
\State{$\mathcal{R} := \emptyset$ \tcc*[f]{A candidate map}}\label{alg:disseminate:emptymap}
\For{$1 \leq i \leq |m.\psi|$}
{
	\State{Compute $\maxwt(m.\psi[i])$ and $\wtsum(m.\psi[i])$} \label{alg:disseminate:comp_maxwt}
}\label{alg:disseminate:comp_maxwt_stop}
\For{each leaf cell $c$ in the \qd}
{
	\If(\tcc*[f]{Cell pruning}){ $c$ is skipped by Lemma~\ref{lemma:other:cell_prune}}
	{
		\State{\textbf{Continue}}
	}\label{alg:disseminate:cell_pruning}
	\State{Compute outer spatial bound $\ssimub_{out}(m.\rho,c.r)$}
	\For{ $1 \leq i \leq |m.\psi|$ }
	{
		\State{$w := m.\psi[i]$ }
		\For{ each group $\mathcal{G}$ in $plist(c,w)$}
		{
			\If(\tcc*[f]{Group pruning based on Theorem~\ref{theorem:group:lower_bound}} ){ $\max_{s \in \mathcal{G}}{ \left\{ \maxwt(s.w) \right\} \cdot \wtsum(m.w) } <  \kscore^{*}(\mathcal{G}) - \mathcal{G}.\alpha^{*} \cdot \ssimub_{out}(m.\rho,c.r)$ }
			{
				\State{\textbf{Continue}}
			}\label{alg:disseminate:group_pruning}
			\State{Identify $\hat{s}$ based on Theorem~\ref{theorem:group:early_terminate} } \label{alg:disseminate:identify_hat_s}
			\For{ each subscription $s$ in group $\mathcal{G}$}
			{
				\If(\tcc*[f]{Early termination based on Theorem~\ref{theorem:group:early_terminate}} ){ $s == \hat{s}$}
				{
					\State{\textbf{Break}}
				}\label{alg:disseminate:early_termination}
				\If{ $ s \in \mathcal{R}~\&\&~\mathcal{R}[s] == -\infty$}
				{
					\State{\textbf{Continue}}
				}
				\State{Compute $\tsimlb(s.\psi,m.\psi)$ based on both inner and outer spatial bounds}
				\If (\tcc*[f]{Theorem~\ref{theorem:single:prefix_filter_refined}} ){$\maxwt(m.w) \cdot \wtsum(s.w) < \tsimlb(s.\psi,m.\psi)$}
				{
					\State{\textbf{Continue}}
				} \label{alg:disseminate:prefix_filter}
				\If{ $s \in \mathcal{R}$}
				{
					\State{$\mathcal{R}[s] +:= \wt(s.w) \cdot \wt(m.w)$}
				}
				\Else
				{
					\State{$\mathcal{R} := \mathcal{R} \cup s$; $\mathcal{R}[s] := \wt(s.w) \cdot \wt(m.w)$}
				}
				\State{$pos :=$ the position of keyword $w$ in $s.\psi$}
				\If (\tcc*[f]{Theorem~\ref{theorem:single:positional_filter}}){$\mathcal{R}[s] + \tsimub(s.\psi[pos+1:],m.\psi[i+1:]) < \tsimlb(s.\psi,m.\psi)$}
				{
					\State{$\mathcal{R}[s] := -\infty$} \label{alg:disseminate:set_infty}
				} \label{alg:disseminate:positional_filter}
				
			}\label{alg:disseminate:each_sub}
				
		}\label{alg:disseminate:each_group}

	}
}\label{alg:disseminate:each_cell}
\State{\textsf{VerifyAndUpdate}($\mathcal{R}$) and disseminate updated results to corresponding subscriptions}\label{alg:disseminate:verify}
\end{algorithm}

\subsection{Index Maintenance}
\label{sec:disseminate:index_maintenance}
Our indexing structure can also support subscription update efficiently.
For a new subscription $s$, we first find the leaf cell containing its location, and then insert it into the \ifile with $O(|s.\psi| \cdot \log{|\mathcal{G}|})$ cost.
Note that the statistics mentioned above need to be updated accordingly.
For an expired subscription, we simply delete it from index and update the statistics if necessary.

\section{Top-k Re-evaluation}
\label{sec:refill}
In this section, we present the details of \moduletwo.
We first introduce some background knowledge for \skyband in Section~\ref{sec:refill:naive_kskyband}.
Then we present our \textit{cost-based skyband} technique in detail in Section~\ref{sec:refill:cost_kskyband}.
In the following of this paper, we denote the \skyband buffer (either fully or partially) of a subscription $s$ as $s.\mathcal{A}$ for simplicity, and the exact \topk results are denoted as $s.\mathcal{A}_k$.
Meanwhile, $s.k$ is denoted as $k$ if it is clear from context.

\subsection{K-Skyband}
\label{sec:refill:naive_kskyband}
The idea of utilizing \skyband to reduce the number of re-evaluations for \topk queries over a sliding window is first proposed in~\cite{DBLP:conf/sigmod/MouratidisBP06}.
In particular, for a given subscription $s$, only the messages in its corresponding \skyband can appear in its \topk results over the sliding window, thus being maintained.
Following are formal definitions of \textit{dominance} and \textit{\skyband}.
\begin{definition}[Dominance]
A message $m_1$ dominates another message $m_2$ w.r.t. a subscription $s$ if both $\score(s,m_1) \geq \score(s,m_2)$ and $m_1.t > m_2.t$ hold.
\end{definition}
\begin{definition}[\skyband]
The \skyband of a subscription $s$, denoted as $s.\mathcal{A}$, contains a set of messages which are dominated by less than $k$ other messages.
\end{definition}
Instead of keeping \skyband over all the messages in the sliding window,
which is cost-prohibitive, Mouratidis~\emph{et al.}~\cite{DBLP:conf/sigmod/MouratidisBP06} maintain a \textit{partial \skyband}.
Specifically, they only maintain the messages with score not lower than a threshold $s.\theta$, where $s.\theta$ is the $\kscore(s)$ after the most recent \topk re-evaluation for $s$ and remains unchanged until next re-evaluation is triggered.
However, as our experiments suggest, the method in~\cite{DBLP:conf/sigmod/MouratidisBP06} may result in expensive computational cost due to the improper selection of $s.\theta$.

To alleviate the above problem, we propose a novel \textit{cost-based \skyband} technique, which judiciously selects a best threshold $s.\theta$ for the \skyband maintenance of each subscription.
To start with, we present an overview of our \topk re-evaluation algorithm in Algorithm~\ref{alg:topk_refill}.
For each subscription $s$ containing the expired message $m$, if the size of $s.\mathcal{A}$ after deleting $m$ is less than $k$, we need to re-evaluate its \topk results from scratch.
Specifically, we first compute a proper threshold $s.\theta$ based on our cost model (Line~\ref{alg:topk_refill:compute_threshold}),
and then re-compute \skyband buffer $s.\mathcal{A}$ based on $B$, which contains all the messages with score at least $s.\theta$ (Line~\ref{alg:topk_refill:refill_from_index} and Line~\ref{alg:topk_refill:compute_skyband})
\footnote{The same technique in~\cite{DBLP:conf/sigmod/MouratidisBP06} is used to compute \skyband.}.
Note that $B$ can be computed by utilizing message index.
Finally, we extract \topk results from $s.\mathcal{A}$ (Line~\ref{alg:topk_refill:extract_topk}).
The key challenge here is to estimate a best threshold $s.\theta$, which will be discussed in the following in detail.
We remark that we use the term \textit{re-evaluation} to refer in particular to the \topk re-computation against message index.

%

\begin{algorithm}[tb]
\SetVline 
\SetFuncSty{textsf}
\SetArgSty{textsf}
\small
\caption{\textsf{TopkRe-evaluation}($m$)}
\label{alg:topk_refill}
\Input
{
    $m:$ an expired message \\
}
\For{each subscription $s$ whose \skyband buffer contains $m$}
{
	\State{Delete $m$ from $s.\mathcal{A}$}
	\If{$|s.\mathcal{A}| < k$}
	{
		\State{Compute the best $s.\theta$ based on our cost model}\label{alg:topk_refill:compute_threshold}
		\State{Retrieve $B := \{ m | m \in \mathcal{W} ~\&\&~ \score(s,m) \geq s.\theta \}$}\label{alg:topk_refill:refill_from_index}
		\State{$s.\mathcal{A} :=$ \skyband of $B$}\label{alg:topk_refill:compute_skyband}
	}
	\State{Extract $s.\mathcal{A}_k$ from $s.\mathcal{A}$}\label{alg:topk_refill:extract_topk}
}
\end{algorithm}

\subsection{Cost-based K-Skyband}
\label{sec:refill:cost_kskyband}
The general idea of our cost-based \skyband model is to select a best threshold $s.\theta$ for each subscription such that the overall cost defined in the cost model can be minimized.
The following theorem guarantees that, as long as we maintain a partial \skyband over all the messages with score not lower than $s.\theta$, we can extract \topk results from partial \skyband safely when some message expires.
\begin{theorem}
\label{theorem:cost_sky_correctness}
Given a subscription $s$, let $\kscore_{last}(s)$ be the $\kscore(s)$ after the most recent \topk re-evaluation for $s$.
We always have $s.\mathcal{A}_k \subseteq s.\mathcal{A}$ if the following conditions hold: (1) $|s.\mathcal{A}| \geq k$; (2) $s.\mathcal{A}$ is a partial \skyband which is built over all the messages with score at least $s.\theta$ in the sliding window, where $ 0 \leq s.\theta \leq \kscore_{last}(s)$.
\end{theorem}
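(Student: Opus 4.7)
The plan is to establish $s.\mathcal{A}_k \subseteq s.\mathcal{A}$ in two stages: first I would verify that every true top-$k$ message has score at least $s.\theta$, so that it qualifies to enter the partial \skyband at all; then I would verify that such a message is dominated by fewer than $k$ other qualifying messages, so that it actually survives the \skyband filter and ends up in $s.\mathcal{A}$.

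For the first stage, I would lean on condition (1), $|s.\mathcal{A}| \geq k$. By the definition of partial \skyband in condition (2), every element of $s.\mathcal{A}$ sits in the sliding window $\mathcal{W}$ and has score at least $s.\theta$. Hence $\mathcal{W}$ contains at least $k$ messages whose score is $\geq s.\theta$, which forces $\kscore(s) \geq s.\theta$. Any $m \in s.\mathcal{A}_k$ then satisfies $\score(s,m) \geq \kscore(s) \geq s.\theta$, so $m$ lies in the candidate set $B = \{m' \in \mathcal{W} : \score(s,m') \geq s.\theta\}$ over which the partial \skyband is built.

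For the second stage, I would show $m$ lies in the $k$-skyband of $B$. Suppose, for contradiction, that $k$ messages in $B$ dominate $m$; each such dominator has score $\geq \score(s,m)$ and strictly later arrival time than $m.t$. Under the customary tie-breaking convention (more recent wins), these $k$ messages would all out-rank $m$ in the window, contradicting $m \in s.\mathcal{A}_k$. Therefore fewer than $k$ messages in $B$ dominate $m$, so $m$ belongs to the $k$-skyband of $B$, i.e. $m \in s.\mathcal{A}$.

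The hard part is the careful interface between the two very different notions in play: the purely score-based definition of top-$k$ and the score-plus-time definition of dominance underlying the \skyband. I would therefore state the tie-breaking rule at the outset so that ``top-$k$'' immediately implies ``dominated by fewer than $k$ messages'', removing the only genuine subtlety. The bound $s.\theta \leq \kscore_{last}(s)$ never enters the containment argument directly; I read it as an invariant ensuring that condition (1) holds just after a re-evaluation (since at that moment at least $k$ messages have score $\geq \kscore_{last}(s) \geq s.\theta$), so I would mention it briefly to explain why condition (1) is maintainable rather than use it inside the chain of inequalities.
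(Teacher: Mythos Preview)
Your proposal is correct and follows essentially the same logic as the paper's proof: the paper argues by contradiction with two cases ($\score(s,m) \geq s.\theta$ versus $\score(s,m) < s.\theta$), and your two stages are exactly these cases in reverse order, phrased directly rather than as a global contradiction. Your explicit handling of tie-breaking and your remark that the bound $s.\theta \le \kscore_{last}(s)$ serves only to make condition~(1) attainable are both more careful than the paper, which glosses over these points.
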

\begin{proof}
We prove it by contradiction.
Assuming there exists a message $m \in s.\mathcal{A}_k$ while $m \notin s.\mathcal{A}$,
then we discuss two possible cases:
(1) $\score(s,m) \geq s.\theta$; (2) $\score(s,m) < s.\theta$.
For the first case, since $m \notin s.\mathcal{A}$, $m$ must be dominated by more than $k$ messages in $s.\mathcal{A}$, which indicates it cannot be \topk results, i.e., $m \notin s.\mathcal{A}_k$.
For the second case, at least $k$ messages in $s.\mathcal{A}$ must have a higher score than $m$ because $|s.\mathcal{A}| \geq k$ and all the messages in $s.\mathcal{A}$ have score at least $s.\theta$. Thus, $m$ still cannot be \topk results.
Thus, the original assumption does not hold, which immediately indicates $s.\mathcal{A}_k \subseteq s.\mathcal{A}$.
\end{proof}

Thus, based on Theorem~\ref{theorem:cost_sky_correctness}, we can safely extract \topk results from \skyband buffer $s.\mathcal{A}$ when $|s.\mathcal{A}| \geq k$;
when $|s.\mathcal{A}| < k$, we have to re-evaluate from message index.

Our cost-based \skyband model, based on Theorem~\ref{theorem:cost_sky_correctness}, aims to find the best $s.\theta$ such that the overall cost can be minimized for each subscription.
We mainly consider two costs.
The first one is \textit{\skyband maintenance cost}, denoted as $\mathcal{C}_{sm}(s)$, which is triggered upon message arrival and expiration.
The second one is \textit{\topk re-evaluation cost}, denoted as $\mathcal{C}_{re}(s)$, which is triggered when some message expires and the \topk results can no longer be retrieved from \skyband buffer.
We aim to estimate the expected overall cost w.r.t. each message update, i.e., message arrival and message expiration, each of which we assume occurs with probability $\frac{1}{2}$ as the window slides.
To simplify the presentation, we denote as $\prob(s.\theta)$ the probability that the score between a random message and a subscription $s$ is at least $s.\theta$.
We may immediately derive $\prob(s.\theta)$ for a given $s.\theta$ from historical data, assuming the score follows previous distribution.
The details of these two costs are presented in the following respectively.

\subsubsection{$K$-Skyband Maintenance Cost}
The maintenance of \skyband is triggered when the following two types of updates happen, both with probability $\frac{1}{2} \cdot \prob(s.\theta)$,
where $\frac{1}{2}$ is the probability of message arrival or message expiration due to the count-based sliding window,
and $\prob(s.\theta)$ is the probability that the score between a random message and $s$ is at least $s.\theta$.
Please note that if the independence assumption does not hold for messages, the above probabilities cannot be estimated accurately, and we may resort to utilizing historical data for the estimation.

The first type of update is triggered when a message $m$ with score at least $s.\theta$ arrives.
Apart from the insertion of $m$ into $s.\mathcal{A}$,
the dominance counters of all the messages in $s.\mathcal{A}$ with score not higher than $\score(s,m)$ will increase by 1, and the messages with dominance counter equal to $k$ will be evicted.
Since we implement our \skyband buffer with a linked list sorted by $\score(s,m)$.
The above operations can be processed in $O(|s.\mathcal{A}|)$ time with a linear scan.
The next challenge is to estimate $|s.\mathcal{A}|$.
Based on the independence assumption between score dimension and time dimension, the expected number, i.e.,~$|s.\mathcal{A}|$, of messages in the partial \skyband is $ k \cdot \ln(\frac{|\mathcal{W}| \cdot \prob(s.\theta)}{k})$~\cite{DBLP:journals/tkde/ZhangLYKZY10}, where $|\mathcal{W}|$ is the size of sliding window.
Please note that if the independence assumption does not hold, the worst case space complexity will be $|\mathcal{W}| \cdot \prob(s.\theta)$.

The second type of update occurs when an old message $m$ among the \skyband buffer of $s$ expires.
In this case, we only need to delete $m$ from $s.\mathcal{A}$ in $O\left(|s.\mathcal{A}| \right)$ time.
Note that $m$ does not dominate any remaining messages and therefore the dominance counters of the remaining messages are not affected.
Finally, we get the total cost of \skyband maintenance as follows:
\begin{eqnarray}
\label{equ:csky:maintenance_cost}
\mathcal{C}_{sm}(s) &=& \frac{1}{2} \cdot \prob(s.\theta) \cdot |s.\mathcal{A}|  + \frac{1}{2} \cdot \prob(s.\theta) \cdot |s.\mathcal{A}|  \nonumber \\
&=& \prob(s.\theta) \cdot k \cdot \ln(\frac{|\mathcal{W}| \cdot \prob(s.\theta)}{k})
\end{eqnarray}

\subsubsection{Top-$k$ Re-evaluation Cost}
The \topk re-evaluation cost can be formalized as:
\begin{equation}
\label{equ:csky:topk_reevaluation_cost}
\mathcal{C}_{re}(s) = \mathcal{C}_{topk}(s) \cdot \frac{1}{\mathbb{Z}(s)}
\end{equation}
where $\mathcal{C}_{topk}(s)$ is the average \topk computation cost over message index for subscription $s$, and $\mathbb{Z}(s)$ is the expected number of message updates that is required to trigger \topk re-evaluation, i.e., leading to $|s.\mathcal{A}| < k$.
The value of $\mathcal{C}_{topk}(s)$ can be estimated by the average of previous \topk computation cost against message index.
The remaining issue is how to estimate $\mathbb{Z}(s)$, which is non-trivial.

To solve this problem, we model the streaming updating process as a simple \textit{random walk}.
A random walk is a stochastic sequence ${RW_n}$, with $RW_0$ being the starting position, defined by
$RW_n = \sum_{i=1}^{n}{X_i}$
where ${X_i}$ are independent and identically distributed random variables (i.e., i.i.d.).
The random walk is \textit{simple} if $\prob(X_i = 1) = p$, $\prob(X_i = -1) = q$ and $\prob(X_i = 0) = r$, where $p+q+r=1$.
We map the estimation of $\mathbb{Z}(s)$ into a simple random walk as follows.
We model the change of \skyband buffer $s.\mathcal{A}$ w.r.t. each message update as an i.i.d. variable $X_i$.
$X_i$ is set to $1$ when the size of $s.\mathcal{A}$ is increased by 1 at $i$-th step,
while $X_i$ is set to $-1$ when the size of $s.\mathcal{A}$ is decreased by 1.
When the size of $s.\mathcal{A}$ does not change, $X_i$ is set to 0.
Unfortunately, it is difficult to estimate the probability of $\prob(X_i = 1)$ and $\prob(X_i = -1)$ for each message update, due to the eviction of messages by dominance relationship.
For example, for a new message, the size of $s.\mathcal{A}$ may decrease rather than increase due to the eviction of messages with dominance counter reaching $k$.
To address this problem, rather than estimating $\mathbb{Z}(s)$ for $s.\mathcal{A}$ maintenance, we estimate $\mathbb{Z}'(s)$ for $s.\mathcal{A}'$, which contains all the messages with score not lower than $s.\theta$.
Specifically, when we maintain $s.\mathcal{A}'$, we do not consider the dominance relationship between messages for each message update, and thus the messages dominated by $k$ (or more) messages are not evicted.
Clearly, $s.\mathcal{A}'$ is a superset of $s.\mathcal{A}$, i.e., $s.\mathcal{A} \subseteq s.\mathcal{A}'$.
The following theorem guarantees that $\mathbb{Z}(s)$ is equal to $\mathbb{Z}'(s)$.
\begin{theorem}
\label{theorem:csky:equivalent_reduction}
The expected number of message updates that is required to trigger \topk re-evaluation for $s.\mathcal{A}$ maintenance is the same as that for $s.\mathcal{A}'$ maintenance, i.e., $\mathbb{Z}(s) = \mathbb{Z}'(s)$.
\end{theorem}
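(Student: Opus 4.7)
The plan is to show that the re-evaluation trigger event, namely $|s.\mathcal{A}|<k$, occurs after exactly the same sequence of message updates as the event $|s.\mathcal{A}'|<k$, so the expected waiting time must coincide. Since both processes see the same incoming/expiring stream and differ only in whether dominance-based eviction is applied, it suffices to establish the pointwise equivalence $|s.\mathcal{A}|<k \iff |s.\mathcal{A}'|<k$ at every step; from this the equality of the expected stopping times $\mathbb{Z}(s)=\mathbb{Z}'(s)$ follows immediately.

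One direction is essentially free: by construction $s.\mathcal{A}\subseteq s.\mathcal{A}'$ (the $k$-skyband is obtained from $s.\mathcal{A}'$ by discarding messages dominated by at least $k$ others), so $|s.\mathcal{A}'|<k$ directly implies $|s.\mathcal{A}|<k$. The heart of the proof is the converse: assuming $|s.\mathcal{A}'|\geq k$, one must exhibit at least $k$ elements of $s.\mathcal{A}'$ that survive in the $k$-skyband.

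For this step I would appeal to a simple timestamp argument. Order the messages in $s.\mathcal{A}'$ by arrival time and consider the $k$ most recent ones, labelling them $m_1,\dots,m_k$ from newest to oldest. Since dominance in this paper requires a strictly more recent timestamp, any message that dominates $m_i$ must lie strictly above $m_i$ in this ordering, so $m_i$ can be dominated by at most $i-1<k$ members of $s.\mathcal{A}'$. Hence each $m_i$ satisfies the $k$-skyband condition and belongs to $s.\mathcal{A}$, yielding $|s.\mathcal{A}|\geq k$. Combining both directions establishes the desired equivalence.

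The only subtlety I anticipate is making precise the coupling between the two processes: because the dominance-based evictions in $s.\mathcal{A}$ happen at moments when $s.\mathcal{A}'$ is merely accumulating, one must argue that the stopping times agree \emph{on a step-by-step basis} rather than just in distribution. This is handled by observing that both $s.\mathcal{A}$ and $s.\mathcal{A}'$ are deterministic functions of the same update history and the same threshold $s.\theta$, so the pointwise equivalence above lifts directly to equality of the induced stopping times, and thus to $\mathbb{Z}(s)=\mathbb{Z}'(s)$. Once this is set up, the remainder is just restating Equation~\ref{equ:csky:topk_reevaluation_cost} using $\mathbb{Z}'(s)$, which is the quantity the subsequent random-walk analysis can actually estimate.
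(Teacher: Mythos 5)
Your proof is correct, and it reaches the same pointwise equivalence the paper relies on ($|s.\mathcal{A}|<k \iff |s.\mathcal{A}'|<k$ at every step, with the easy direction coming from $s.\mathcal{A}\subseteq s.\mathcal{A}'$), but the key direction is argued differently. The paper proceeds by induction on updates: it introduces an explicit evicted set $s.\mathcal{A}_{evict}$, maintains the invariant $s.\mathcal{A}\cup s.\mathcal{A}_{evict}=s.\mathcal{A}'$ under every arrival and expiration, and then argues that $|s.\mathcal{A}|<k$ forces $s.\mathcal{A}_{evict}=\emptyset$. Your argument is static and more elementary: since a dominator must be strictly newer, the $i$-th newest element of $s.\mathcal{A}'$ has at most $i-1$ dominators, so the $k$ newest elements always survive in the \skyband; hence $|s.\mathcal{A}'|\geq k$ implies $|s.\mathcal{A}|\geq k$ with no induction. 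This buys something concrete: the paper's final step ("there is no $k$ messages in $s.\mathcal{A}$ that can dominate any message in $s.\mathcal{A}_{evict}$") is terse, because an evicted message's dominators need not lie in $s.\mathcal{A}$; making it airtight requires exactly your newest-first ordering argument applied to the newest element of $s.\mathcal{A}_{evict}$. What the paper's invariant buys in return is an explicit account of how the two buffers co-evolve under arrivals and expirations, which your write-up compresses into the remark that both are deterministic functions of the same history. Two small points you should state explicitly to close the argument: (i) any dominator of a message in $s.\mathcal{A}'$ has score at least as large, hence also lies in $s.\mathcal{A}'$, so counting dominators within $s.\mathcal{A}'$ suffices; and (ii) the incrementally maintained $s.\mathcal{A}$ coincides at every instant with the \skyband computed from scratch over $s.\mathcal{A}'$, which holds because a message's dominators are newer and therefore expire no earlier than it, so its dominance count never decreases during its lifetime and evictions are never undone.
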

\begin{proof}
To show $\mathbb{Z}(s)$ is equal to $\mathbb{Z}'(s)$, it is sufficient to prove that $|s.\mathcal{A}| < k$ \textit{if and only if} $|s.\mathcal{A}'| < k$ at any point.
Initially, we have $s.\mathcal{A} \cup s.\mathcal{A}_{evict} = s.\mathcal{A}'$ and $s.\mathcal{A} \cap s.\mathcal{A}_{evict} = \emptyset$, where $s.\mathcal{A}_{evict}$ is a set of messages evicted by messages in $s.\mathcal{A}$.
We prove that after each message update, $s.\mathcal{A} \cup s.\mathcal{A}_{evict} = s.\mathcal{A}'$ always holds.
As to the arrival of a new message $m$,
if $m$ is inserted into $s.\mathcal{A}'$, it will also be inserted into $s.\mathcal{A}$ since $m$ will not be dominated by any existing message due to its freshness;
and vice versa.
Note that some messages may be evicted from $s.\mathcal{A}$ to $s.\mathcal{A}_{evict}$ due to dominance relationship.
Thus,  $s.\mathcal{A} \cup s.\mathcal{A}_{evict} = s.\mathcal{A}'$ holds.
As to the expiration of an old message $m$,
(1) If $m \in s.\mathcal{A}$, $m$ will also expire from $s.\mathcal{A}'$, while $s.\mathcal{A}_{evict}$ does not change.
(2) If $m \notin s.\mathcal{A}$, $m$ will expire from both $s.\mathcal{A}_{evict}$ and $s.\mathcal{A}'$.
Thus, $s.\mathcal{A} \cup s.\mathcal{A}_{evict} = s.\mathcal{A}'$ still holds.
Therefore, when $|s.\mathcal{A}| < k$ occurs,
$s.\mathcal{A}_{evict}$ must be empty because there is no $k$ messages in $s.\mathcal{A}$ that can dominate any message in $s.\mathcal{A}_{evict}$.
Thus, $|s.\mathcal{A}'| = |s.\mathcal{A}| < k$ holds.
Contrarily, when $|s.\mathcal{A}'| < k$ occurs, it is immediate that $|s.\mathcal{A}| < k$ because $s.\mathcal{A}$ is always a subset of $s.\mathcal{A}'$.
Therefore, the theorem holds.
\end{proof}

Based on the above theorem, we turn to estimate $\mathbb{Z}'(s)$, which is much easier.
Now the probability distribution of $X_i$ can be estimated as:
\begin{equation}
\label{equ:csky:X_i}
  \prob(X_i) =
  \begin{cases}
    \frac{1}{2} \cdot \prob(s.\theta) & \text{if } X_i = 1,
    \\
    \frac{1}{2} \cdot \prob(s.\theta) & \text{if } X_i = -1,
    \\
    1 - \prob(s.\theta) & \text{if } X_i = 0.
  \end{cases}
\end{equation}

We denote the initial size of $s.\mathcal{A}'$ as $|s.\mathcal{A}'_{init}|$.
Now, the estimation of $\mathbb{Z}'(s)$ is equivalent to a well-known random walk problem, namely \textit{Monkey at the cliff with reflecting barriers}~\cite{feller2008introduction}.
Specifically, we set the starting position $RW_0$ as $|s.\mathcal{A}'_{init}|$ and the destination position as $k-1$; the i.i.d. variable $X_i$ is defined as Equation~\ref{equ:csky:X_i}; and the reflecting barrier is set as $2 \cdot RW_0$.
By applying some mathematical reduction based on the property of random walk~\cite{feller2008introduction}, we get the following result.
\begin{eqnarray}
\mathbb{Z}'(s) &=& \frac{ 2 \cdot \left( |s.\mathcal{A'}_{init}| - k + 1  \right) \cdot |s.\mathcal{A'}_{init}| }{ \prob(s.\theta)}  \nonumber \\
&+& \frac{ \left( |s.\mathcal{A'}_{init}| - k + 1  \right) \cdot \left( |s.\mathcal{A'}_{init}| - k + 2 \right) }{ \prob(s.\theta)}
\end{eqnarray}
where $|s.\mathcal{A'}_{init}|$ can be estimated as $\prob(s.\theta) \cdot |\mathcal{W}|$.
Thus, the \topk re-evaluation cost in Equation~\ref{equ:csky:topk_reevaluation_cost} can be estimated by replacing $\mathbb{Z}(s)$ with $\mathbb{Z}'(s)$.
Based on Equation~\ref{equ:csky:maintenance_cost} and Equation~\ref{equ:csky:topk_reevaluation_cost}, we get our final cost model:
\begin{equation}
\label{equ:csky:overall_cost}
\mathcal{C}(s) = \mathcal{C}_{sm}(s) + \mathcal{C}_{re}(s)
\end{equation}

To minimize Equation~\ref{equ:csky:overall_cost} where the only variable is $s.\theta$, we employ an incremental estimation algorithm similar to gradient descent~\cite{avriel2003nonlinear} to compute the best value of $s.\theta$.

\noindent \textbf{Remark.}
To accommodate our cost-based skyband model with the message dissemination algorithm, we need to replace $\kscore(s)$ in Section~\ref{sec:disseminate} with $s.\theta$ such that any message with score not lower than $s.\theta$ will be considered to possibly affect the \topk results of $s$.
Moreover, since our dominance definition simply depends on the 2-dimensional score-time space while is irrelevant to the exact score function, our technique can be easily applied to other \topk monitoring problems with different score functions.

%
%

\subsection{Discussions}
\label{sec:refill:discussions}
\noindent \textbf{Initialization of incoming subscriptions.}
The initialization of a new subscription $s$ can be processed in a similar way to Algorithm~\ref{alg:topk_refill}, where we regard the initial size of $s.\mathcal{A}$ as 0 and execute Lines~\ref{alg:topk_refill:compute_threshold}-\ref{alg:topk_refill:compute_skyband} in Algorithm~\ref{alg:topk_refill} sequentially.

\noindent \textbf{Time-based sliding window model.}
Our techniques discussed above can also be extended to support time-based sliding window model, where only the messages within a recent time period are maintained.
Unlike count-based sliding window whose size is constant, the size of time-based sliding window, i.e., $|\mathcal{W}|$, can change at any time due to the volatile message workload.
To estimate $|\mathcal{W}|$, we assume that the message workload does not change significantly in the near future.
Then we can estimate $|\mathcal{W}|$ by the historical message workload from a recent period.
Another difference is that the probability of message arrival (resp.  expiration) cannot be regarded as $\frac{1}{2}$ trivially as indicated in Equation~\ref{equ:csky:maintenance_cost} and Equation~\ref{equ:csky:X_i}, because the number of message arrival and the number of message expiration are possibly rather different in each timestamp.
To alleviate this issue, we resort to estimating the above probabilities based on the relative proportion of message arrival and expiration within a recent time period.
Then the probabilities (e.g., $\frac{1}{2}$) in Equation~\ref{equ:csky:maintenance_cost} and Equation~\ref{equ:csky:X_i} are updated accordingly.
We also conduct experiments to verify the efficiency of our techniques under time-based sliding window in Section~\ref{sec:exp}.

\section{Distributed Processing}
\label{sec:distributed}

In this section, we introduce \oursdis, a distributed \topk \sk \pubsub system built on top of \storm.
We first touch some background knowledge about \storm in Section~\ref{sec:distributed:storm_background}, followed by the detailed system framework in Section~\ref{sec:distributed:framework}.
Four novel distribution mechanisms are discussed in Section~\ref{sec:distributed:dis_mechanism}, which manage to partition the subscriptions and messages to multiple bolt instances for parallel processing.
The maintenance issue is finally discussed in Section~\ref{sec:distributed:maintenance}.
To the best of our knowledge, this is the first work to extend \topk \sk \pubsub system to a distributed environment.

\subsection{Storm Background}
\label{sec:distributed:storm_background}
\storm is a distributed, fault-tolerant and general-purpose stream processing system.
Unlike \hadoop\footnote{Apache Hadoop project. https://hadoop.apache.org/} which is mainly designed to process batch tasks, \storm is designed to process streaming data continuously and endlessly.
There are three key abstractions in \storm: \textbf{spout}, \textbf{bolt} and \textbf{topology}.
A spout is a source of streams, which reads input stream from external resources, such as Twitter API\footnote{https://dev.twitter.com/rest/public}.
A bolt is a processing unit responsible for data processing, which handles any number of input streams and produces any number of new output streams.
A topology is a network of spouts and bolts, with each directed edge in the network representing a bolt subscribing to the output stream of some other spout or bolt.
Essentially, topology defines the working flow of a real-time computation task, which is similar to a MapReduce job~\cite{DBLP:journals/cacm/DeanG08}.
\storm employs various \textbf{stream groupings} techniques\footnote{http://storm.apache.org/releases/0.10.0/Concepts.html}, such as shuffle grouping and fields grouping, to specify for each bolt instance which streams it should receive as input.
Figure~\ref{fig:distributed:storm_flow} depicts a simple \storm working flow where there are two spouts and five bolts connected by directed edges.
Note that each spout/bolt can have many parallel-running instances/tasks.
\begin{figure}[t]
\centering
\includegraphics[width=0.8\columnwidth]{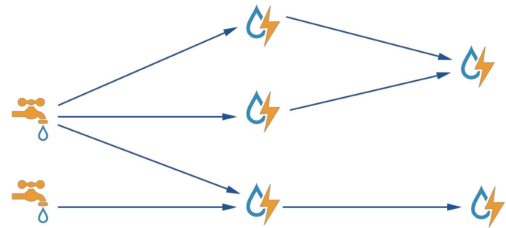}
\caption{\small{A simple storm flow}}
\label{fig:distributed:storm_flow}
\end{figure}

\subsection{Framework}
\label{sec:distributed:framework}
Figure~\ref{fig:distributed:topology} shows the topology of our \oursdis, which contains five main components:
\begin{itemize}
\item \textbf{Subscription/Message spouts.} Subscription spouts receive new subscription request while message spouts collect message stream from external source, e.g., Twitter API.
The incoming streams are then subscribed by other components of the topology.

\item \textbf{Distribution bolts.} Distribution bolts receive streams from spouts and navigate them downwards to the subscription bolts according to some carefully designed distribution mechanisms (Section~\ref{sec:distributed:dis_mechanism}), aiming to achieve good workload balance.
This component is critical to the overall communication cost and throughput of our system.
Note that the distribution bolts will also route new messages to the message bolts to ensure that the sliding window is always up-to-date.

\item \textbf{Subscription bolts.} Subscription bolts partition the subscription index and result buffer among multiple parallel-running tasks.
A new subscription from a distribution bolt is inserted into one or more subscription bolts, and a new message is processed simultaneously among multiple bolts.
Note that distribution bolts and subscription bolts together correspond to \moduleone in the centralized version discussed in Section~\ref{sec:framework}.

\item \textbf{Message bolts.} Message bolts maintain the sliding window in a distributed manner, and each bolt contains part of the sliding window.
A message index (e.g.,~\irtree,~\sti) is built over the messages residing in each bolt.
The \topk re-evaluation request for a subscription $s$ issued by subscription bolts will be processed concurrently among all message bolts, each generating a partial message buffer consisting of all the messages with score at least $s.\theta$.
Note that each message is stored in only one message bolt.

\item \textbf{Aggregation bolts.} Aggregation bolts are introduced to aggregate the partial message buffer generated by message bolts.
Then the final \skyband buffer is computed and forwarded back to the subscription bolts where $s$ resides.
Note that message bolts and aggregation bolts together form the counterpart to the \moduletwo in its centralized version.
\end{itemize}

All the stream groupings in the topology are summarized in Table~\ref{tb:stream_groupings}.
We remark that the result buffer can be easily swapped to any persistent state, such as Memcached\footnote{http://storm.apache.org/releases/0.10.0/Trident-tutorial.html} and HDFS~\cite{DBLP:conf/mss/ShvachkoKRC10}, to support various applications.

\begin{figure}[t]
\centering
\includegraphics[width=1.0\columnwidth]{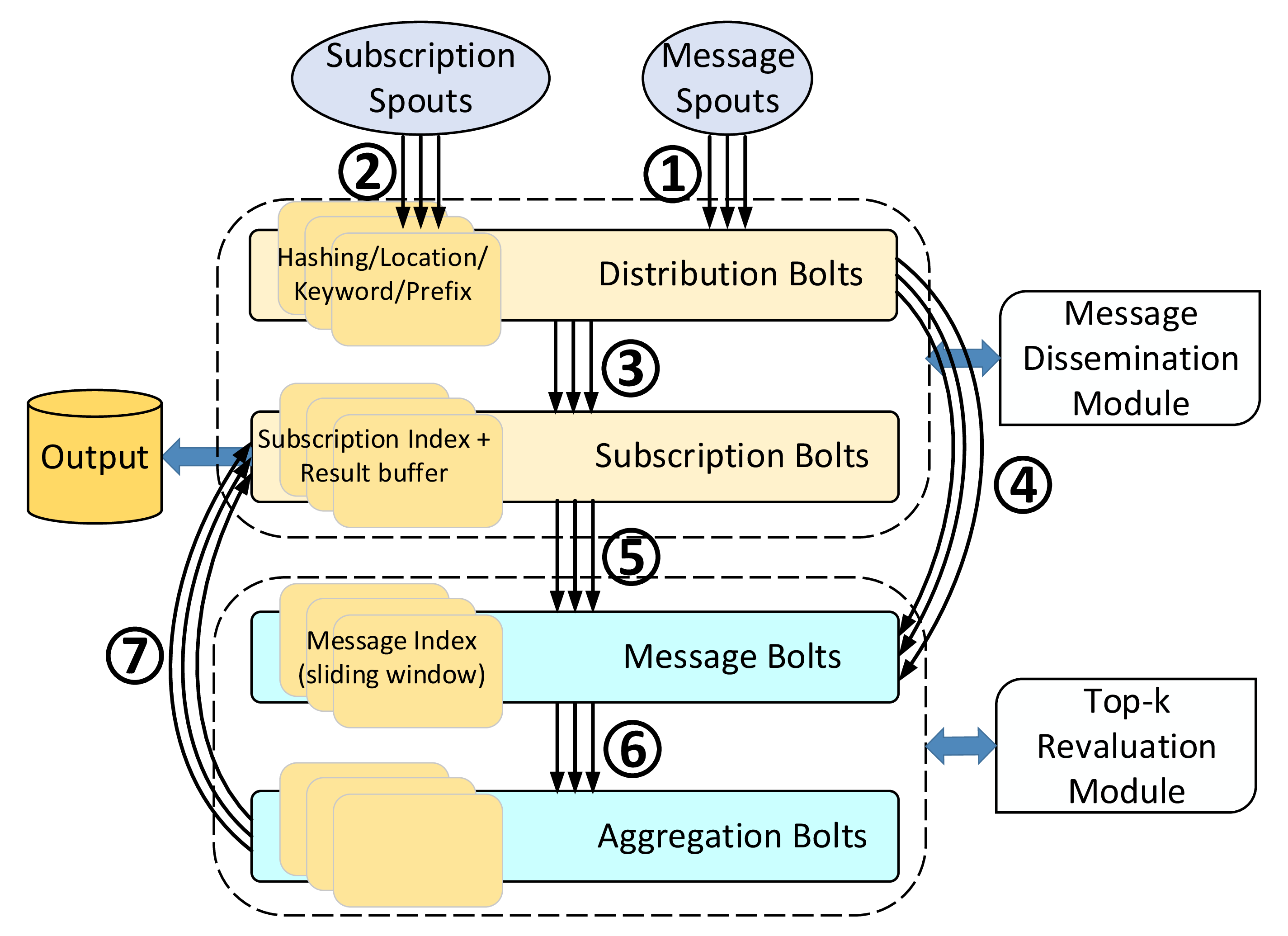}
\caption{\small{\oursdis topology (solid arrows indicate stream flow between components)}}
\label{fig:distributed:topology}
\end{figure}

\begin{table*}[t]
  	\caption{Stream grouping methods. ID in first column corresponds to ID in Figure~\ref{fig:distributed:topology}.}\label{tb:stream_groupings}
  	\centering
    \begin{tabular}{|c|p{1.6cm}|p{1.8cm}|p{2.2cm}|p{9.5cm}|}
      \hline
      \textbf{ID} & \textbf{Source} & \textbf{Destination} & \textbf{Grouping} & \textbf{Description} \\ \hline \hline
      \circled{1} & message spouts & distribution bolts & shuffle grouping &  Each message is distributed to only one distribution bolt randomly. \\ \hline
      \circled{2} & subscription spouts & distribution bolts & shuffle grouping &  Each subscription is distributed to only one distribution bolt randomly. \\ \hline
      \circled{3} &	distribution bolts & subscription bolts & all/direct grouping & Subscriptions/messages are distributed to subscription bolts with either all grouping or direct grouping (depending on the distribution mechanism employed). \\ \hline
      \circled{4} &	distribution bolts & message bolts & direct grouping & Each message is distributed to only one message bolt based on its unique id to keep the sliding window up-to-date. \\ \hline	
      \circled{5} &	subscription bolts & message bolts & all grouping & Each \topk re-evaluation request is distributed to all message bolts for parallel processing. \\ \hline
      \circled{6} &	message bolts & aggregation bolts & fileds grouping & The partial result buffers from the same subscription are distributed to the same aggregation bolt. \\ \hline
      \circled{7} &	aggregation bolts & subscription bolts & direct grouping & The aggregated \skyband buffer is forwarded back to one or more subscription bolts where this subscription resides. \\ \hline
    \end{tabular}
\end{table*}

\noindent \textbf{Working flow.}
When a new message $m$ is digested by a message spout, it will be delivered to a distribution bolt (stream \circled{1}, see Figure~\ref{fig:distributed:topology}).
The distribution bolt then will navigate $m$ to some of subscription bolts (stream \circled{3}) such that it can be processed against local subscription index in a parallel manner.
The distribution bolt will also disseminate $m$ to a message bolt (stream \circled{4}) to keep the sliding window therein up-to-date.
When $m$ expires, the \topk re-evaluations triggered by $m$ in the subscription bolts will be emitted to message bolts (stream \circled{5}) and then to aggregation bolts (stream \circled{6}), where the \skyband buffers of all affected subscriptions will be re-computed and forwarded back to the subscription bolts (stream \circled{7}).
Similarly, a new subscription $s$ will be firstly delivered from a subscription spout to a distribution bolt (stream \circled{2}), and then routed to one or more subscription bolts (stream \circled{3}) for indexing based on the distribution mechanism.
The initial result of $s$ will be computed from message bolts (stream \circled{5}) and aggregation bolts (stream \circled{6}) accordingly and forwarded back to subscription bolt (stream \circled{7}) where $s$ is indexed.
An unregistered subscription will be simply deleted from all its residing subscription bolts.

\noindent \textbf{Challenges.}
As the number of subscriptions increases, the subscription bolts become the main bottleneck of our system.
Meanwhile, the communication cost between distribution bolts and subscription bolts dominates all the other communication cost as we increase the number of subscription bolts.
Thus, the key challenge in \oursdis is to develop an efficient distribution mechanism to assign subscriptions and messages only to some inevitable subscription bolts, such that both small communication cost and high throughput can be realized while still guaranteeing the correctness of our algorithms.
Furthermore, the distribution mechanism should be able to handle workload balance, since both the subscription workload and message workload in real life are extremely biased regarding keywords and locations.
At last, the distribution mechanism should be light-weighted without consuming many CPU and memory resources.

\subsection{Distribution Mechanism}
\label{sec:distributed:dis_mechanism}
In this section, we present several novel, efficient and light-weighted distribution mechanisms, which can be integrated into distribution bolts.
For the ease of exposition, we assume we already have a set of existing subscriptions $\mathcal{S}$ and a random message $m$ sampled from message stream. We denote the number of subscription bolts as $N_{sb}$, with each bolt identified by a partition index ranging from $0$ to $N_{sb} - 1$.
A distribution mechanism aims to partition $\mathcal{S}$ into $N_{sb}$ subscription bolts and navigate the message $m$ to relevant subscription bolts for \topk dissemination.
In the following, we propose four different distribution methods, namely \textit{hashing-based}, \textit{location-based}, \textit{keyword-based} and \textit{prefix-based}, respectively.

\subsubsection{Hashing-based Method}
Hashing-based method partitions the subscriptions based on a uniform hashing function defined as follows:
\begin{equation}
h(s) = s.id \mod N_{sb}
\end{equation}
where $s.id$ is a unique id assigned to each subscription, and $h(s)$ is the bolt index where $s$ should be allocated.

\noindent \textbf{Analysis.}
Since each subscription is allocated to only one bolt, the \textit{replication ratio} of subscriptions is 1.
The replication ratio here indicates the number of times a subscription has been stored in the system.
Note that we ignore the replication ratio of messages, because it is always 1 regardless of the distribution mechanisms.
Meanwhile, for any new message, it needs to be distributed to all the subscription bolts to ensure the correctness.
Thus, the average \textit{communication cost} of each message is $N_{sb}$.
Note that we only consider the communication cost between distribution bolts and subscription bolts w.r.t. each message since it is dominant.
For example, on a cluster with 32 subscription bolts, the communication cost between distribution bolts and subscription bolts account for more than $90\%$ of total communication cost.

\subsubsection{Location-based Method}
\label{sec:distributed:dis_mechanism:location-based}
Hashing-based method is simple and can achieve very good workload balance, because the number of subscriptions in each subscription bolt is nearly the same by the nature of uniform hashing.
However, it does not take the location factor into consideration.
Intuitively, distributing subscriptions with high spatial similarity into the same bolt can lead to lower \amp cost and thus higher throughput, since we can acquire better spatial bounds as discussed in Section~\ref{sec:disseminate:ipt:spatial_bound}.
On the other hand, it is also pivotal to balance subscription and message workloads among the subscription bolts.
To this end, we propose a cost-based spatial partition schema using \kdtree~\cite{DBLP:journals/cacm/Bentley75}, where the leaf nodes of \kdtree form a disjoint partition of the whole space.
For each leaf node, we allocate a subscription bolt to process all the subscriptions whose locations are inside the leaf node.
Formally, given any node, denoted as $nd$, in \kdtree, we estimate its cost by:
\begin{equation}
C(nd) = N(nd) \times p(nd)
\end{equation}
where $N(nd)$ is the number of subscriptions whose locations are inside $nd$ and $p(nd)$ is the probability that a random incoming message falls inside $nd$.
Note that $p(nd)$ can be easily estimated from historical message workloads.
The cost-based spatial partition algorithm is depicted in Algorithm~\ref{alg:kdtree_partition}, which is very similar to the original \kdtree construction algorithm.
The key difference is that, unlike \kdtree which selects a line halving all the points along x-axis or y-axis alternately, our algorithm tries to find a splitting line which minimizes the cost difference between two children (Lines~\ref{alg:kdtree_partition:find_x_split} and~\ref{alg:kdtree_partition:find_y_split}) in order to achieve workload balance.
We limit the total number of partitions (i.e., leaf nodes) by setting a $maxdepth$ value (Line~\ref{alg:kdtree_partition:maxdepth}).

\begin{algorithm}[tb]
\SetVline 
\SetFuncSty{textsf}
\SetArgSty{textsf}
\small
\caption{\textsf{SpatialPartition}($nd$, $depth$)}
\label{alg:kdtree_partition}
\Input
{
    $nd:$ a node in \kdtree\\
    $depth: $ the depth of $nd$
}
\Output{A spatial partition consisting of all the leaf nodes in \kdtree}

\If{$depth == maxdepth$}
{
	\State{\Return}
}\label{alg:kdtree_partition:maxdepth}

\If(\tcc*[f]{split by x-coordinate}){$depth$ is even}
{
	\State{Find a value $x$ from the x-coordinates of all the subscriptions in $nd$, which leads to minimum $|C(nd_1) - C(nd_2)|$, where $nd_1$ and $nd_2$ are two child nodes split based on $x$}\label{alg:kdtree_partition:find_x_split}
}
\Else(\tcc*[f]{split by y-coordinate})
{
		\State{Find a value $y$ from the y-coordinates of all the subscriptions in $nd$, which leads to minimum $|C(nd_1) - C(nd_2)|$, where $nd_1$ and $nd_2$ are two child nodes split based on $y$}\label{alg:kdtree_partition:find_y_split}
}
\State{Assign subscriptions in $nd$ to $nd_1$ and $nd_2$ based on the splitting value}
\State{\textsf{SpatialPartition}($nd_1$, $depth + 1$)}
\State{\textsf{SpatialPartition}($nd_2$, $depth + 1$)}
\end{algorithm}

\noindent \textbf{Analysis.}
The time complexity of Algorithm~\ref{alg:kdtree_partition} is bounded by $O(|\mathcal{S}| \times \log{|\mathcal{S}|})$ as we need to sort subscriptions in $\mathcal{S}$ by x-coordinate and y-coordinate respectively beforehand and then do a divide-and-conquer partition.
Each subscription will be assigned to only one subscription bolt containing its location.
Thus, the replication ratio is 1.
On the other hand, each incoming message needs to be delivered to all subscription bolts to guarantee the algorithm correctness.
Thus, the average communication cost is $N_{sb}$.

\subsubsection{Keyword-based Method}
Both hashing-based and location-based methods have to send each message to all the subscription bolts, which results in high communication overhead especially when we increase the number of subscription bolts.
To alleviate this issue, we propose a novel keyword-based partition mechanism, which can reduce the communication cost significantly at the cost of small subscription replications.
The general idea is that, each subscription bolt only accounts for a subset of keywords;
thus, each subscription $s$ (resp. message $m$) will be distributed only to the subscription bolts whose keyword sets overlap with $s.\psi$ (resp. $m.\psi$).
To start with, similar to Section~\ref{sec:distributed:dis_mechanism:location-based}, we estimate the processing cost of a subscription bolt with keyword set $W_i$ as follows:
\begin{equation}
\label{equ:keyword_set_cost}
C(W_i) = N(W_i) \times p(W_i)
\end{equation}
where $N(W_i)$ is the number subscriptions whose keywords overlap with $W_i$ and $p(W_i)$ is the probability that a random incoming message contains at least one keyword from $W_i$.
Note that $p(W_i)$ can be estimated from historical message workloads.
We then define \textit{variance} to measure the workload balance as follows:
\begin{equation}
Var = \frac{1}{N_{sb}} \times \sum_{i = 0}^{N_{sb}-1}{ (C(W_i) - C_{\mu} )^2 }
\end{equation}
where $C_{\mu}$ is the average cost over all keyword sets.

We are now ready to present the keyword partition problem: given a keyword vocabulary, denoted as $\mathcal{V}$, where keywords are ordered by their frequencies, we aim to partition $\mathcal{V}$ into $N_{sb}$ subsets, i.e., $W_0, W_1,...,W_{N_{sb}-1}$, each covering a number of consecutive keywords, such that the following conditions are satisfied: (1). $\forall{0 \leq i < j \leq N_{sb}-1}, W_i \cap W_j = \emptyset$, (2). $\cup_{0 \leq i \leq N_{sb}-1}{ W_i} = \mathcal{V}$, (3). $Var$ is minimized.

Since the search space is exponentially large, we resort to an efficient heuristic algorithm to solve this problem, which is demonstrated in Algorithm~\ref{alg:keyword_partition}.
The idea is to recursively partition the keyword set into two subsets by carefully selecting a splitting keyword to balance the cost between two subsets.
The parameter $maxdepth$ here (Line~\ref{alg:keyword_partition:maxdepth}) is used to control the number of partitions we need.
Once we get keyword partitions, a subscription $s$ will be allocated to the $i^{th}$ bolt as long as $s.\psi \cap W_i \neq \emptyset$.
Similarly, a message $m$ will also be distributed to the $i^{th}$ bolt if $m.\psi \cap W_i \neq \emptyset$.

\begin{algorithm}[tb]
\SetVline 
\SetFuncSty{textsf}
\SetArgSty{textsf}
\small
\caption{\textsf{KeywordPartition}($W$, $depth$)}
\label{alg:keyword_partition}
\Input
{
    $W:$ current keyword set\\
    $depth: $ the depth of recursion
}
\Output{A keyword partition}

\If{$depth == maxdepth$}
{
	\State{\Return}
}\label{alg:keyword_partition:maxdepth}
\State{Find a splitting keyword $w$ from $W$ which leads to minimum $|C(W_1) - C(W_2)|$, where $W_1$ contains all the words on the left of $w$ (inclusive) while $W_2$ contains all the keywords on the right of $w$ (exclusive)}\label{alg:keyword_partition:find_split}
\State{\textsf{KeywordPartition}($W_1$, $depth + 1$)}
\State{\textsf{KeywordPartition}($W_2$, $depth + 1$)}
\end{algorithm}

\begin{example}
Figure~\ref{fig:distributed:keyword-based} shows an example of keyword-based method.
Assume there are 10 keywords in the vocabulary, and there are four subscription bolts.
Based on the keyword-based distribution method, subscription $s$ is allocated to bolt 1, bolt 2 and bolt 3 while message $m$ is distributed to bolt 3 and bolt 4.
\end{example}

\begin{figure}[t]
\centering
\includegraphics[width=0.8\columnwidth]{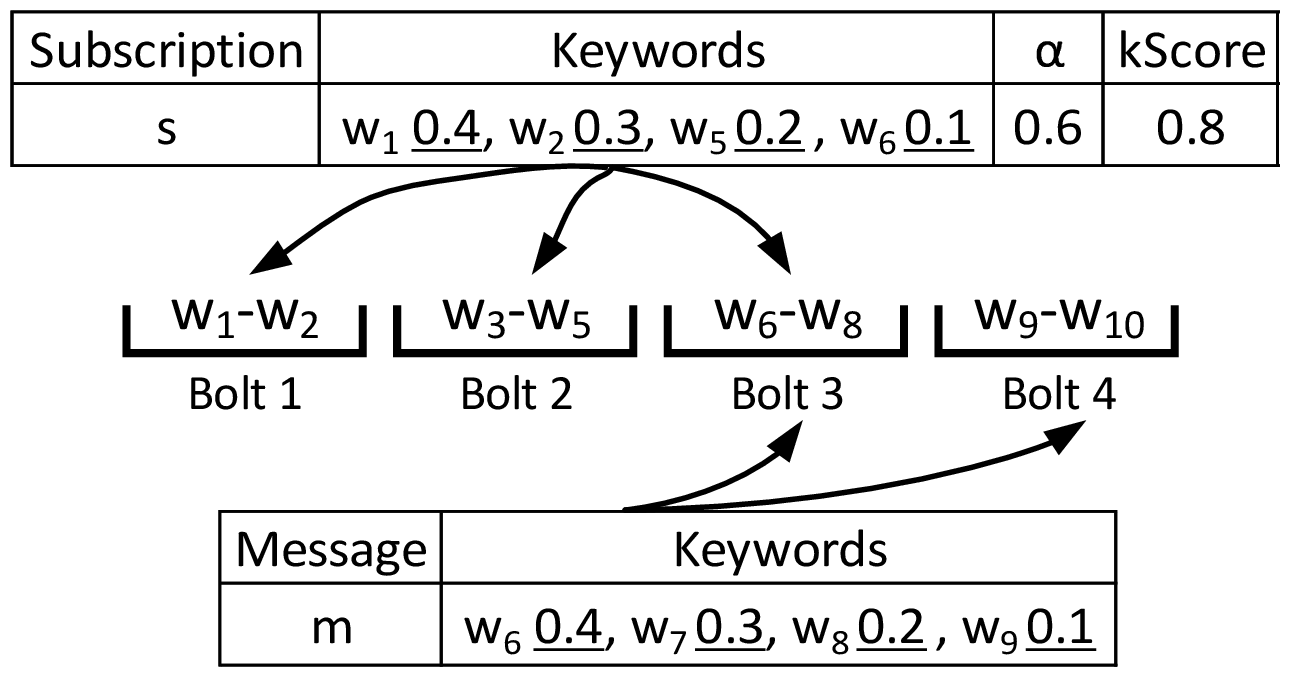}
\caption{\small{Keyword-based example}}
\label{fig:distributed:keyword-based}
\end{figure}

\noindent \textbf{Analysis.}
The time complexity of Algorithm~\ref{alg:keyword_partition} is $O(|\mathcal{V}| \times maxdepth + |\mathcal{S}|)$ if we use a linear scan to find the splitting keyword (Line~\ref{alg:keyword_partition:find_split}).
Each subscription is distributed to at most $|s.\psi|$ bolts and each message is distributed to at most $|m.\psi|$ bolts.
Thus, the replication ratio is at most $|s.\psi|$ while the average communication cost is at most $\min(|m.\psi|, N_{sb})$.

\subsubsection{Prefix-based Method}
Compared to both hashing-based and location-based methods, keyword-based method can reduce the communication cost significantly, especially when we have a large number of subscription bolts (i.e., large $N_{sb}$), since each message is only distributed to keyword-overlapping bolts.
However, the pitfall is that we have to duplicate subscriptions among different subscription bolts to ensure the correctness of our algorithms, which often leads to poor throughput as shown in the experimental part.
To further reduce the subscription replications as well as improve throughput, we propose a light-weighted prefix-based method, which only distributes subscriptions based on their keyword prefixes, rather than all the keywords.
However, we cannot employ the same prefix as defined in Definition~\ref{def:loc_aware_prefix}, because the value of $\ssimub(s.\rho,m.\rho)$ is not available beforehand.
To overcome this issue, we define a new textual similarity threshold $\tsimlb(s.\psi)$ as follows:
\begin{equation}
\label{equ:tsimlb_loose}
\tsimlb(s.\psi) = \frac{\kscore(s)}{1 - s.\alpha} - \frac{s.\alpha}{1 - s.\alpha} \cdot 1.0
\end{equation}
where we always assume $\ssimub(s.\rho,m.\rho)$ to be $1.0$ regardless of the actual location of the incoming message.
Then, we propose a \textit{loose prefix}, denoted as $\pref(s)$, which is defined as follows.
\begin{definition}[Loose Prefix]
\label{def:loose_prefix}
Given a subscription $s$ and a textual similarity threshold $\tsimlb(s.\psi)$,
we use $\pref(s) = s.\psi[1:p]$ to denote the loose prefix of $s$,
where $p = \argmin_{i}{ \{ \wtsum(s.\psi[i+1]) < \tsimlb(s.\psi) \} }$.
\end{definition}
We remark that this definition is similar to Definition~\ref{def:loc_aware_prefix}, except that we replace $\tsimlb(s.\psi,m.\psi)$ with $\tsimlb(s.\psi)$, such that the loose prefix $\pref(s)$ is irrelevant to the message.
The following lemma guarantees the correctness of our distribution mechanism based on loose prefix.
\begin{lemma}
\label{lemma:distirbute_prefix}
Our algorithm is correct as long as we distribute each subscription $s$ (resp. message $m$) only to the subscription bolts whose keyword sets intersect with $\pref(s)$ (resp. $m.\psi$).
\end{lemma}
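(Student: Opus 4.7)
The plan is to unpack what ``correctness'' means operationally: for any subscription $s$ and any message $m$ such that $m$ should affect $s$'s top-$k$ results (equivalently, $s$ is not skippable w.r.t. $m$ under the individual pruning of Section~\ref{sec:disseminate:ipt}), we must guarantee that at least one subscription bolt receives both $s$ and $m$. So the goal reduces to showing that under the stated distribution rule, every such pair co-locates on some bolt $i$, i.e., there exists a keyword that belongs simultaneously to $\pref(s)$, to $m.\psi$, and to the bolt's keyword set $W_i$.

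First I would relate the loose prefix $\pref(s)$ to the location-aware prefix $\pref(s|m)$ used by the correctness proof in Lemma~\ref{lemma:prune_s}. Since $\ssimub(s.\rho,m.\rho)\le 1.0$, Equation~\ref{equ:tsimlb} and Equation~\ref{equ:tsimlb_loose} give
\begin{equation*}
\tsimlb(s.\psi) \;\le\; \tsimlb(s.\psi,m.\psi).
\end{equation*}
Because $\wtsum(s.\psi[i])$ is non-increasing in $i$, the argmin in Definition~\ref{def:loose_prefix} is at least as large as the one in Definition~\ref{def:loc_aware_prefix}, so $\pref(s)\supseteq \pref(s|m)$. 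This is the key monotonicity step, and establishing it cleanly is the main obstacle; the intuition is that a weaker textual bound (due to the pessimistic assumption $\ssimub=1.0$) forces us to keep more prefix keywords in order to retain the filtering guarantee.

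Next I would apply the contrapositive of Lemma~\ref{lemma:prune_s}: if $s$ is not skippable w.r.t. $m$, then $\pref(s|m)\cap m.\psi\neq\emptyset$. Combined with $\pref(s)\supseteq \pref(s|m)$, this yields a witness keyword $w\in\pref(s)\cap m.\psi$. Since the keyword partition $W_0,\dots,W_{N_{sb}-1}$ produced by Algorithm~\ref{alg:keyword_partition} is a disjoint cover of $\mathcal{V}$, there exists a unique bolt index $i$ with $w\in W_i$. By the distribution rule, bolt $i$ receives $s$ (because $w\in\pref(s)\cap W_i$) and receives $m$ (because $w\in m.\psi\cap W_i$), so the pair co-locates.

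Finally I would close by noting that inside bolt $i$ the centralized dissemination algorithm (Algorithm~\ref{alg:disseminate}) runs unchanged on the local subscription index, so every genuine top-$k$ update for $s$ triggered by $m$ is detected. Conversely, subscriptions or messages routed away from bolt $i$ correspond exclusively to pairs for which $\pref(s)\cap m.\psi=\emptyset$, and such pairs are already guaranteed skippable by the chain $\pref(s)\supseteq \pref(s|m)$ plus Lemma~\ref{lemma:prune_s}. Hence no affected $(s,m)$ pair is missed, which is exactly the correctness required.
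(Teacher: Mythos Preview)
Your proposal is correct and follows essentially the same approach as the paper: both arguments hinge on showing that whenever $m$ can affect $s$, one has $\pref(s)\cap m.\psi\neq\emptyset$, so the witness keyword forces co-location in the unique bolt containing it. Your version makes the containment $\pref(s)\supseteq\pref(s|m)$ explicit and then invokes Lemma~\ref{lemma:prune_s} by contrapositive, whereas the paper simply says the skippability when $\pref(s)\cap m.\psi=\emptyset$ follows ``based on a similar deduction from the proof of Lemma~\ref{lemma:prune_s}''; the underlying logic is identical.
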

\begin{proof}
Based on the distribution method, it is immediate to conclude that if $\pref(s) \cap m.\psi \neq \emptyset$ (and thus $m$ might be a candidate of $s$), $m$ must be distributed to at least one subscription bolt where $s$ also resides.
On the other hand, if $\pref(s) \cap m.\psi = \emptyset$, $m$ must not be a candidate of $s$.
This can be easily proved based on a similar deduction from the proof of Lemma~\ref{lemma:prune_s}.
Thus, there is no need to distribute the subscription $s$ to the subscription bolts whose keyword sets overlap with $s.\psi - \pref(s)$.
\end{proof}

\begin{example}
Following the same example in Figure~\ref{fig:distributed:keyword-based}, Figure~\ref{fig:distributed:prefix-based} depicts an example of prefix-based method.
From Equation~\ref{equ:tsimlb_loose}, we get $\tsimlb(s.\psi) = \frac{0.8}{1-0.6} - \frac{0.6}{1-0.6} = 0.5$, and therefore $\pref(s) = \{w_1,w_2\}$.
Thus, subscription $s$ is only allocated to bolt 1 while message $m$ is still distributed to bolt 3 and bolt 4.
It is obvious that the computation between $s$ and $m$ is avoided since $m$ cannot be the \topk results of $s$.
\end{example}

\begin{figure}[t]
\centering
\includegraphics[width=0.7\columnwidth]{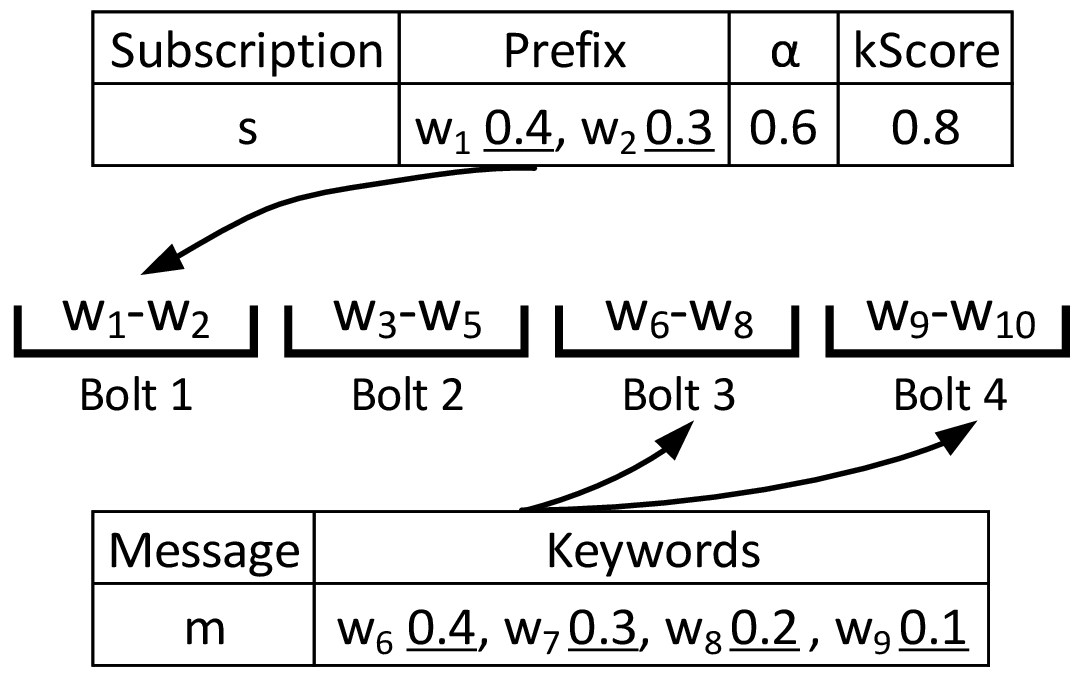}
\caption{\small{Prefix-based example}}
\label{fig:distributed:prefix-based}
\end{figure}

\noindent \textbf{Analysis.}
The replication ratio is determined by $|\pref(s)|$ based on Lemma~\ref{lemma:distirbute_prefix}, which is usually smaller than $|s.\psi|$.
Meanwhile, the average communication cost is bounded by $\min(|m.\psi|, N_{sb})$, which is the same as keyword-based method.
As shown in the experiments, the prefix-based method not only can reduce the replication ratio but also can improve the system throughput with a large margin compared to keyword-based method.
Note that Algorithm~\ref{alg:keyword_partition} is also used here to get keyword partitions, except that the value of $N(W_i)$ in Equation~\ref{equ:keyword_set_cost} needs to be recomputed since we only use the loose prefix of subscription.
Besides, a subscription may need to be reallocated among subscription bolts when $\pref(s)$ changes due to the updating of $\kscore(s)$.
We employ a lazy reallocation strategy, where the reallocation is triggered only when $\pref(s)$ needs to cover more keywords.

\noindent\textbf{Remark.}
We remark that all the distribution mechanisms discussed above are light-weighted indexes employed in the distribution bolts in order to facilitate the distribution of subscriptions and messages.
This is different from the subscription index built in each subscription bolt, which aims to accelerate \topk dissemination regarding each incoming message.

\noindent \textbf{Discussions.}
We summary all the distribution mechanisms discussed above in Table~\ref{tb:distribution_mechanisms_summary}, where we report the replication ratio of subscriptions and average communication cost w.r.t. each message, both of which are the main factors of system throughput.
From the table, we notice that
when $|m.\psi| < N_{sb}$, both keyword-based and prefix-based methods are likely to benefit a lot from the large reduction in communication cost, and are expected to have higher throughput;
however, when $|m.\psi| > N_{sb}$, both keyword-based and prefix-based methods may not perform very well because they cannot reduce communication cost by a large margin, while suffered from the extra cost triggered by replicates.
In Section~\ref{sec:exp:distributed}, we have conducted detailed experiments to evaluate the performances of different algorithms under different settings, which further verify our analysis above.
Furthermore, workload balance is also a main factor contributing to the system throughput.
The hashing-based method can achieve best balance due to the nature of uniform hashing.
However, for all the other three methods, we observe that the workload balance can also be well maintained, since we take into consideration both the subscription workload and message workload during distribution.

\begin{table}[t]
  	\caption{Summary of distribution mechanisms (worst case)}\label{tb:distribution_mechanisms_summary}
  	\centering
    \begin{tabular}{|c|c|c|c|}
      \hline
      \textbf{Method} & \textbf{Replica. ratio} & \textbf{Comm. cost} \\ \hline \hline
      Hashing-based & $1$ & $N_{sb}$ \\ \hline
      Location-based & $1$ & $N_{sb}$ \\ \hline
      Keyword-based & $|s.\psi|$ & $\min(|m.\psi|, N_{sb})$ \\ \hline
      Prefix-based & $|\pref(s)|$ & $\min(|m.\psi|, N_{sb})$ \\ \hline
    \end{tabular}
\end{table}

\subsection{Maintenance}
\label{sec:distributed:maintenance}
\storm provides a graphical interface to monitor the workload of each spout/bolt\footnote{http://storm.apache.org/releases/0.10.0/STORM-UI-REST-API.html}.
We fork a background process to access the interface and monitor the workload of each component automatically and periodically.
When some component becomes overloaded, we may simply increase its parallelism, while decreasing its parallelism if it becomes idle.
The subscription bolts discussed in the above distribution mechanisms can be easily further partitioned or merged according to the system workload.
For example, for the location-based method, we can simply split a leaf node if it becomes overloaded.

\section{Experiments}
\label{sec:exp}

\subsection{Centralized Evaluations}
\label{sec:exp:single}

In this section, we conduct extensive experiments to verify the efficiency and effectiveness of \ours in a single machine.
All experiments are implemented in \verb!C++!, and conducted on a PC with 3.4GHz Intel Xeon 2 cores CPU and 32GB memory running Red Hat Linux.
Following the typical setting of \pubsub systems (e.g.,~\cite{DBLP:conf/icde/HuLLFT15,DBLP:conf/icde/ChenCCT15}), we assume indexes fit in main memory to support real-time response.

\begin{table}[t]
  	\caption{Datasets Statistics}\label{tb:statistics}
  	\centering
    \begin{tabular}{|c|c|c|c|}
      \hline
      \textbf{Datasets} & \textbf{TWEETS} & \textbf{GN} & \textbf{YELP} \\ \hline \hline
      \# of msg & 12.7M & 2.2M & 1.6M \\ \hline
      Vocabulary size & 1.7M & 208K & 85K \\ \hline
      Avg. \# of msg keywords & 9 & 7 & 37 \\ \hline
      Size in GB & 2.26 & 0.3 & 1.04 \\ \hline
    \end{tabular}
\end{table}

\begin{figure}[t]
\centering
	\begin{minipage}[b]{\linewidth}
		\centering
		\includegraphics[width=0.96\columnwidth]{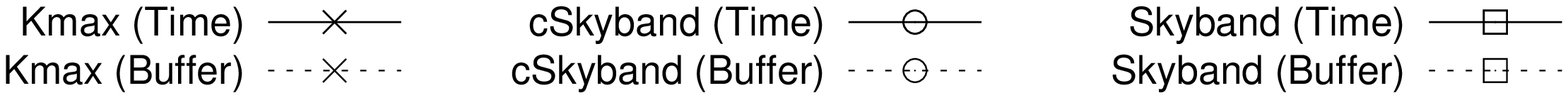}%
	 \end{minipage}	
	 \vfill
	\begin{minipage}[t]{\linewidth}	
		\centering
		\subfigure[\kmax]{
		    \label{fig:exp2:tuning_tweets_kmax_tmp} 
		    \includegraphics[width=0.48\columnwidth]{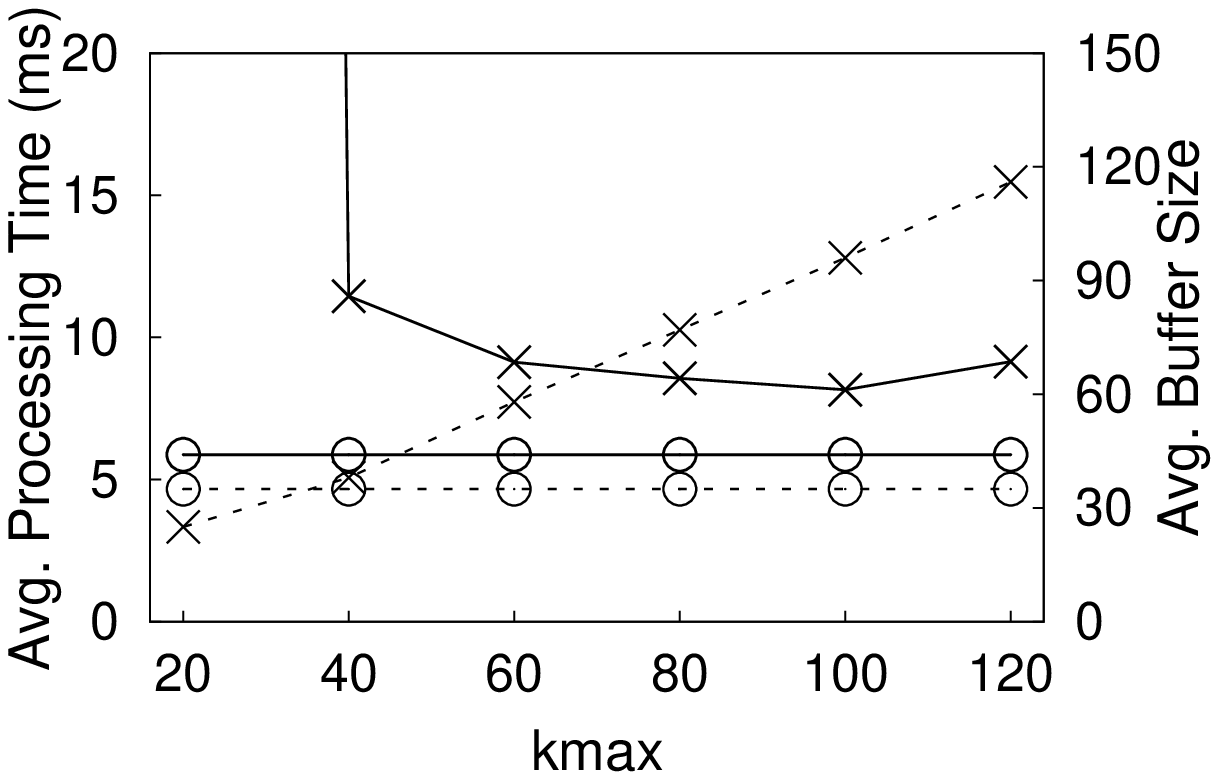}}
		\hfill
		\subfigure[\naive]{
		    \label{fig:exp2:tuning_tweets_skyband_tmp} 
		    \includegraphics[width=0.48\columnwidth]{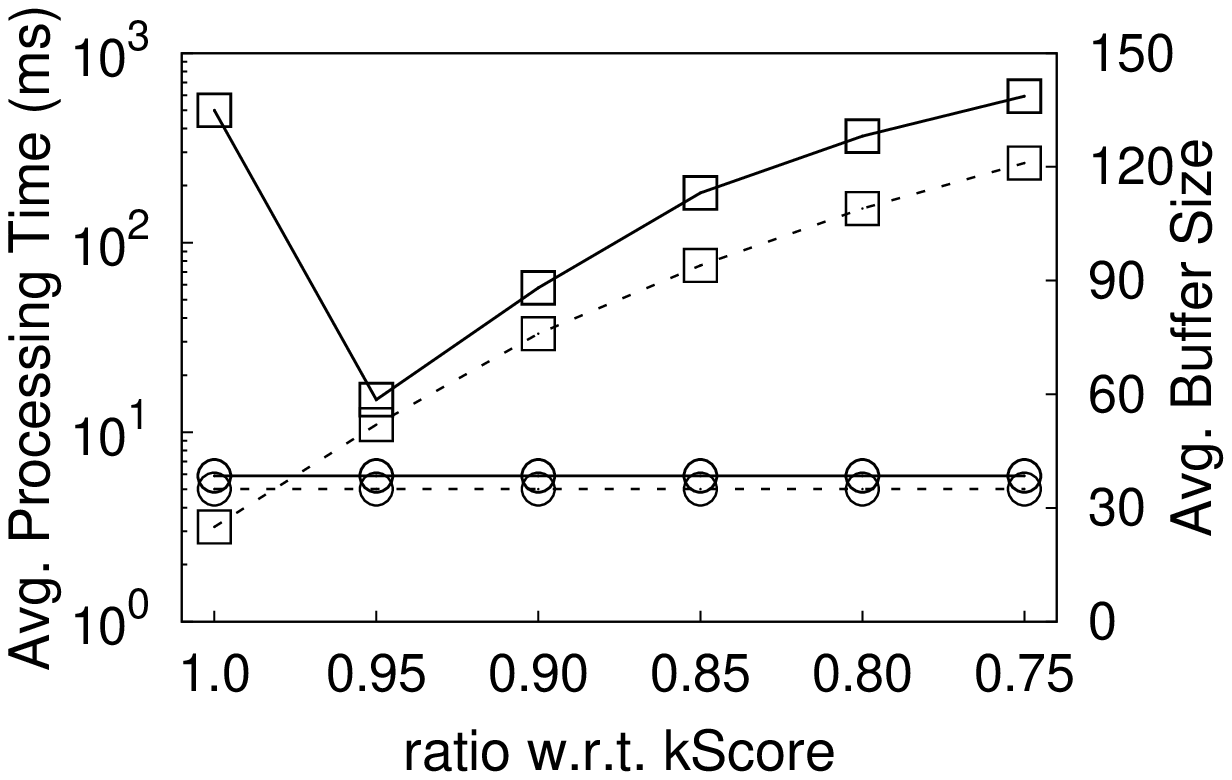}}
		\caption{Tuning baseline algorithms}	
		\label{fig:exp2:tuning}
	\end{minipage}%
\end{figure}

\begin{figure}[t]
\centering
	\begin{minipage}[t]{0.5\linewidth}	
		\centering
		\includegraphics[width=0.96\columnwidth]{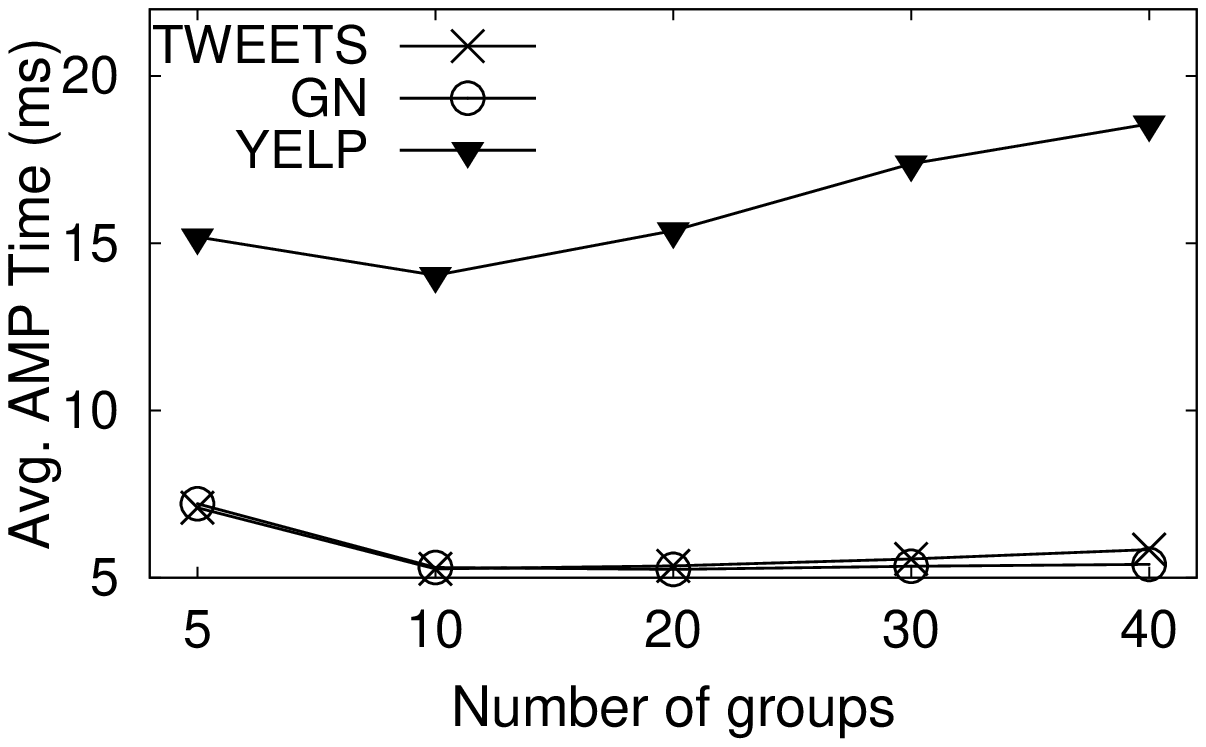}
		\caption{\small{Vary \# groups}}
		\label{fig:exp1:tuning_group}	
	\end{minipage}%
	\hfill
	\begin{minipage}[t]{0.5\linewidth}	
		\centering
		\includegraphics[width=0.96\columnwidth]{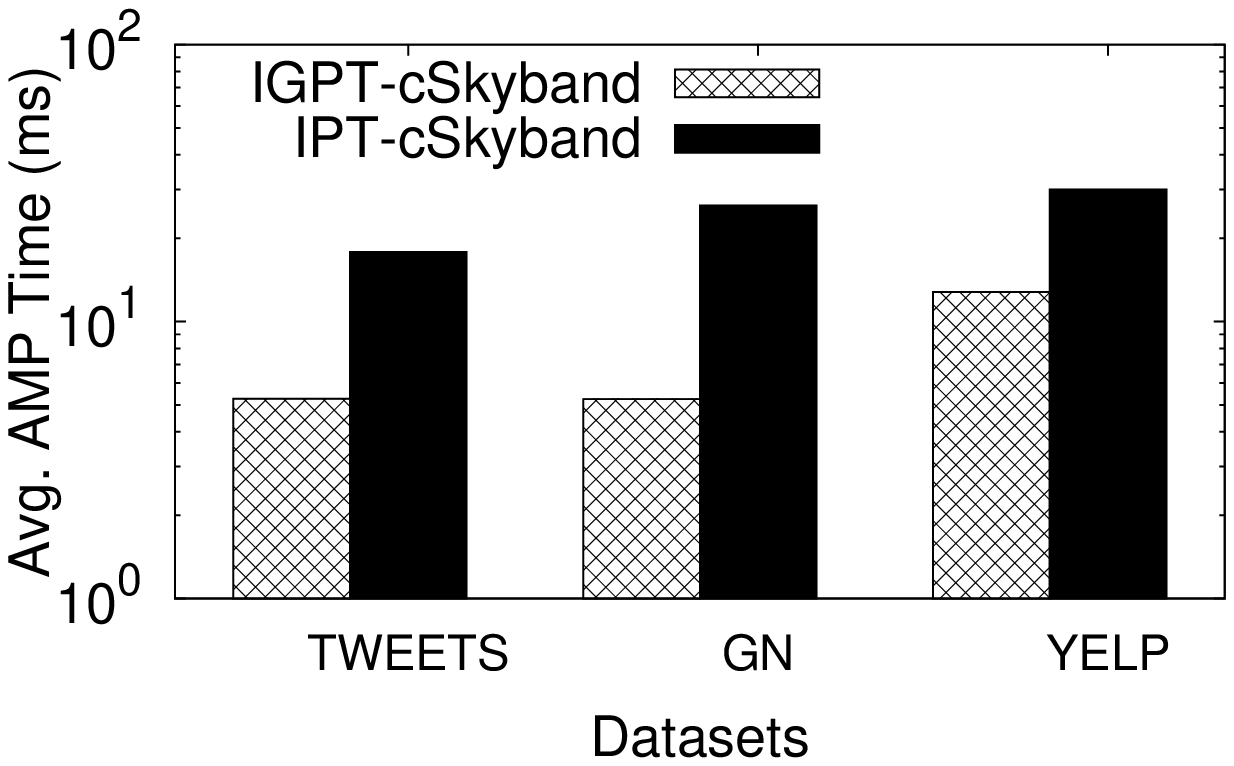}
		\caption{\small{Pruning methods}}
		\label{fig:exp1:compare_pruning_techniques}	
	\end{minipage}%
\end{figure}

\subsubsection{Experimental Setup}
\label{sec:exp:single:setup}
As this is the first work to study \topk \sk publish\slash subscribe over sliding window,
we extend previous work~\cite{DBLP:conf/icde/ChenCCT15} to sliding window.
We implement and compare the following algorithms.
For \moduleone:
\begin{itemize}
\item \textbf{\ciq.} The subscription index proposed in~\cite{DBLP:conf/icde/ChenCCT15}
\footnote{The time decay function and related index in \ciq are removed to adapt to our problem.}.
\item \textbf{\igpt.} The subscription index proposed in our paper, which combines both \textbf{I}ndividual and \textbf{G}roup \textbf{P}runing \textbf{T}echniques.
\end{itemize}
For \moduletwo:
\begin{itemize}
\item \textbf{\naive.} The \skyband algorithm proposed in~\cite{DBLP:conf/sigmod/MouratidisBP06}.
\item \textbf{\kmax.} The \kmax algorithm proposed in~\cite{DBLP:conf/icde/YiYYXC03}.
\item \textbf{\csky.} The cost-based \skyband algorithm proposed in our paper.
\end{itemize}
Note that our \ours algorithm is the combination of \igpt and \csky.
We apply \irtree~\cite{DBLP:journals/pvldb/CongJW09} to index messages.

\noindent \textbf{Datasets.}
Three datasets are deployed for experimental evaluations.
\tweets is a real-life dataset collected from Twitter~\cite{DBLP:conf/kdd/LiWWF13}, containing $12.7M$ tweets with geo-locations from 2008 to 2011.
\gn is obtained from the US Board on Geographic Names\footnote{http://geonames.usgs.gov} in which each message is associated with a geo-location and a short text description.
\yelp is obtained from Yelp\footnote{http://www.yelp.com/}, which contains user reviews and check-ins for thousands of businesses.
The statistics of three datasets are summarized in Table~\ref{tb:statistics}.

\noindent \textbf{Subscription workload.}
We generate \topk subscriptions based on the above datasets.
For each dataset, $1M$ \gt messages are randomly selected.
For each selected message, we randomly pick $j$ keywords as subscription keywords with $1 \leq$ $j \leq 5$.
The weight of each keyword is computed according to \textit{tf-idf} weighting scheme\footnote{https://en.wikipedia.org/wiki/Tf–idf}.
The subscription location is the same as message location.
For each subscription, the preference parameter $\alpha$ is randomly selected between $0$ and $1$,
while the default value of $k$, i.e., the number of \topk results, is set to 20.

\noindent \textbf{Message workload.}
Our simulation starts when the sliding window with default size of $1M$ is full and continuously runs for $100,000$ arriving messages over the sliding window.

We report the average processing time, including average arriving message processing time (i.e.,~\textbf{\amp}) and average expired message processing time (i.e.,~\textbf{\emp}), as well as the index size.
By default, the number of $\alpha$-partition groups is set to 10.
The maximum number of subscriptions that can be stored in each cell is set to 1000.

\subsubsection{Experimental Tuning}
\label{sec:exp:single:tuning}

\noindent \textbf{Tuning \kmax and \naive.}
In the first set of experiments, we tune the performance of both \kmax and \naive techniques in Figure~\ref{fig:exp2:tuning} on \tweets dataset, where \igpt algorithm is employed for message dissemination.
For better understanding, we evaluate average processing time (denoted as solid line) and average buffer size (denoted as dotted line) in the same figure.
We also show the results of our \igptcsky algorithm under default settings which remains unchanged.
For \kmax algorithm (Figure~\ref{fig:exp2:tuning_tweets_kmax_tmp}), we vary \kmax from $20$ to $120$.
It is noticed that a small \kmax leads to high \topk re-evaluation cost while a large \kmax results in high message dissemination cost and buffer maintenance cost.
We set 60 as the default \kmax value since it strikes a good trade-off between performance and buffer size (i.e., memory cost).
For \naive algorithm, we vary the threshold score $s.\theta$ from $1.0\times$\kscore~to $0.75\times$\kscore~where the smaller value leads to larger buffer size.
It is noticed that when the ratio is $1.0$ which is the original setting in~\cite{DBLP:conf/sigmod/MouratidisBP06}, the performance of \naive is poor due to the frequent re-evaluations.
For comparison fairness, $s.\theta$ is set to its sweet point $0.95 \times$\kscore~for \naive in the following experiments.
It is worth mentioning that our \csky always outperforms \naive under all settings because \csky can tune a best threshold for each individual subscription based on the cost model while there is no sensible way to tune \naive for millions of subscriptions.
The similar trends are also observed on other datasets.

\begin{figure}[t]
	\centering
	\begin{minipage}[t]{\linewidth}	
		\centering
		\subfigure[Arriving msg processing]{
	    	\label{fig:exp3:compare_disseminate_amp} 
	    	\includegraphics[width=0.47\columnwidth]{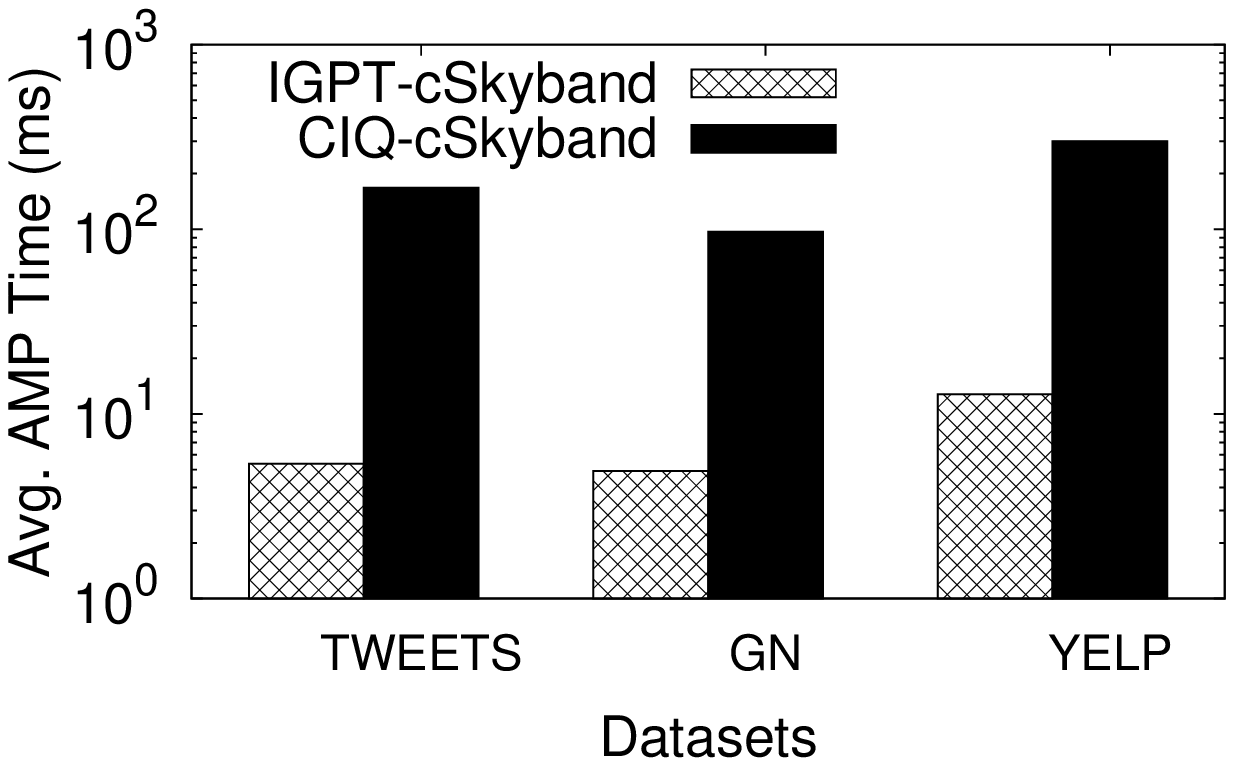}}
	    \hfill
		\subfigure[Memory cost]{
		    \label{fig:exp3:compare_disseminate_mem} 
		    \includegraphics[width=0.47\columnwidth]{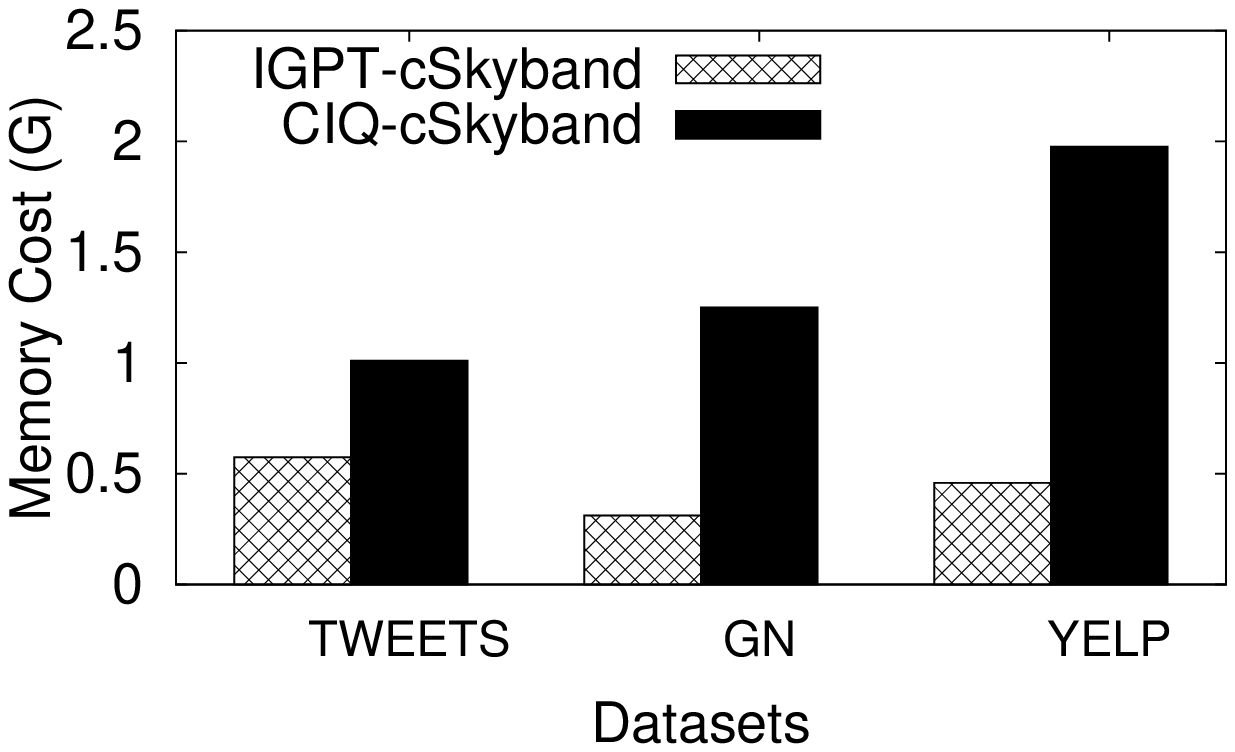}}
		\caption{Compare dissemination algorithms}	
		\label{fig:exp3:compare_disseminate}
	\end{minipage}%
\end{figure}

\begin{figure}[t]
	\centering
	\begin{minipage}[t]{\linewidth}	
		\centering
		\includegraphics[width=0.7\columnwidth]{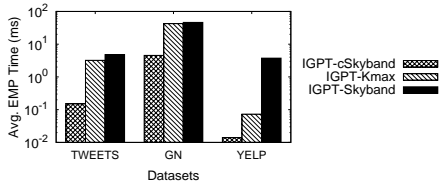}
		\caption{Compare re-evaluation algorithms}	
		\label{fig:exp4:compare_refill_emp}	
	\end{minipage}
\end{figure}

\noindent \textbf{Vary the number of groups in $\alpha$-partition.}
Figure~\ref{fig:exp1:tuning_group} reports the \amp time of \ours algorithm against three datasets where the number of groups varying from 5 to 40. It is shown that we can achieve a good trade-off between the group filtering effectiveness and group checking costs
when the number of groups is set to 10, which is used as default value in the following experiments.

\noindent \textbf{Effect of pruning techniques.}
In this experiment, we compare the \amp time of different pruning techniques employed in our \moduleone.
Specifically, we compare \igpt with \ipt, which only employs individual pruning technique in Figure~\ref{fig:exp1:compare_pruning_techniques}.
We observe that \igpt algorithm can achieve at least three times improvement compared with \ipt over all the datasets, which verifies the efficiency of our group pruning techniques.
This is mainly because the group pruning technique can skip the whole group without the need to check individual subscription, and can terminate early within a group.
In the following experiments, we always use \igpt as our dissemination algorithm.

\begin{table}[t]
  \caption{Average buffer size of different re-evaluation algorithms (Unit: 1)}
  \label{tb:exp4:buffer_size}
  \centering
    \begin{tabular}{|c|c|c|c|c}
      \hline
      \textbf{Algorithm} & \tweets & \gn & \yelp  \\ \hline \hline
      \igptcsky & \textbf{35} & \textbf{33} & \textbf{28}  \\ \hline
      \igptkmax & 58  & 58 & 59 \\ \hline
      \igptnaive & 52  & 58 & 56 \\ \hline
    \end{tabular}
\end{table}

\subsubsection{Performance Evaluation}

%

\begin{figure*}[t]
	\centering
	\begin{minipage}[b]{1.0\linewidth}
		\centering
		\includegraphics[width=0.45\columnwidth]{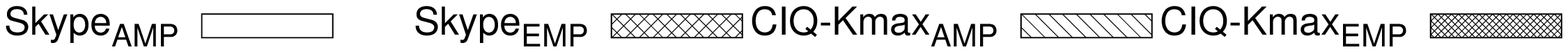}%
	\end{minipage}	
	\vfill
	\begin{minipage}[t]{0.495\linewidth}	
		\centering
		\subfigure[\tweets]{
		    \label{fig:exp6:tweets_knum_tmp} 
		    \includegraphics[width=0.48\columnwidth]{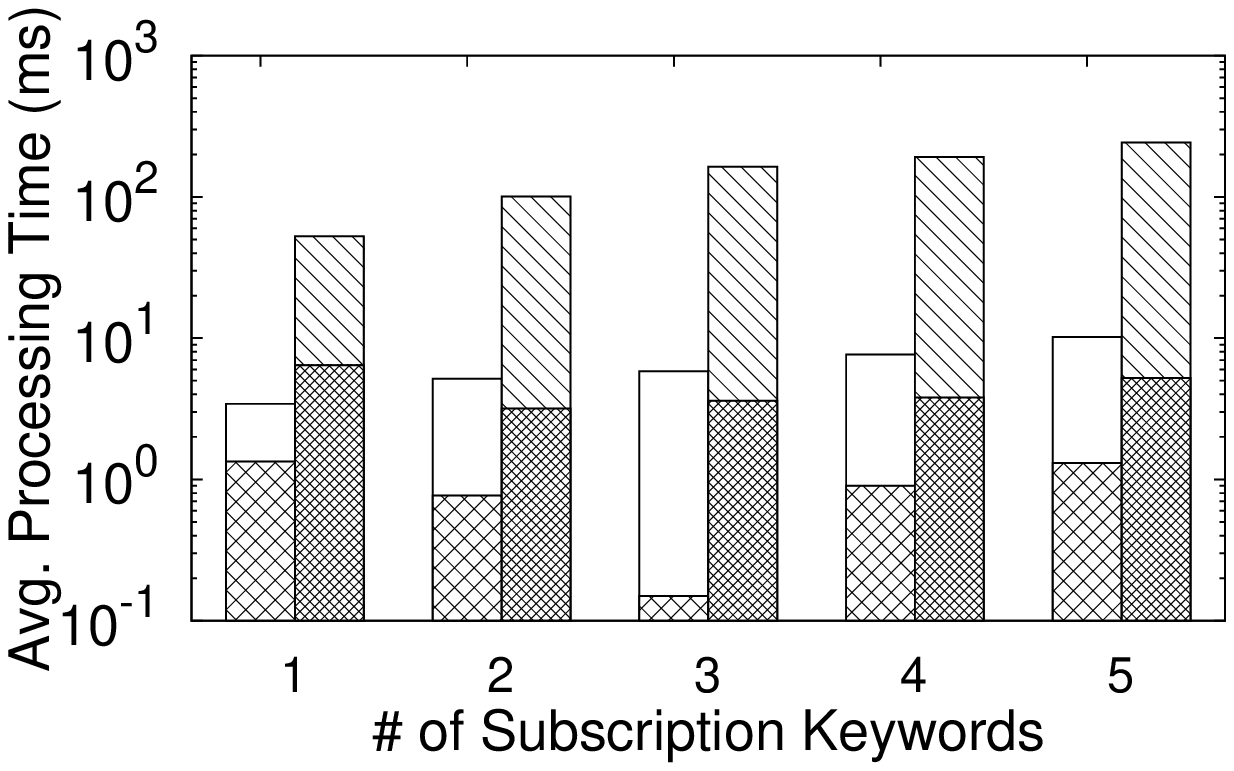}}
		\hfill
		\subfigure[\gn]{
		    \label{fig:exp6:gn_knum_tmp} 
		    \includegraphics[width=0.48\columnwidth]{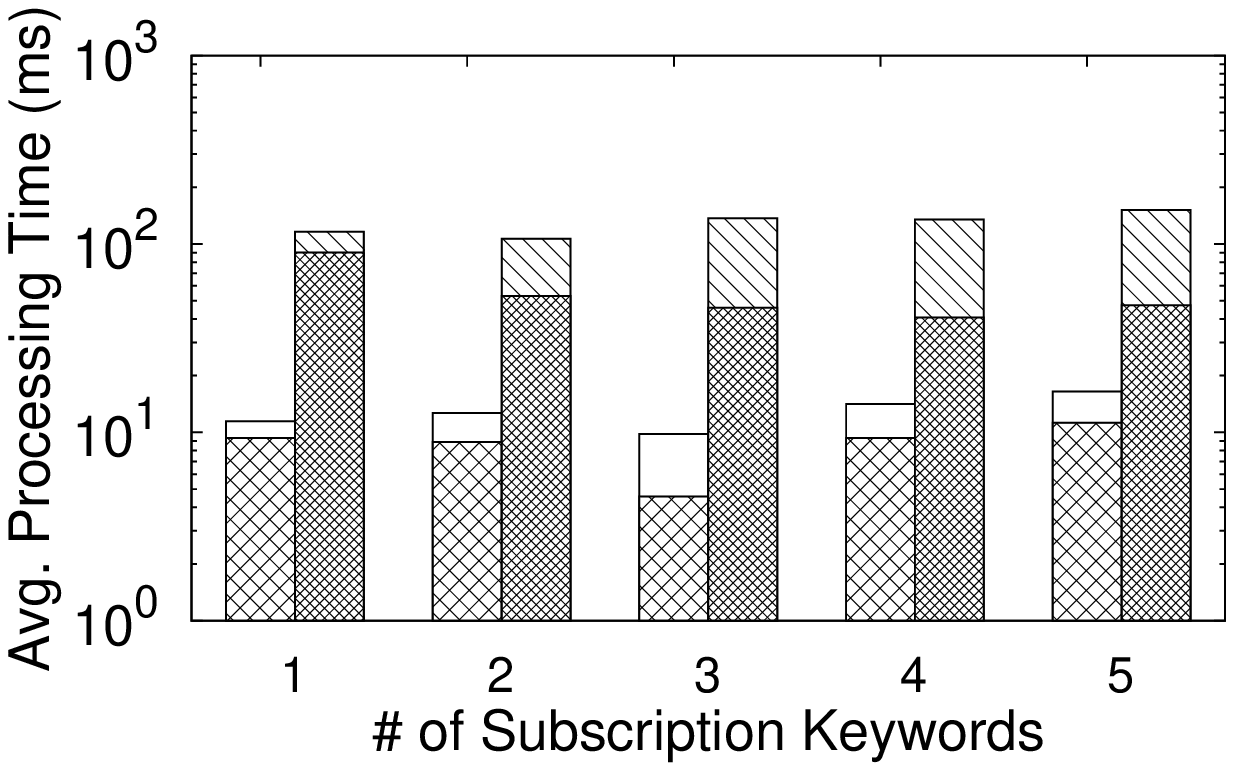}}
		\caption{Effect of number of subscription keywords}	
		\label{fig:exp6:knum_tmp}
	\end{minipage}
	\hfill
	\begin{minipage}[t]{0.495\linewidth}	
		\centering
		\subfigure[\tweets]{
		    \label{fig:exp9:tweets_topk_tmp} 
		    \includegraphics[width=0.48\columnwidth]{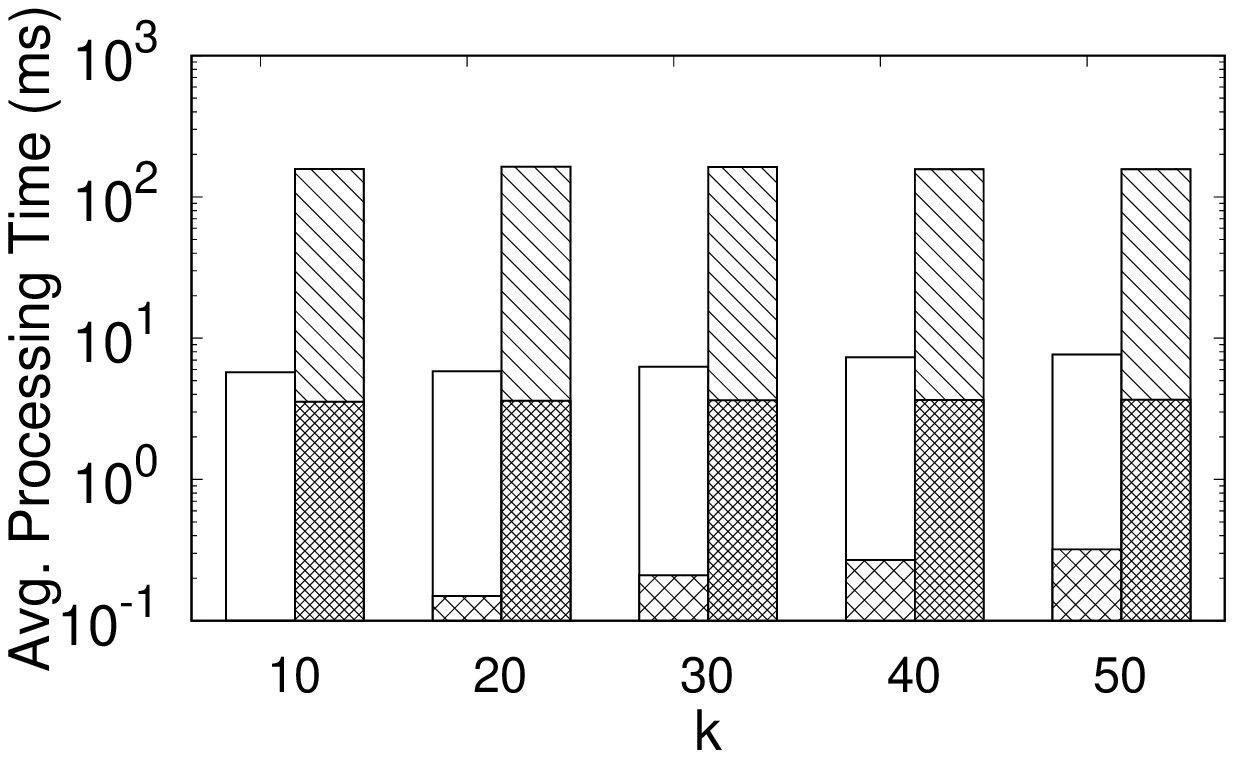}}
		\hfill
		\subfigure[\gn]{
		    \label{fig:exp9:gn_topk_tmp} 
		    \includegraphics[width=0.48\columnwidth]{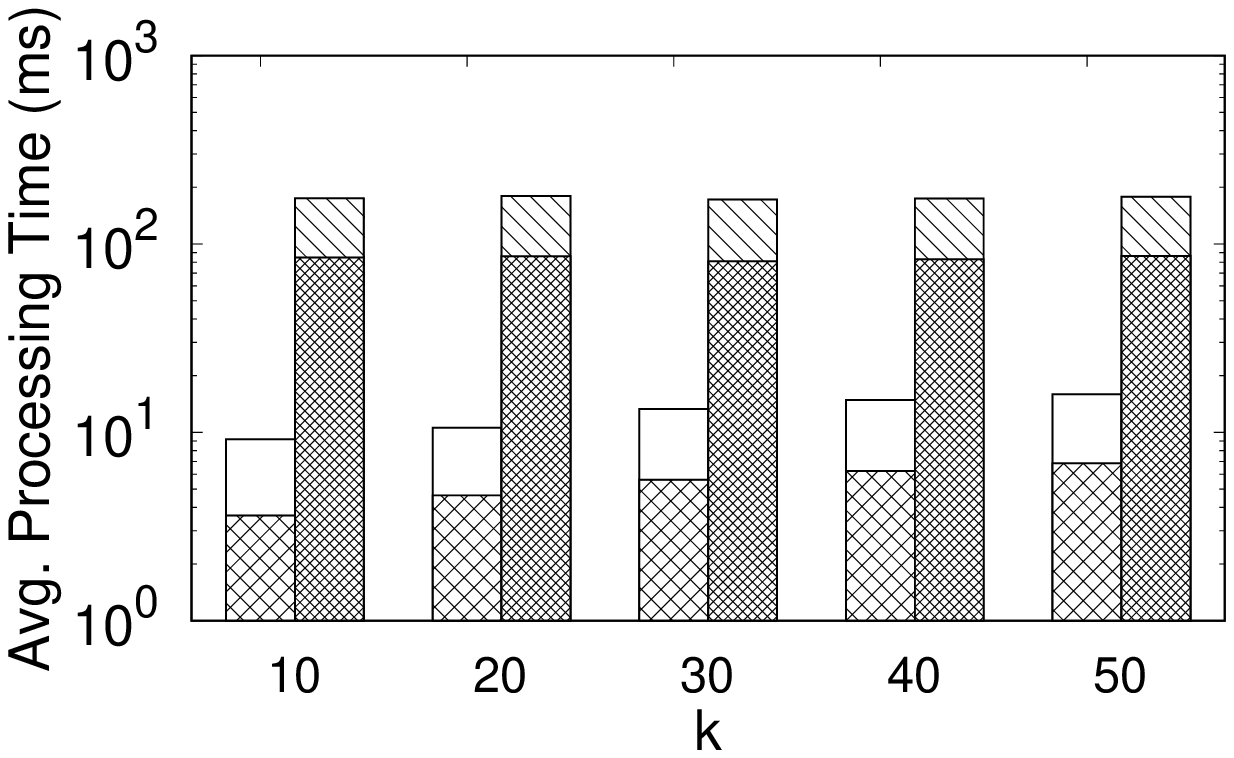}}
		\caption{Effect of number of \topk results}	
		\label{fig:exp9:topk_tmp}
	\end{minipage}%
\end{figure*}

\begin{figure*}[t]
\centering
	\begin{minipage}[b]{1.0\linewidth}
		\centering
		\includegraphics[width=0.45\columnwidth]{exp6_draw_title.eps}%
	 \end{minipage}	
	 \vfill
	\begin{minipage}[t]{0.495\linewidth}	
		\centering
		\subfigure[\tweets]{
		    \label{fig:exp8:tweets_qnum_tmp} 
		    \includegraphics[width=0.48\columnwidth]{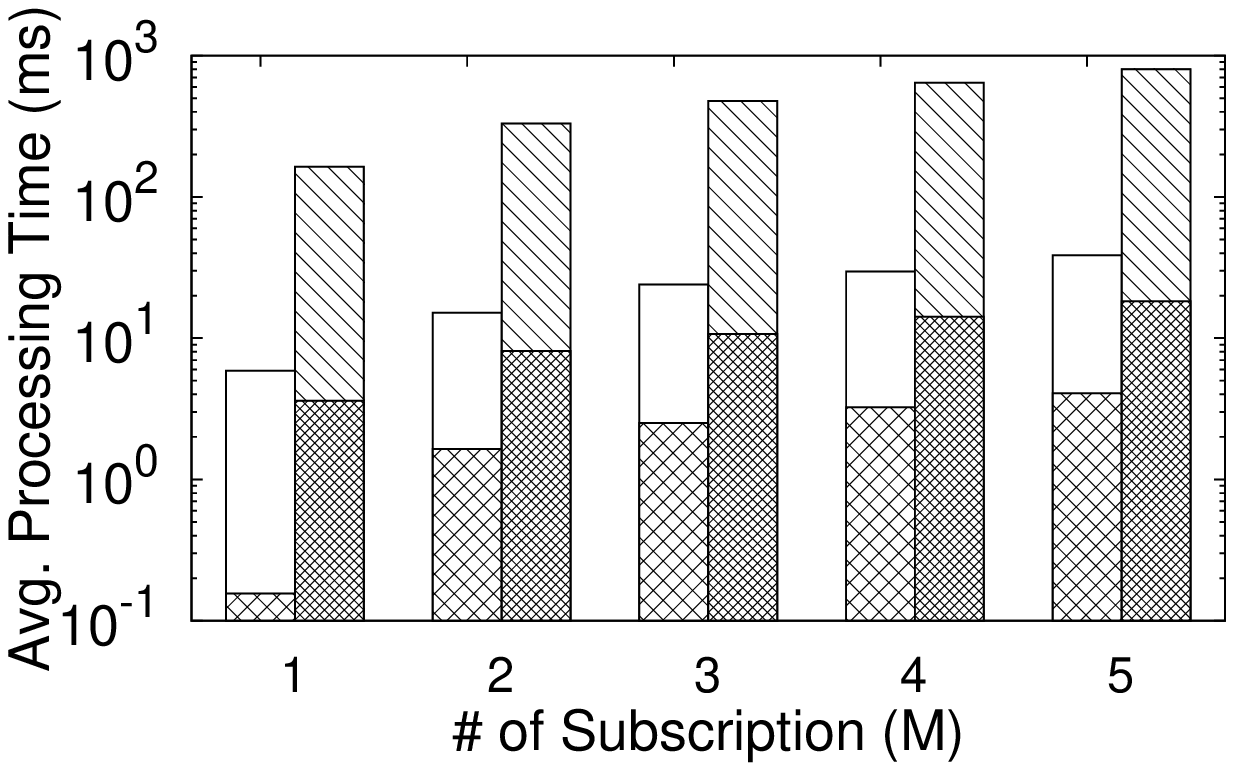}}
		\hfill
		\subfigure[\gn]{
		    \label{fig:exp8:gn_qnum_tmp} 
		    \includegraphics[width=0.48\columnwidth]{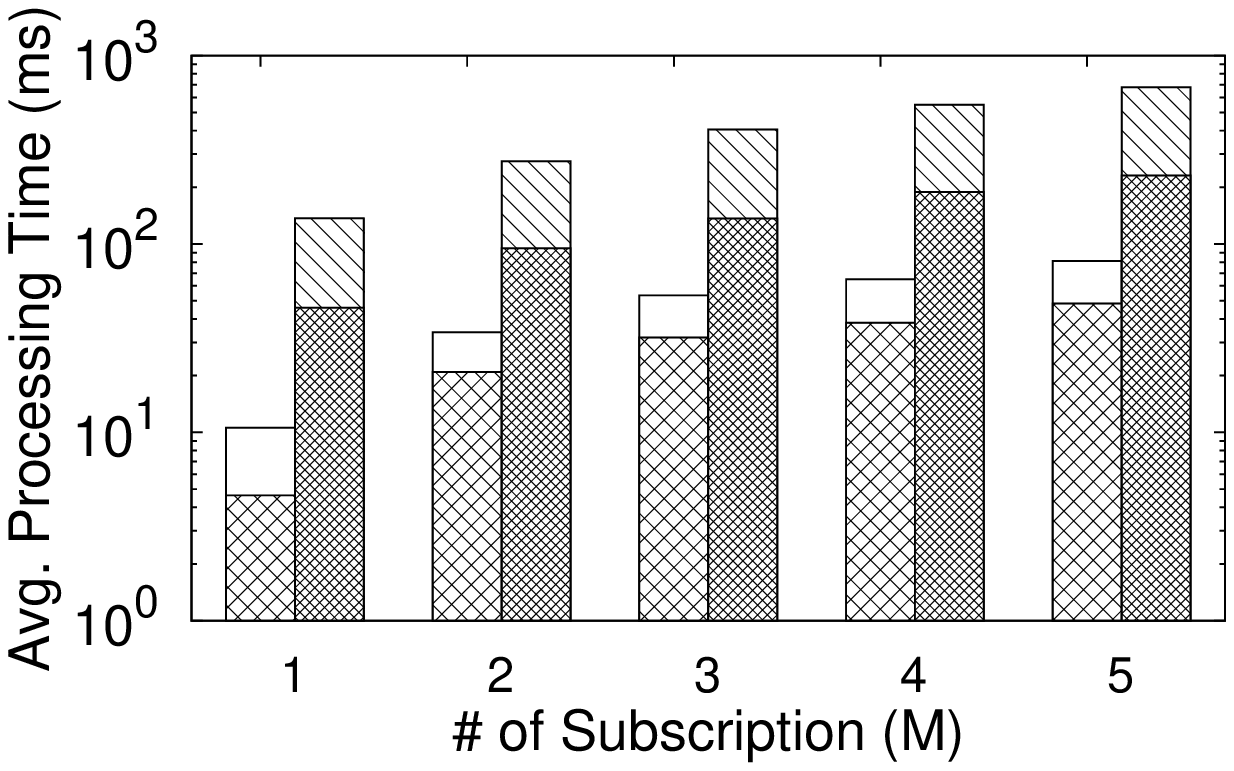}}
		\caption{Effect of number of subscriptions}	
		\label{fig:exp8:qnum_tmp}
	\end{minipage}
	\hfill
	\begin{minipage}[t]{0.495\linewidth}	
		\centering
		\subfigure[\tweets]{
		    \label{fig:exp7:tweets_window_tmp} 
		    \includegraphics[width=0.48\columnwidth]{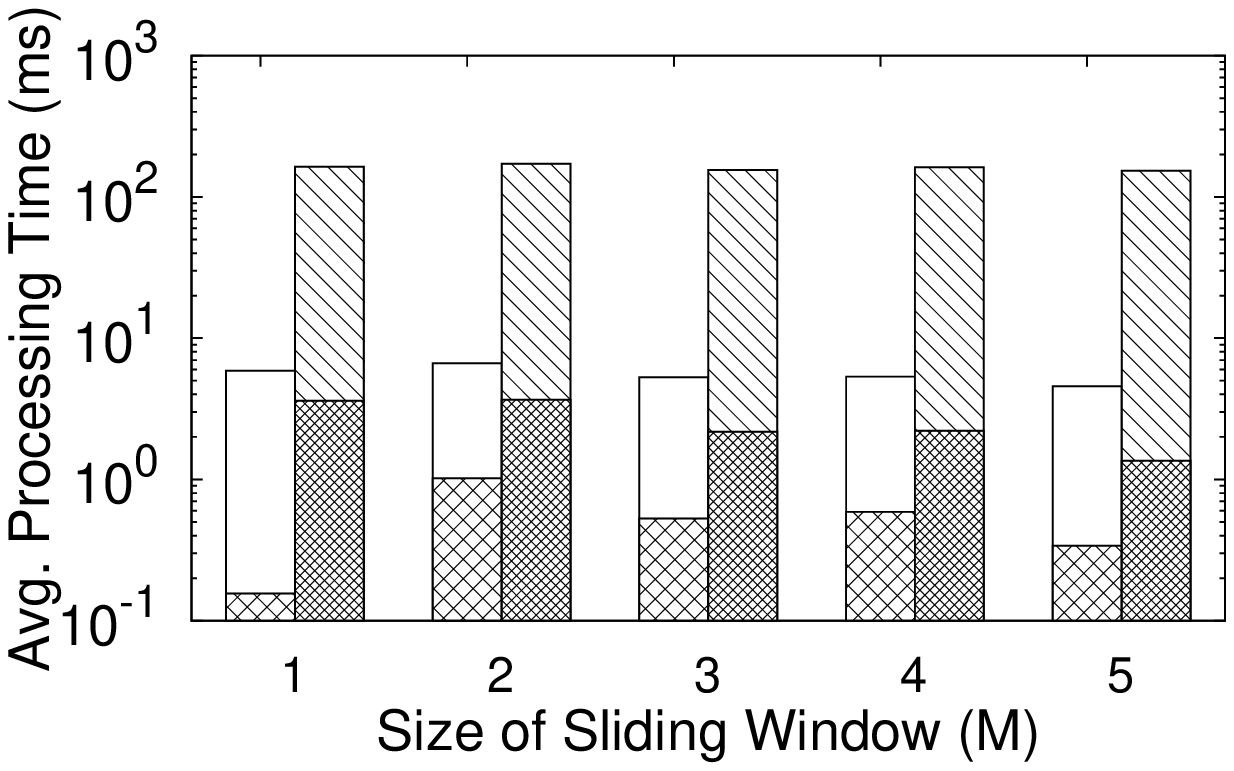}}
		\hfill
		\subfigure[\gn]{
		    \label{fig:exp7:gn_window_tmp} 
		    \includegraphics[width=0.48\columnwidth]{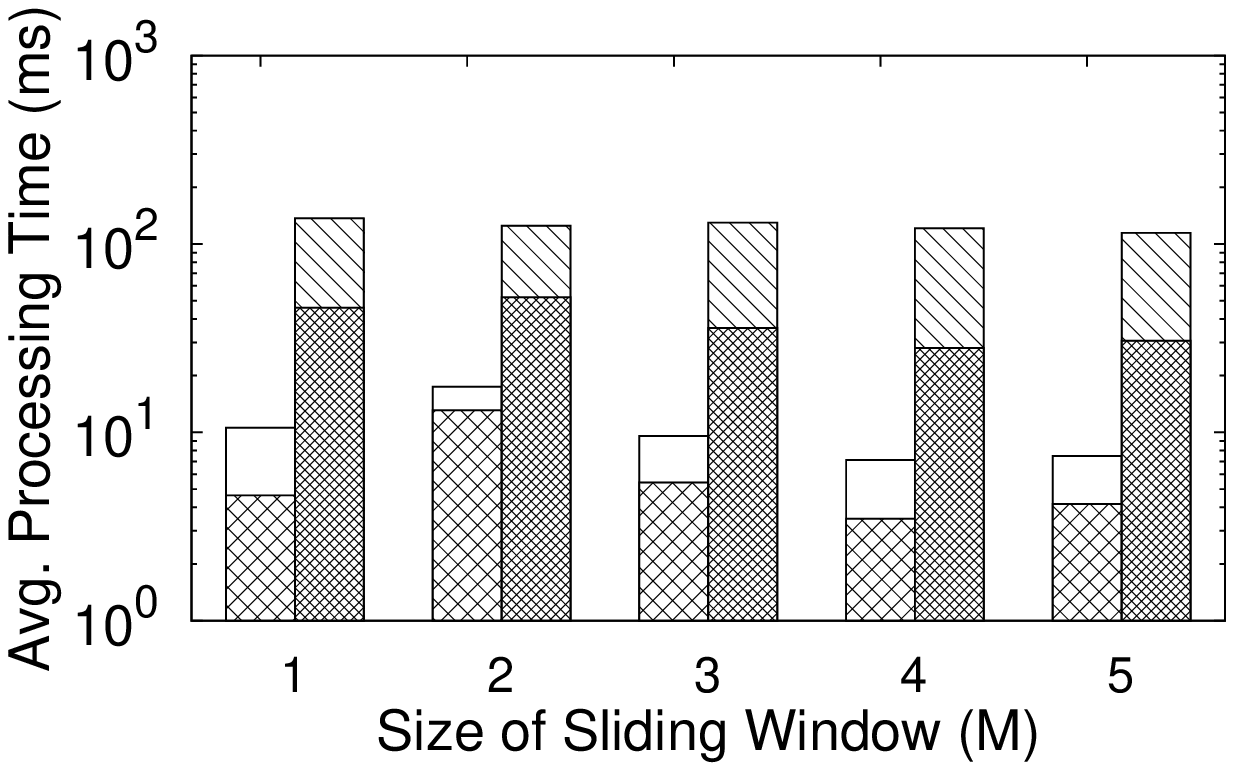}}
		\caption{Effect of sliding window size}	
		\label{fig:exp7:window_tmp}
	\end{minipage}%
\end{figure*}

\begin{figure}[t]
	\centering
	\begin{minipage}[b]{1.0\linewidth}
		\centering
		\includegraphics[width=0.9\columnwidth]{exp6_draw_title.eps}%
	\end{minipage}	
	\vfill
	\begin{minipage}[t]{0.48\linewidth}	
		\centering
		\includegraphics[width=\columnwidth]{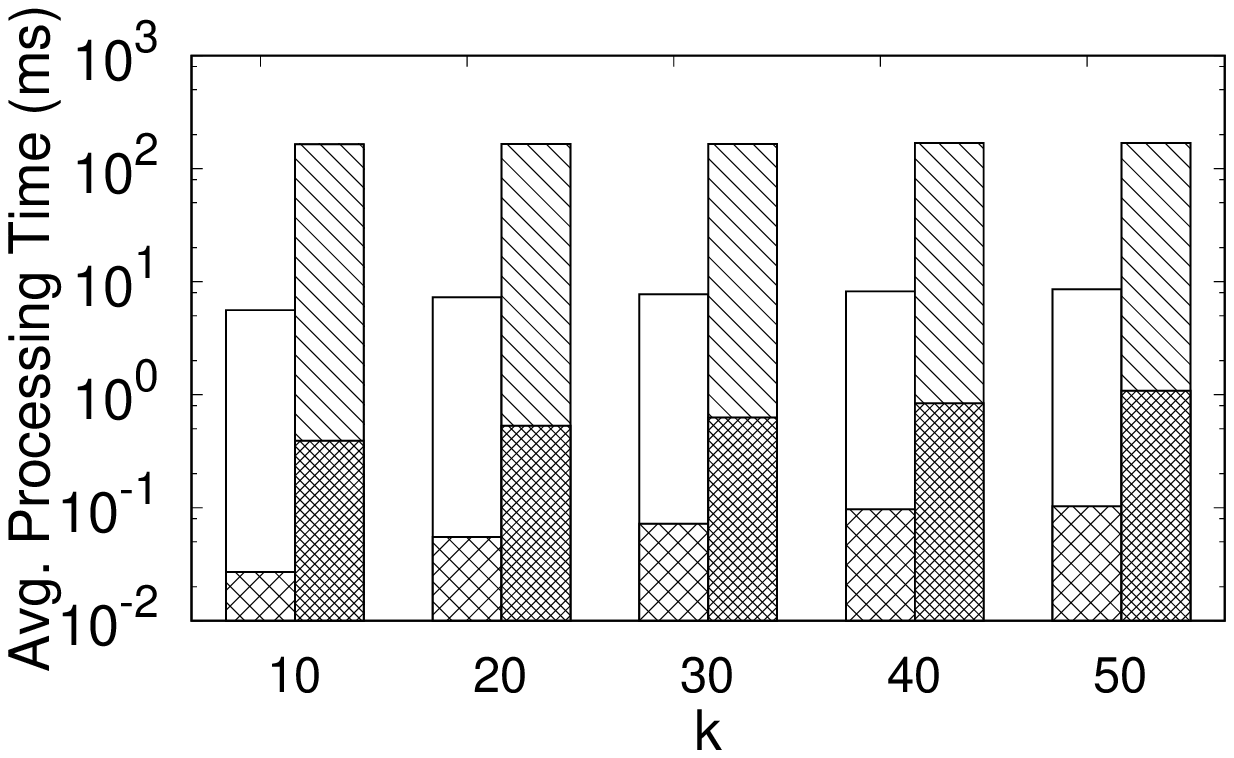}
		\caption{\small{Zipf distribution workload}}
		\label{fig:exp:r1:zipf_distribution}
	\end{minipage}%
	\hfill
	\begin{minipage}[t]{0.48\linewidth}	
		\centering
		\includegraphics[width=\columnwidth]{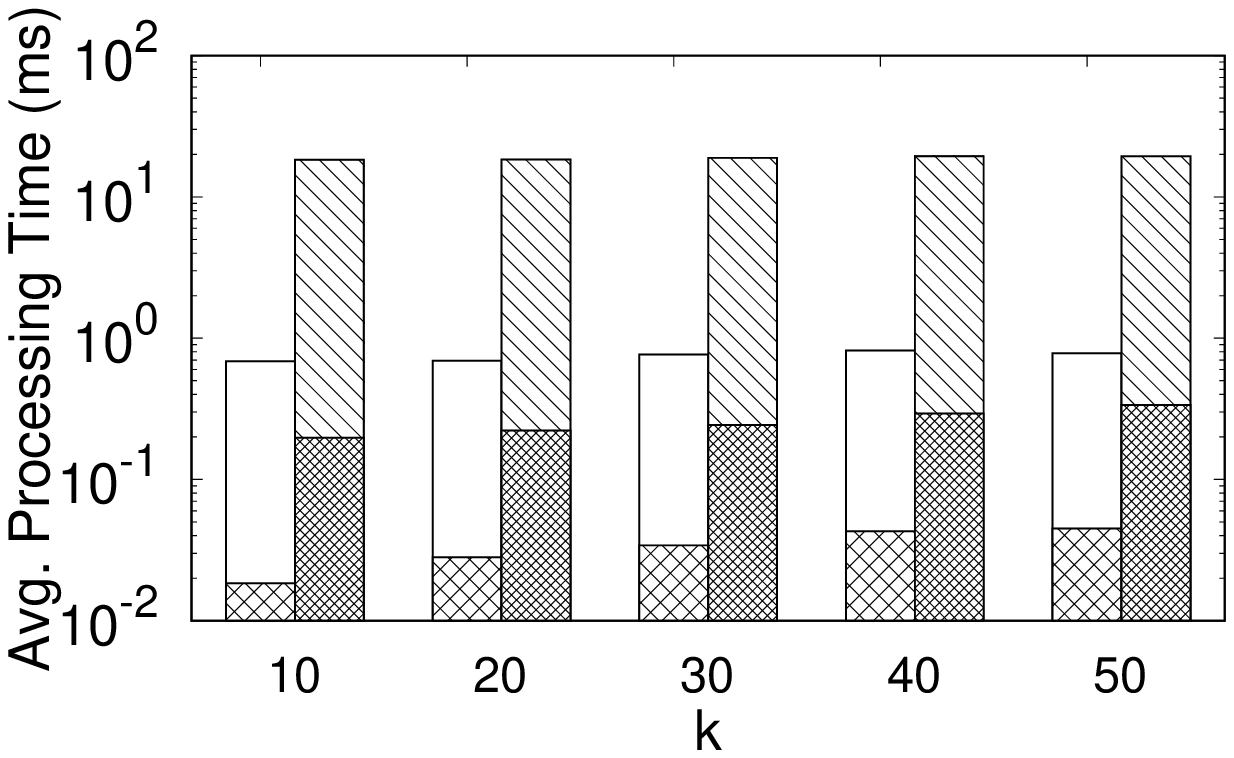}
		\caption{\small{Performance of time-based sliding window}}
		\label{fig:exp:r1:time_sliding_window}
	\end{minipage}
\end{figure}

%

\noindent \textbf{Compare message dissemination algorithms.}
In this experiment, we compare the performance of different dissemination algorithms.
Specifically, we compare \ciq and \igpt with \csky being \topk re-evaluation algorithm.
As shown in Figure~\ref{fig:exp3:compare_disseminate_amp}, our algorithm can achieve about 10 times faster than \ciq algorithm, due to the benefit of individual pruning technique and group pruning technique.
On the other hand, as shown in Figure~\ref{fig:exp3:compare_disseminate_mem},
even if we need to maintain some additional statistics,
the memory cost of our subscription index is much smaller than that of \ciq,
since our algorithm only indexes each subscription into single cell, rather than multiple cells.

\noindent \textbf{Compare \topk re-evaluation algorithms.}
In this experiment, we compare the performance of different \topk re-evaluation strategies combined with our \igpt algorithm.
Specifically, we compare \kmax algorithm~\cite{DBLP:conf/icde/YiYYXC03}, \skyband algorithm~\cite{DBLP:conf/sigmod/MouratidisBP06} and our cost-based \skyband algorithm, which are denoted as \igptkmax, \igptnaive, \igptcsky respectively.
The average \emp time is reported in Figure~\ref{fig:exp4:compare_refill_emp}.
We observe that our \csky algorithm can achieve about 4-20 times improvement compared to the second best algorithm.
This is mainly due to the adaptiveness of our cost model which can tune a best threshold for each subscription.
Table~\ref{tb:exp4:buffer_size} demonstrates the average buffer size of each algorithm.
Our algorithm maintains much fewer number of messages than other competitors due to the advantage of our cost model.
In the following experiments, we only compare our \ours algorithm (i.e., \igptcsky) with \ciqkmax, which performs best among all the baselines.


\noindent \textbf{Effect of number of subscription keywords.}
We assess the effect of number of subscription keywords in Figure~\ref{fig:exp6:knum_tmp}.
We notice that the \amp time increases as we vary the number of keywords from 1 to 5.
This is obvious since more candidates will be encountered during traversing posting lists when the number of keywords is large.
As to the \emp time, we observe that the selectivity is low and fewer messages are \textit{relevant} at initial, thus leading to high cost.
With increasing number of keywords, the selectivity increases, thus reducing the number of re-evaluations accordingly.
Finally, when the number of keyword reaches 4 or 5, a message is less likely to have a high score w.r.t. a subscription due to the smaller weight assigned to each subscription keyword on average, resulting in the increase of \emp time.
The overall processing time increases decently for a large number of keywords.

\noindent \textbf{Effect of number of \topk results.}
In this set of experiments, we analyse the effect of number of \topk results, i.e., $k$, in Figure~\ref{fig:exp9:topk_tmp}.
For \amp time, as we increment $k$ from 10 to 50, the average \kscore~of subscriptions decreases; therefore, an arriving message is more likely to influence more subscriptions, leading to high \amp time in our algorithm.
Meanwhile, a large $k$ usually results in high \emp time, because the subscriptions with low selectivity are more likely to expire and incur \topk re-evaluations.
Besides, the \skyband maintenance cost also increases for large $k$.
Overall, the average processing time increases slowly w.r.t. $k$.

\noindent \textbf{Effect of number of subscriptions.}
We evaluate the scalability of our system in Figure~\ref{fig:exp8:qnum_tmp}, where we vary the number of subscriptions from $1M$ to $5M$.
As shown in the figure, our algorithm scales very well with increasing number of subscriptions, thus making it practical to support real-life applications with fast response.

\noindent \textbf{Effect of sliding window size.}
We turn to evaluate the effect of sliding window size $|\mathcal{W}|$ in this set of experiments.
The results are demonstrated in Figure~\ref{fig:exp7:window_tmp}, where we vary $|\mathcal{W}|$ from $1M$ to $5M$.
It is observed that, when we increase $|\mathcal{W}|$, the \amp time decreases, which is due to the fact that a large sliding window usually leads to better \topk results with higher \kscore.
Thus, a new message will affect less subscriptions, resulting in lower \amp cost.
Regarding the \emp time, it fluctuates around a value due to the competitive results of fewer number of re-evaluations and high query cost against the message index as we increase $|\mathcal{W}|$.


\noindent \textbf{Performance over \zipf distribution.}
We evaluate the performance over a different subscription workload where the keywords are sampled from a \zipf distribution (Figure~\ref{fig:exp:r1:zipf_distribution}).
It is observed that the \zipf distribution workload has similar performance compared to original workload, where keywords are randomly sampled from messages.

\noindent \textbf{Performance over time-based sliding window.}
We verify the performance over time-based sliding window against \tweets dataset.
We set the sliding window size as 4 months and feed the initial sliding window with tweets from January to April 2010.
The tweets from recent one month are used to estimate window size and probability discussed in Section~\ref{sec:refill:discussions}.
The \amp and \emp time w.r.t. \textit{each timestamp} (i.e., $1$ sec) are reported by continuously feeding the sliding window with tweets collected in May 2010.
The results are shown in Figure~\ref{fig:exp:r1:time_sliding_window} where we vary the number of \topk results.
It is noticed that \ours can still achieve an order of magnitude improvement compared to \ciqkmax, which verifies the efficiency of our extensions.

\subsection{Distributed Evaluations}
\label{sec:exp:distributed}

In this section, we verify the performance of our distributed \pubsub system, i.e.,~\oursdis.
All the experiments are conducted on a 10-node (one nimbus and nine supervisors) cluster running Storm 0.10.0\footnote{http://storm.apache.org/2015/11/05/storm0100-released.html}, with a single node Zookeeper server\footnote{http://zookeeper.apache.org/doc/r3.4.8/} deployed for coordination.
Each node in the cluser is a Debian 6.0.10 server that has 3.4GHz Intel Xeon 8 cores CPU, 16GB memory, and gigabit ethernet interconnect.
Each supervisor node is configured to run at most 3 workers at the same time, and each worker can run multiple spouts/bolts concurrently.

We use the same \tweets, \gn and \yelp datasets as above and generate subscription workload and message workload accordingly.
The default number of subscriptions is $5M$, and the default size of sliding window is $1M$.
The message workload is fed to the system continuously for one hour.
The number of subscription spouts, message spouts, distribution bolts, subscription bolts, message bolts and aggregation bolts is set to 1, 1, 1, 32, 3 and 1 respectively by default.
It is noticed that the number of subscription bolts is the largest compared to other components, since subscription bolts are the main bottleneck of our system.
The parameters in the subscription index are tuned in a similar way to Section~\ref{sec:exp:single:tuning}.
We use the \textbf{throughput}, i.e., the average number of messages processed in one second, and the \textbf{communication cost}, i.e., the average number of tuples transmitted between distribution bolts and subscription bolts to process one message, as the main measurements.
Note that, since the real communication cost heavily depends on the hardware, we use the number of tuples transmitted as the measure of communication cost, which is hardware-independent.
Besides, we only consider the communication cost between distribution bolts and subscription bolts because it dominates all the other communication costs.
All the measurements are computed after system initialization, which usually takes about 2 minutes.

\begin{figure}[t]
	\centering
	\begin{minipage}[b]{\linewidth}
		\centering
		\includegraphics[width=0.95\columnwidth]{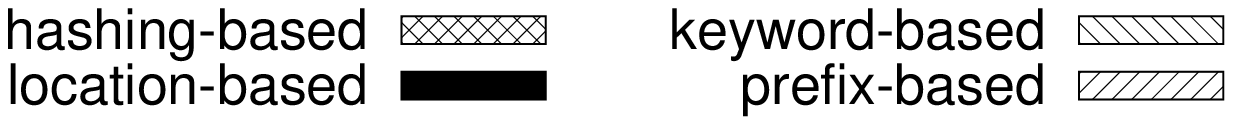}%
	 \end{minipage}	
	 \vfill
	\begin{minipage}[t]{\linewidth}	
		\centering
		\subfigure[Throughput]{
	    	\label{fig:journal:exp1:dis_methods_throughput} 
	    	\includegraphics[width=0.48\columnwidth]{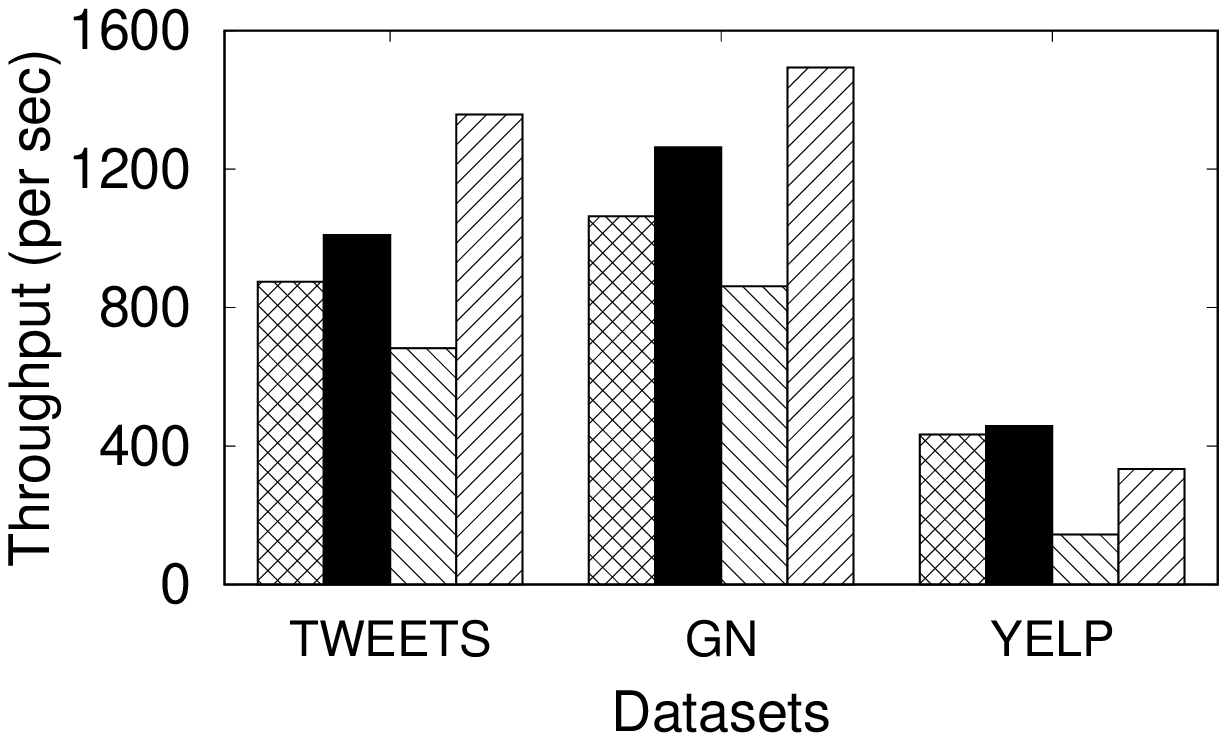}}
	    \hfill
		\subfigure[Avg. communication cost]{
		    \label{fig:journal:exp1:dis_methods_comm} 
		    \includegraphics[width=0.48\columnwidth]{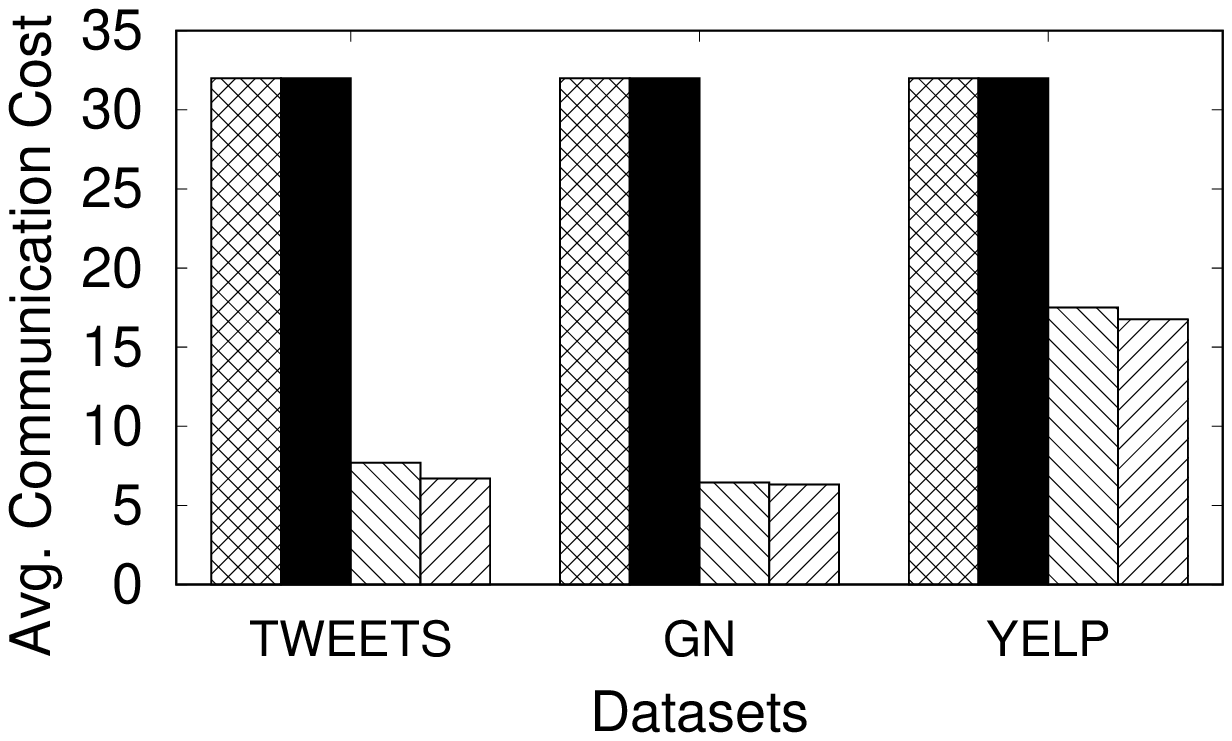}}
		\caption{Compare different distribution mechanisms}	
		\label{fig:journal:exp1:dis_methods}
	\end{minipage}%
\end{figure}

\begin{table}[t]
  	\caption{Replication ratio}\label{tb:exp_dis:replicate_ratio}
  	\centering
    \begin{tabular}{|c|c|c|c|}
      \hline
      \textbf{Methods} & \textbf{TWEETS} & \textbf{GN} & \textbf{YELP}\\ \hline \hline
      \dmhashing & $1$ & $1$ & $1$ \\ \hline
      \dmlocation & $1$ & $1$ & $1$ \\ \hline
      \dmkeyword & $3.2$ & $3.3$ & $3.3$ \\ \hline
      \dmprefix & $1.9$ & $1.9$ & $2.1$\\ \hline
    \end{tabular}
\end{table}

\begin{figure*}[t]
	\centering
	\begin{minipage}[t]{\linewidth}	
		\centering
		\subfigure[TWEETS]{
	    	\label{fig:journal:exp2:tweets_bolt_num_throughput} 
	    	\includegraphics[width=0.27\columnwidth]{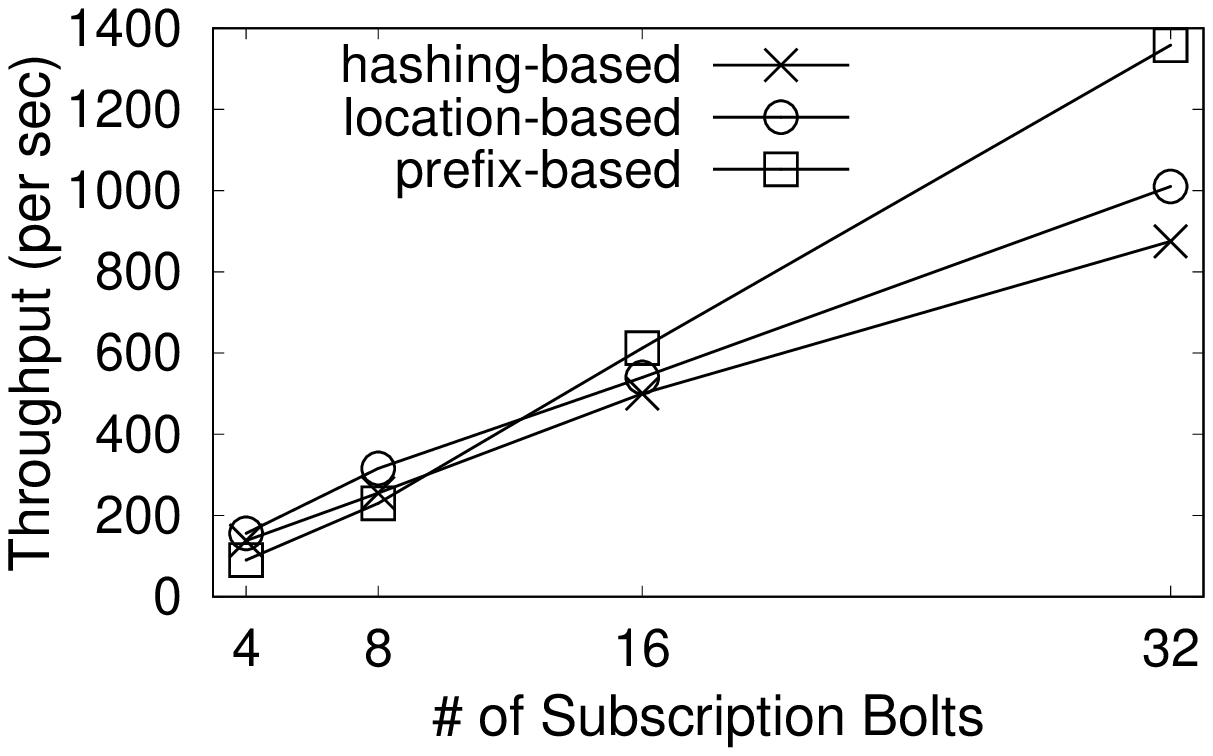}}
	    \hfill
	    \subfigure[GN]{
	    	\label{fig:journal:exp2:gn_bolt_num_throughput} 
	    	\includegraphics[width=0.27\columnwidth]{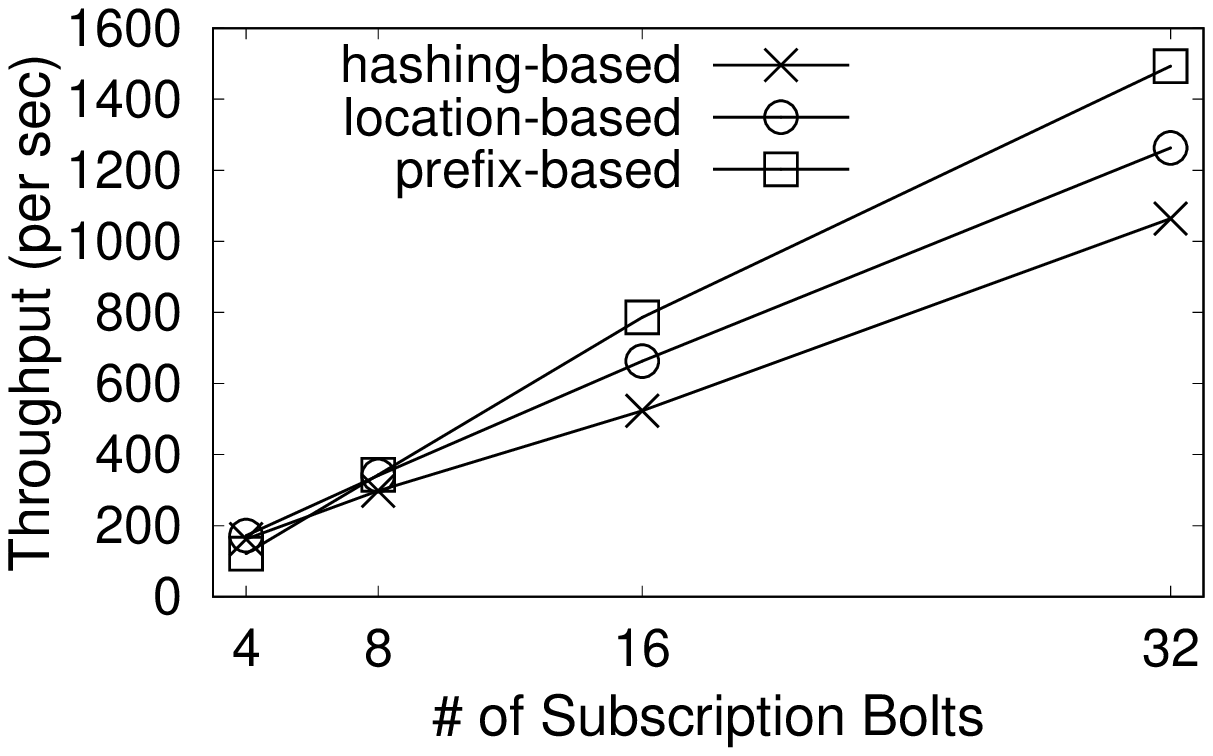}}
	    \hfill
	    \subfigure[YELP]{
	    	\label{fig:journal:exp2:yelp_bolt_num_throughput} 
	    	\includegraphics[width=0.27\columnwidth]{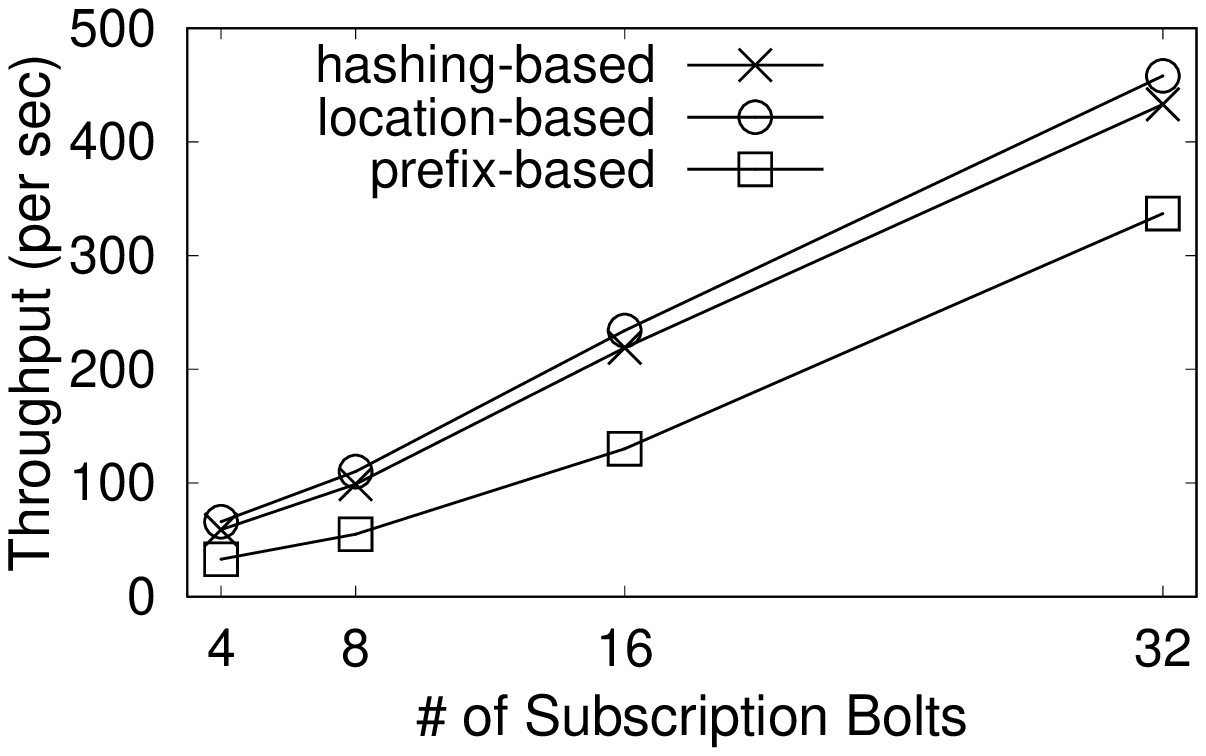}}
	
		\subfigure[TWEETS]{
		    \label{fig:journal:exp2:tweets_bolt_num_comm} 
		    \includegraphics[width=0.27\columnwidth]{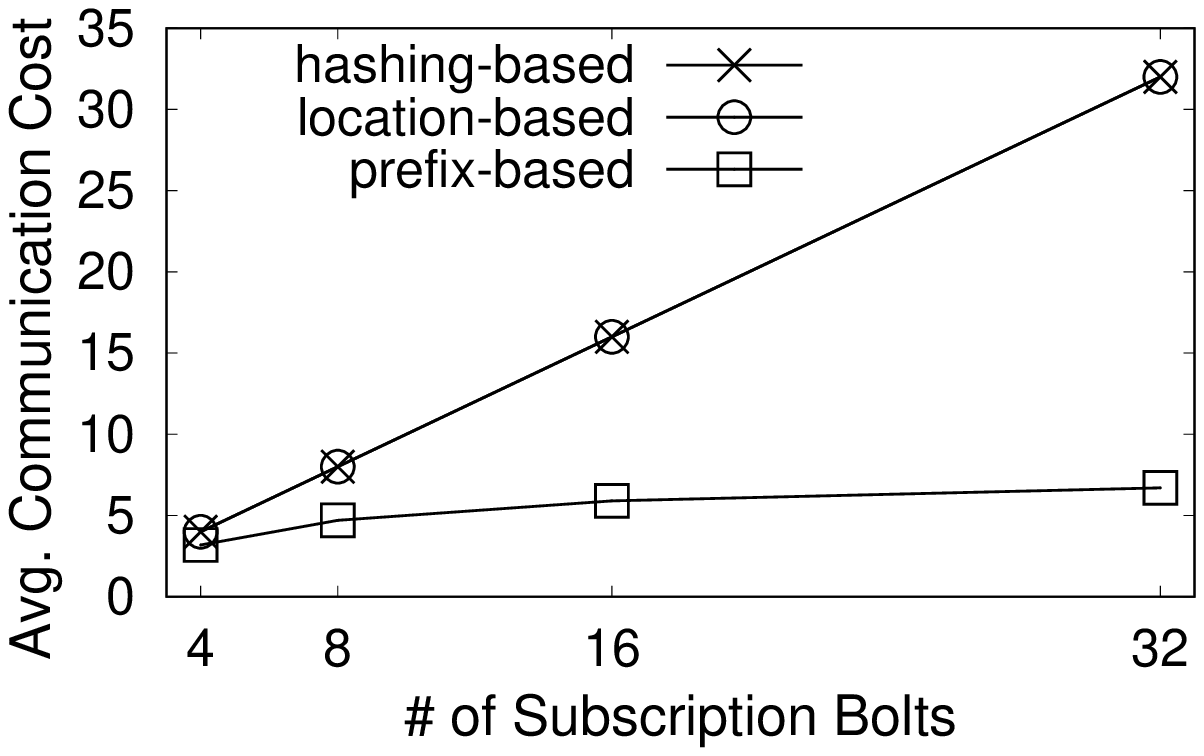}}
		\hfill
		\subfigure[GN]{
		    \label{fig:journal:exp2:gn_bolt_num_comm} 
		    \includegraphics[width=0.27\columnwidth]{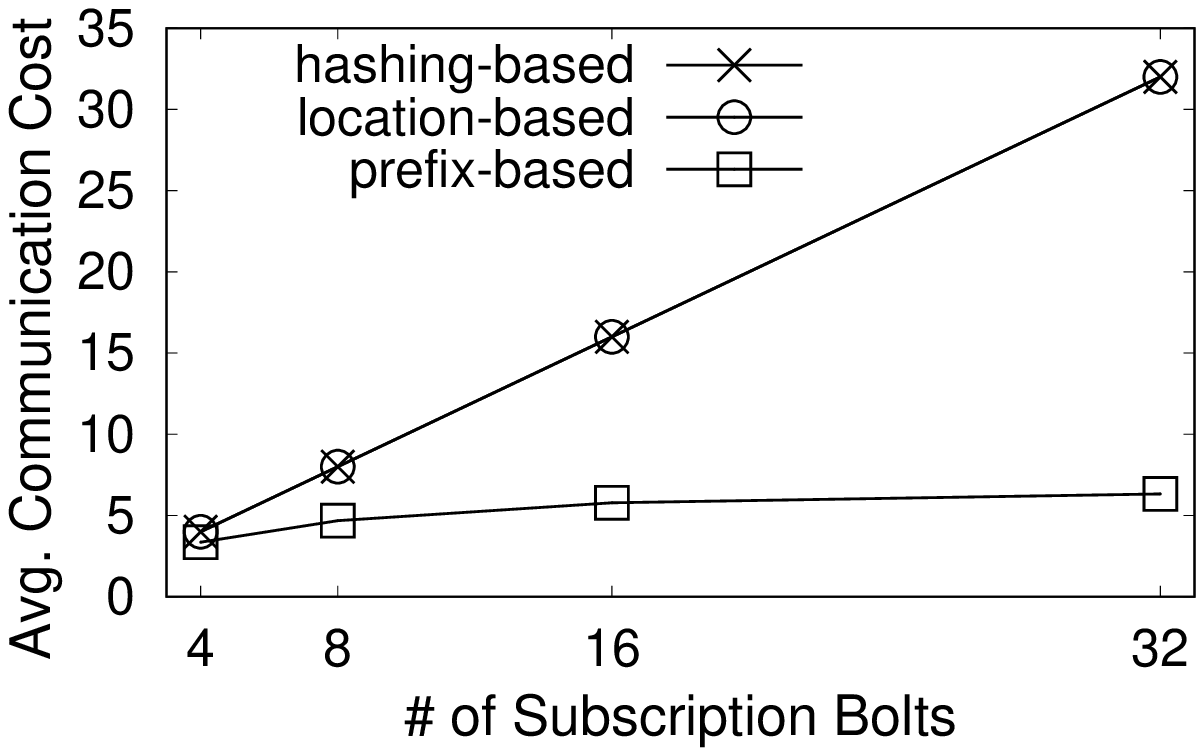}}
		\hfill
		\subfigure[YELP]{
		    \label{fig:journal:exp2:yelp_bolt_num_comm} 
		    \includegraphics[width=0.27\columnwidth]{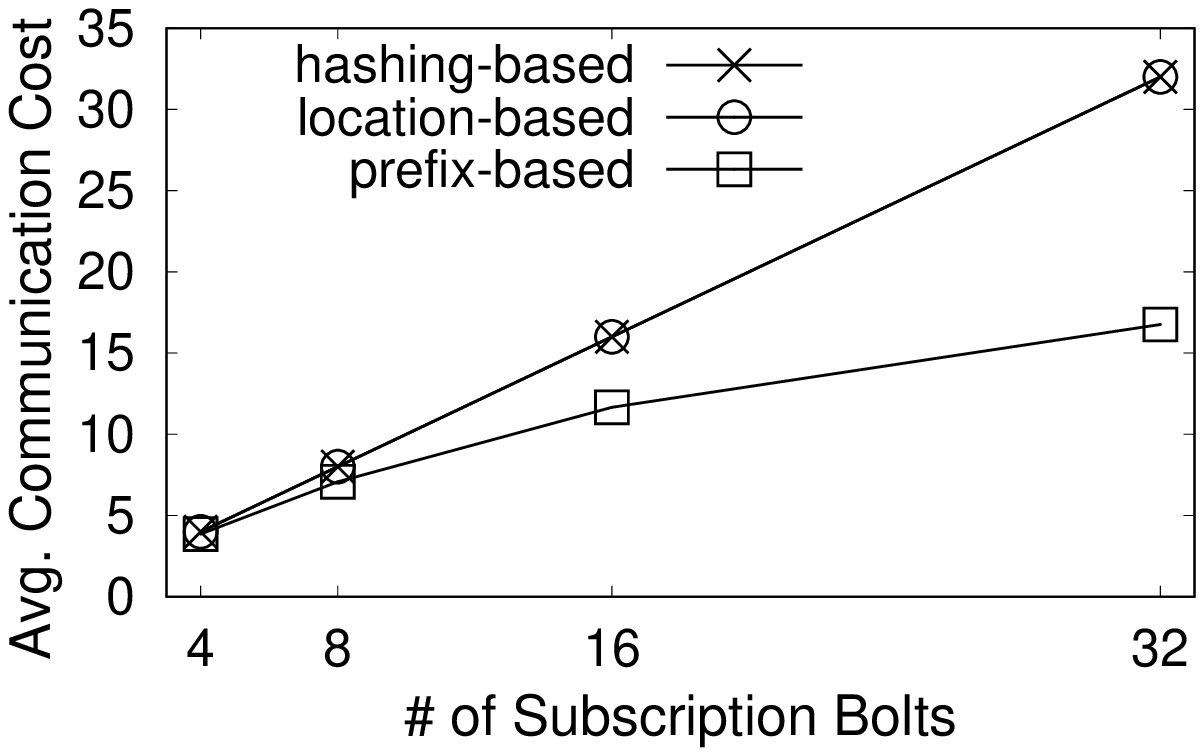}}
		\vspace{-3mm}
		\caption{Effect of number of subscription bolts}	
		\label{fig:journal:exp2:bolt_num}
	\end{minipage}%
\end{figure*}

\begin{figure*}[t]
	\centering
	\begin{minipage}[t]{\linewidth}	
		\subfigure[TWEETS]{
	    	\label{fig:journal:exp3:tweets_topk_throughput} 
	    	\includegraphics[width=0.27\columnwidth]{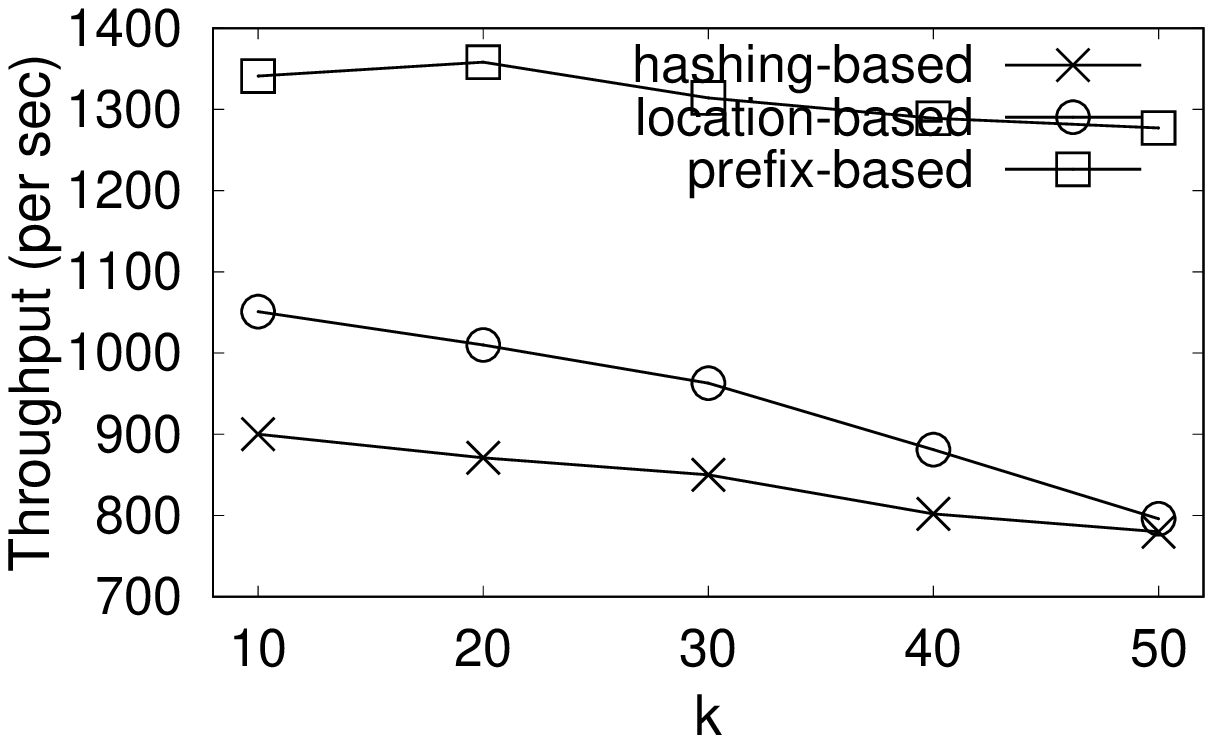}}
	    \hfill
		\subfigure[GN]{
		    \label{fig:journal:exp3:gn_topk_throughput} 
		    \includegraphics[width=0.27\columnwidth]{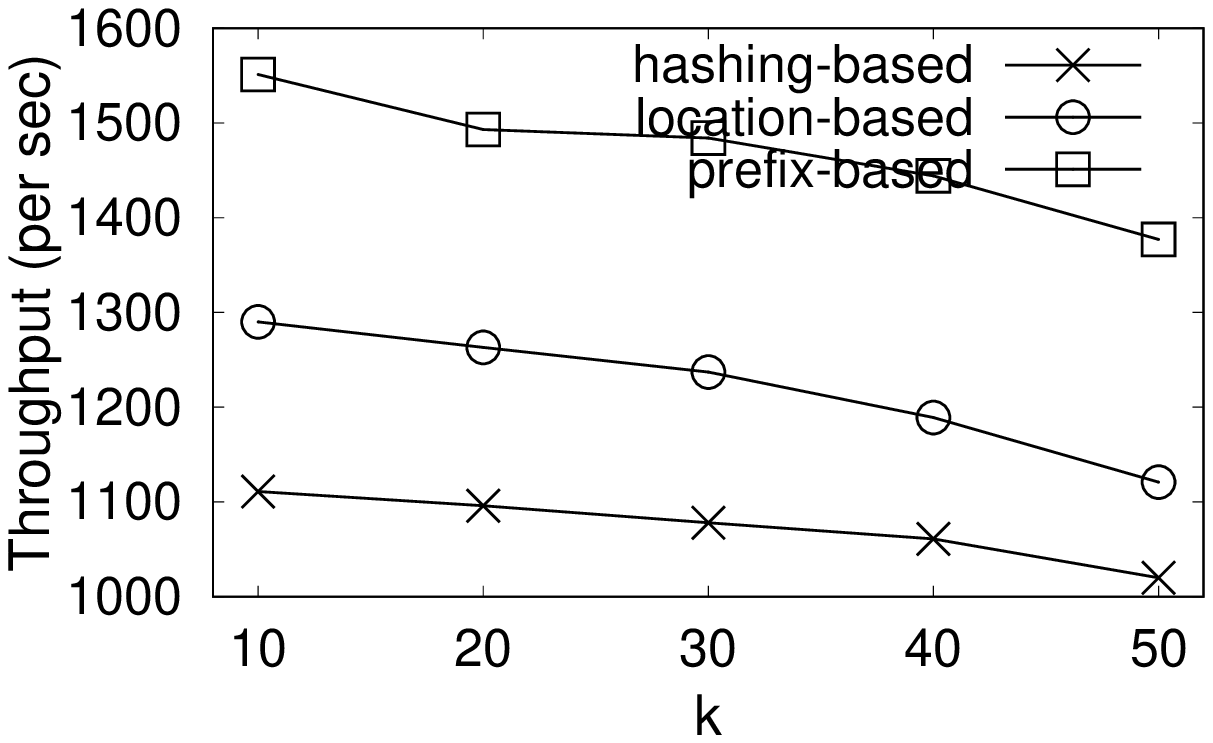}}
		\hfill
		\subfigure[YELP]{
		    \label{fig:journal:exp3:yelp_topk_throughput} 
		    \includegraphics[width=0.27\columnwidth]{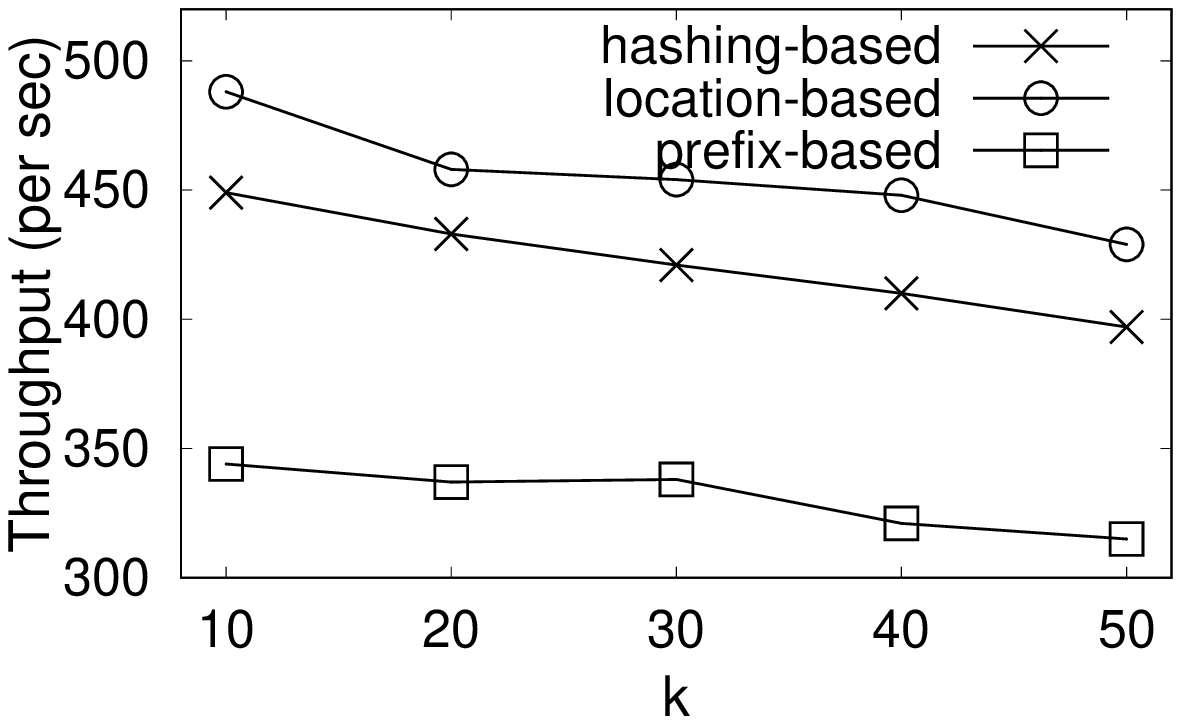}}
		\vspace{-3mm}		
		\caption{Effect of number of \topk results}	
		\label{fig:journal:exp3:topk}
	\end{minipage}%
\end{figure*}

\begin{figure*}[t]
	\centering
	\begin{minipage}[t]{\linewidth}	
		\centering
		\subfigure[TWEETS]{
	    	\label{fig:journal:exp4:tweets_scale_throughput} 
	    	\includegraphics[width=0.27\columnwidth]{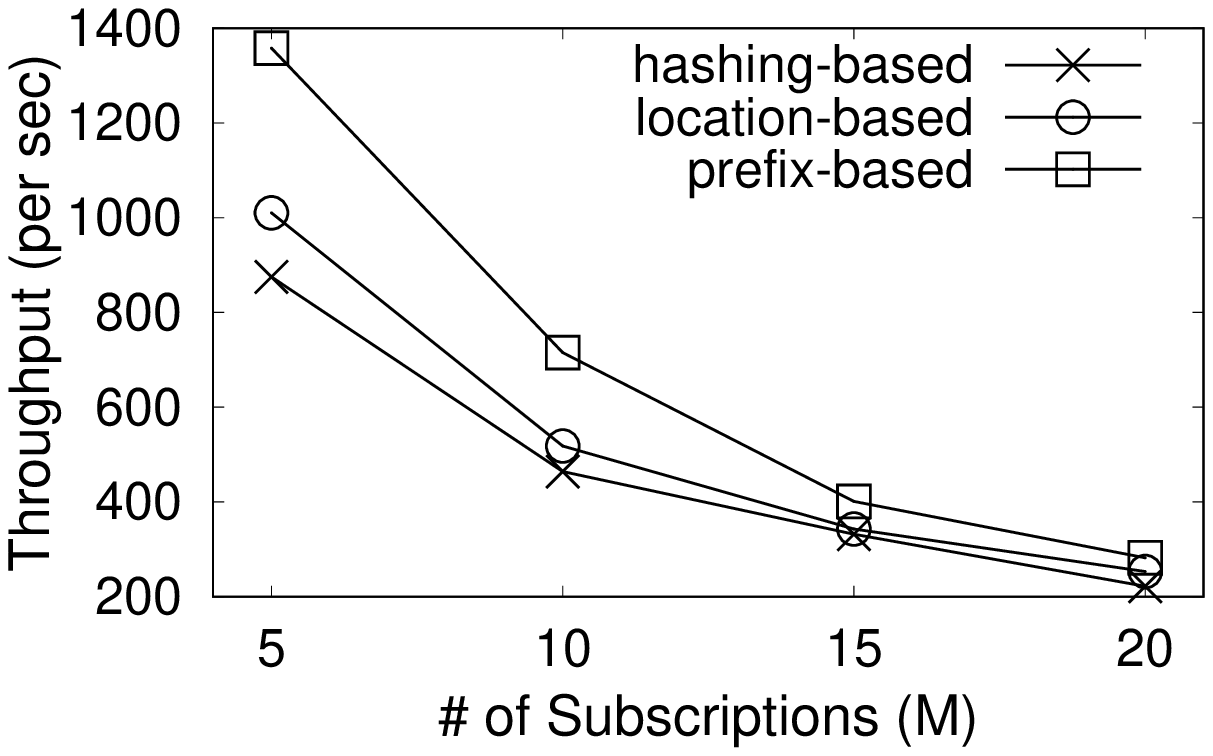}}
	    \hfill
		\subfigure[GN]{
		    \label{fig:journal:exp4:gn_scale_throughput} 
		    \includegraphics[width=0.27\columnwidth]{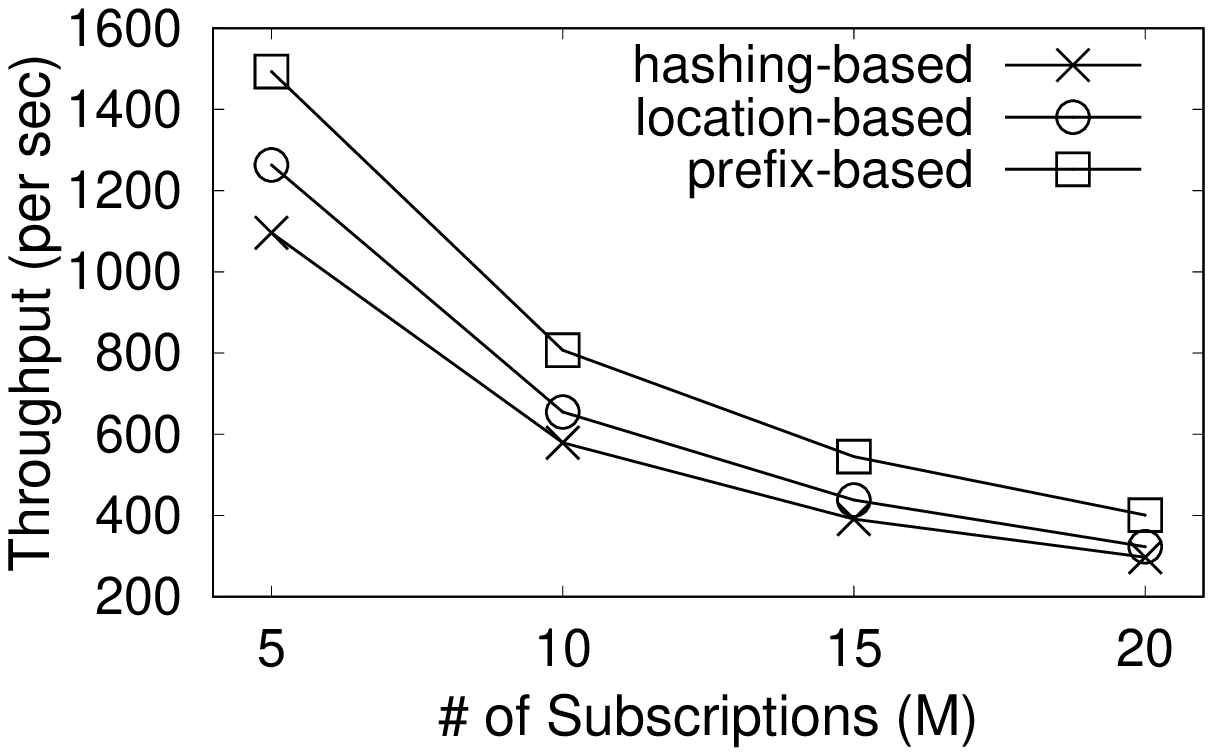}}
		\hfill
		\subfigure[YELP]{
		    \label{fig:journal:exp4:yelp_scale_throughput} 
		    \includegraphics[width=0.27\columnwidth]{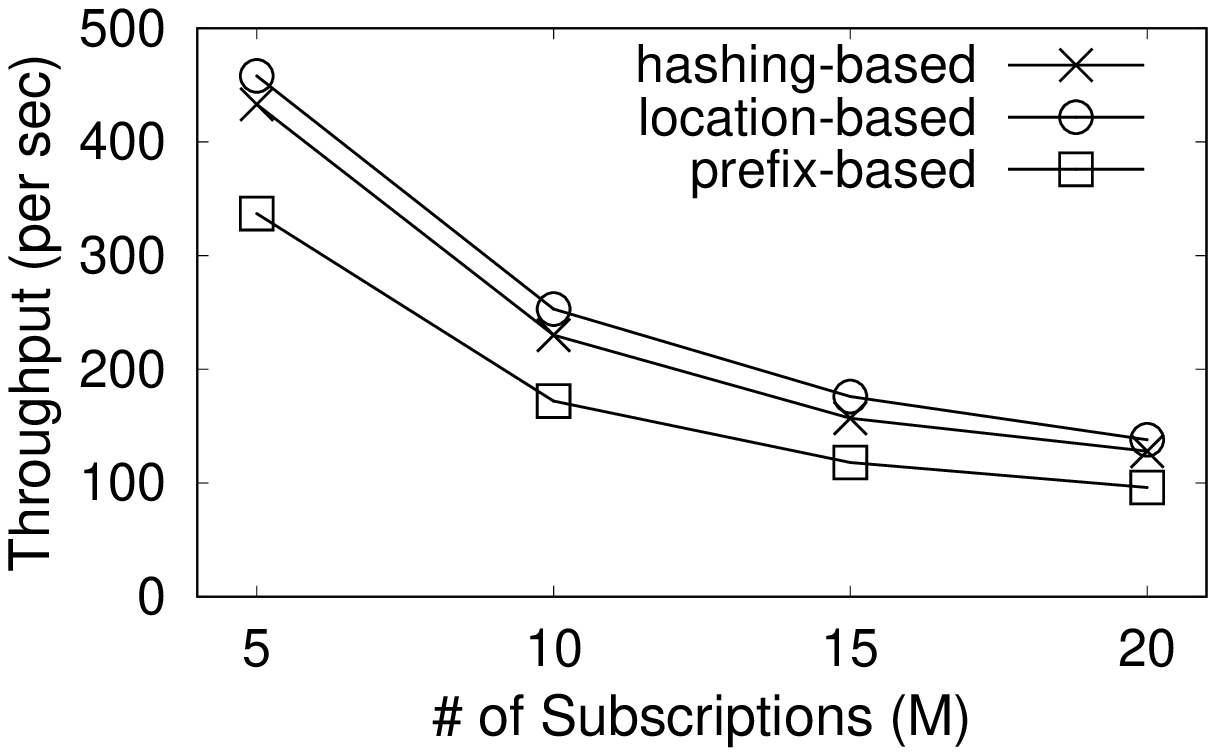}}
		\vspace{-3mm}		
		\caption{Effect of number of subscriptions}	
		\label{fig:journal:exp4:scale}
	\end{minipage}%
\end{figure*}

\begin{figure*}[t]
	\centering
	\begin{minipage}[b]{\linewidth}
		\centering
		\includegraphics[width=0.7\columnwidth]{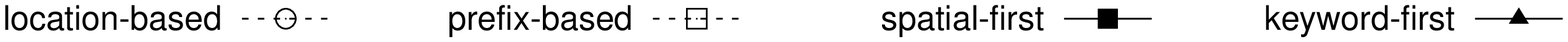}%
	 \end{minipage}	
	 \vfill
	\begin{minipage}[t]{1.0\linewidth}	
		\centering
		\subfigure[TWEETS]{
	    	\label{fig:journal:exp5:tweets_hybrid_throughput} 
	    	\includegraphics[width=0.27\columnwidth]{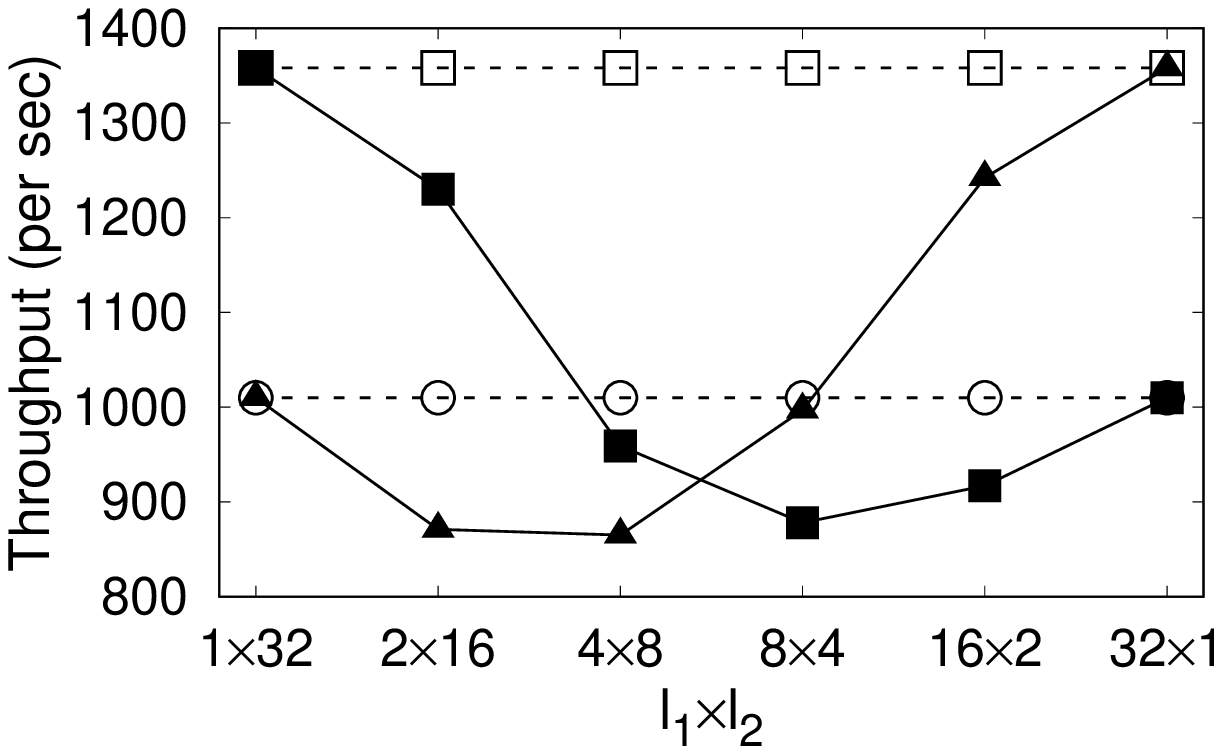}}
	    \hfill
	    \subfigure[GN]{
	    	\label{fig:journal:exp5:gn_hybrid_throughput} 
	    	\includegraphics[width=0.27\columnwidth]{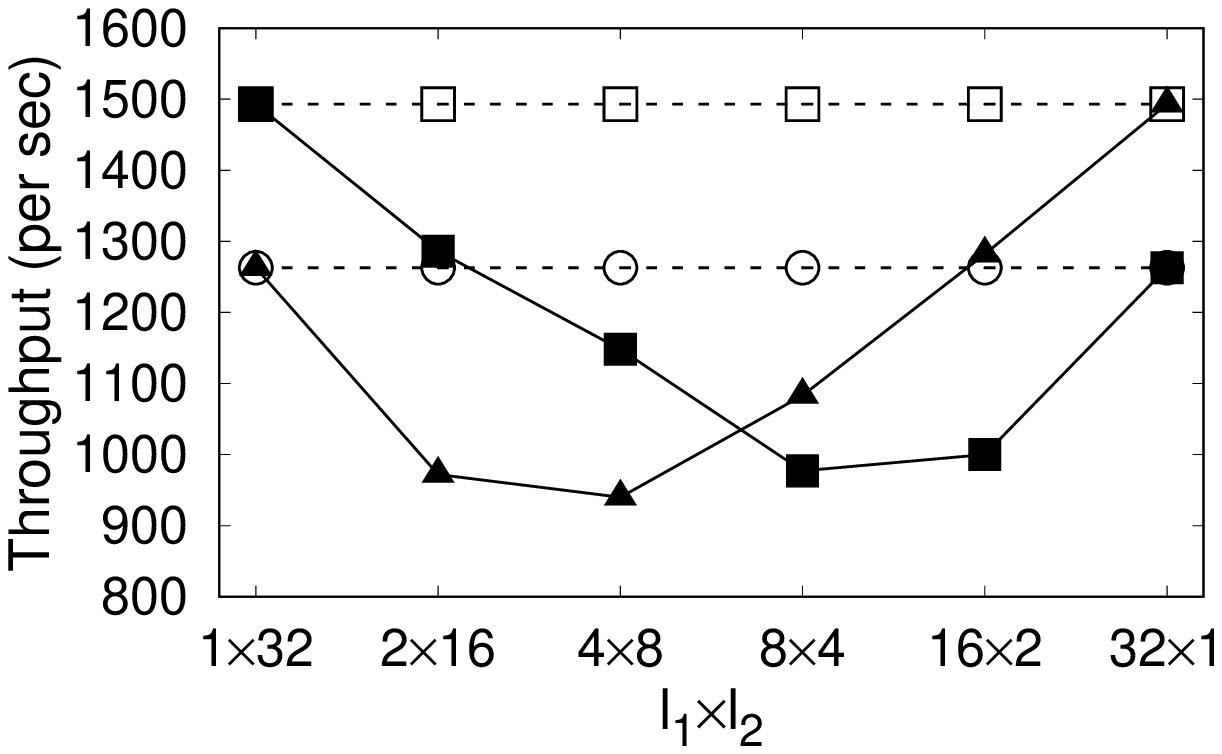}}
	    \hfill
		\subfigure[YELP]{
	    	\label{fig:journal:exp5:yelp_hybrid_throughput} 
	    	\includegraphics[width=0.27\columnwidth]{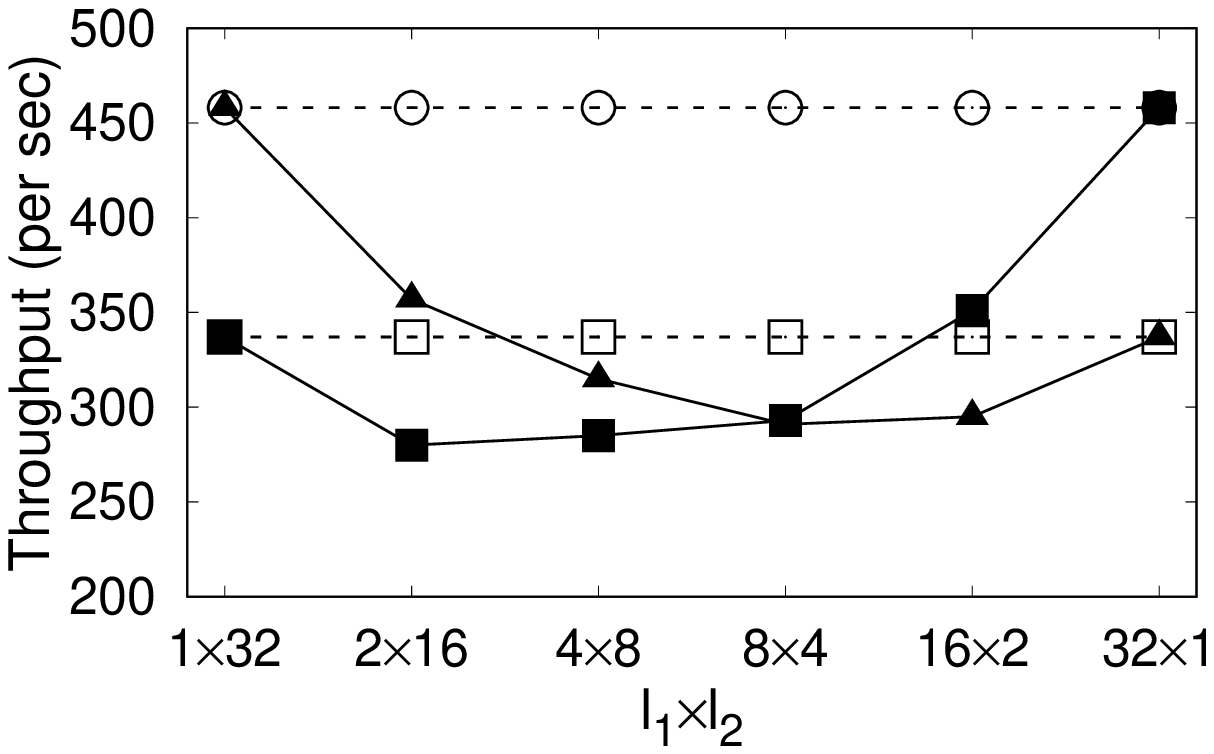}}
		\vspace{-3mm}		
		\caption{Throughput of hybrid methods}	
		\label{fig:journal:exp5:hybrid_throughput}
	\end{minipage}%
\end{figure*}

\begin{figure*}[t]
	\centering
	\begin{minipage}[b]{\linewidth}
		\centering
		\includegraphics[width=0.7\columnwidth]{journal_exp5_title.eps}%
	 \end{minipage}	
	 \vfill
	\begin{minipage}[t]{1.0\linewidth}	
		\centering
		\subfigure[TWEETS]{
	    	\label{fig:journal:exp5:tweets_hybrid_comm} 
	    	\includegraphics[width=0.27\columnwidth]{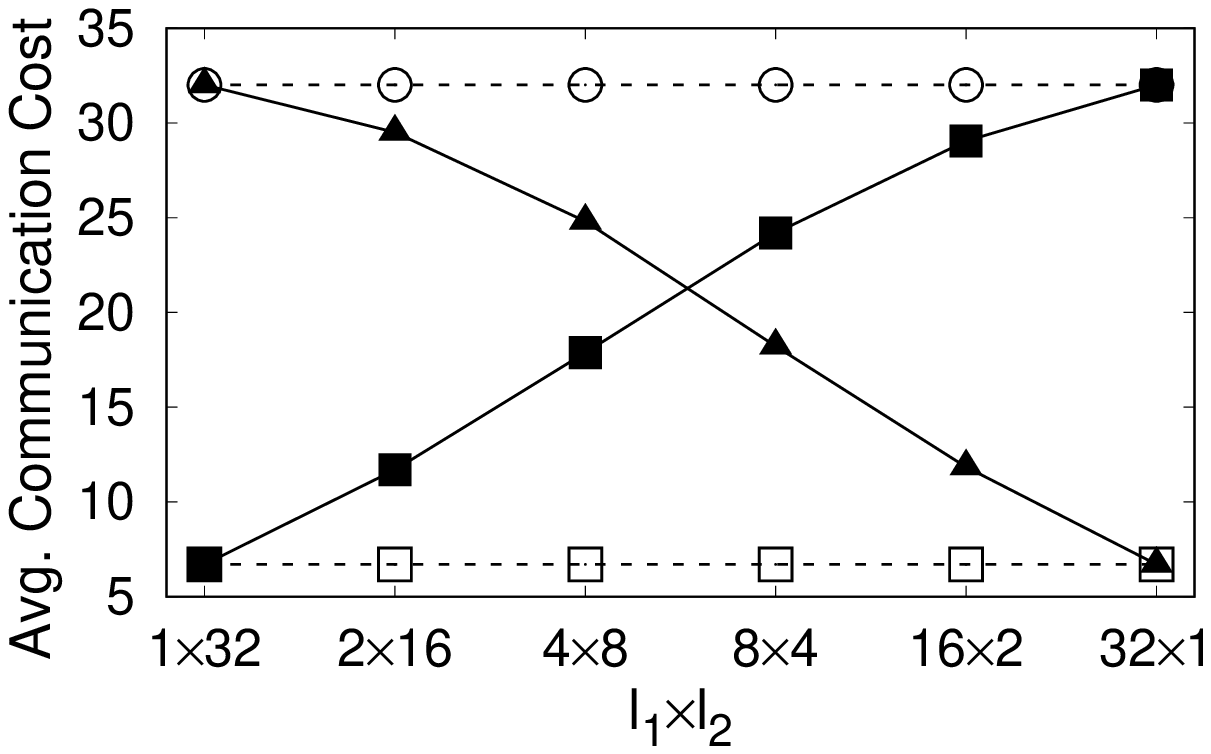}}
	    \hfill
	    \subfigure[GN]{
	    	\label{fig:journal:exp5:gn_hybrid_comm} 
	    	\includegraphics[width=0.27\columnwidth]{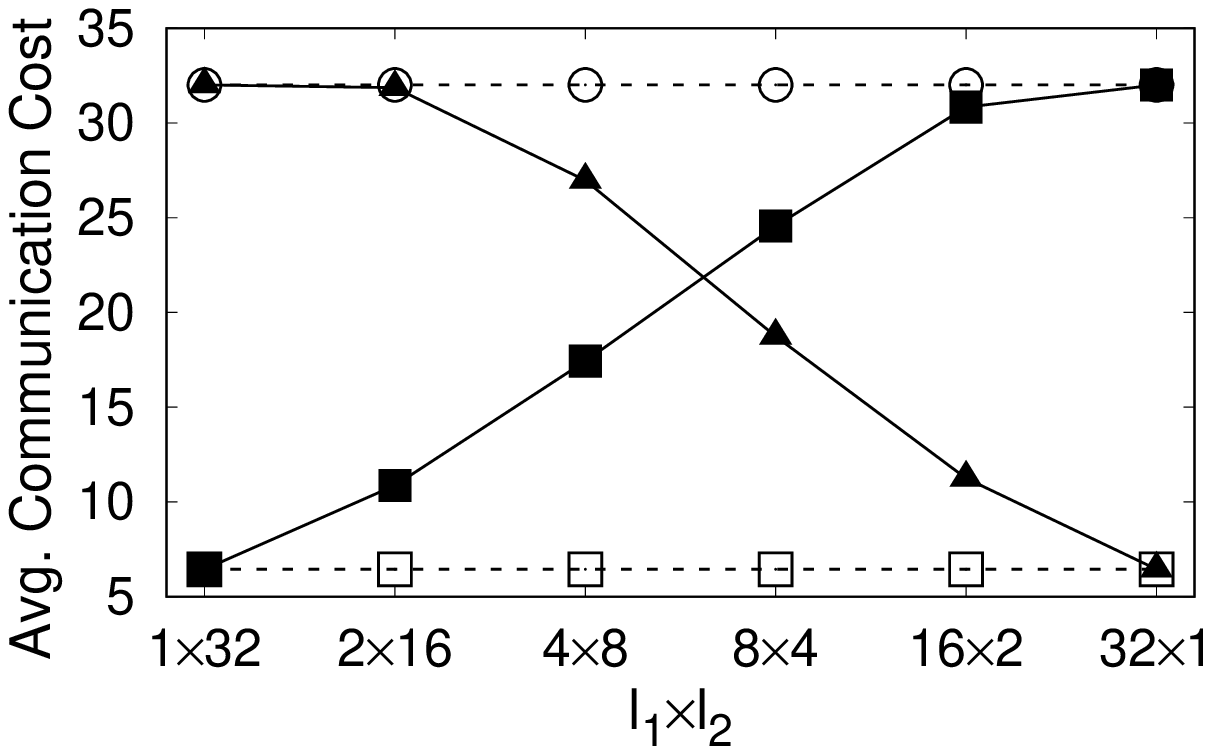}}
	    \hfill
		\subfigure[YELP]{
		    \label{fig:journal:exp5:yelp_hybrid_comm} 
		    \includegraphics[width=0.27\columnwidth]{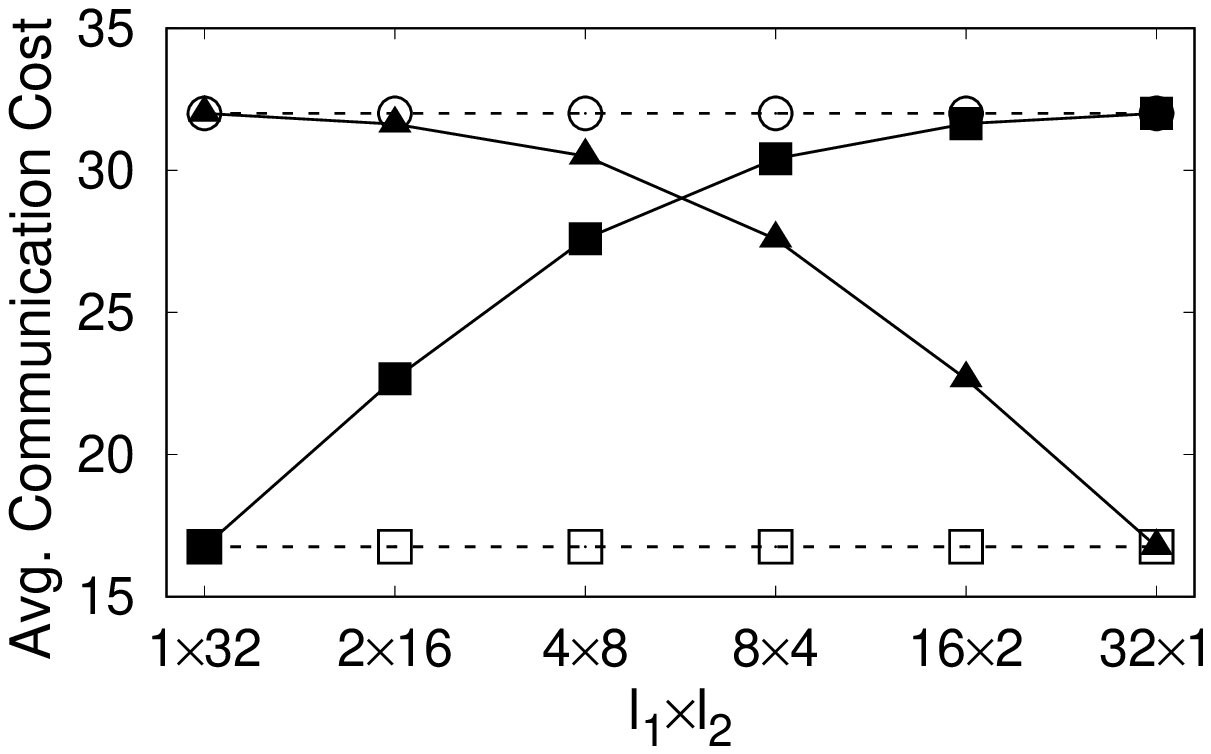}}
		\vspace{-3mm}
		\caption{Communication cost of hybrid methods}	
		\label{fig:journal:exp5:hybrid_comm}
	\end{minipage}%
\end{figure*}

\noindent \textbf{Compare different distribution mechanisms.}
In this set of experiments, we compare the performance of our proposed distribution mechanisms over \tweets, \gn and \yelp datasets.
We denote the four distribution mechanisms as \dmhashing, \dmlocation, \dmkeyword and \dmprefix respectively.
The results are depicted in Figure~\ref{fig:journal:exp1:dis_methods}.
Regarding \tweets dataset in Figure~\ref{fig:journal:exp1:dis_methods_throughput}, we observe that \dmlocation method achieves about $15\%$ throughput improvement than \dmhashing method.
This is mainly due to the better spatial similarity bound computed in \dmlocation method.
It is also noticed that \dmkeyword method performs the worst among all the methods because of the extra overhead incurred by duplicate subscription allocation, which is shown in Table~\ref{tb:exp_dis:replicate_ratio}.
However, \dmprefix method has the best performance, about $56\%$ faster than \dmhashing, at slightly extra cost of replication ratio (i.e., $1.9$).
Similar trend is also observed on \gn dataset.
We remark that the distributed system is about $26 \sim 49$ times faster than its centralized version, which demonstrates the superiority of distributed processing.
On the other hand, the communication cost of both \dmkeyword and \dmprefix methods in \tweets dataset (Figure~\ref{fig:journal:exp1:dis_methods_comm}) achieve about $80\%$ reduction compared to \dmhashing and \dmlocation methods, which demonstrates the benefit of former two methods, especially when the communication cost becomes system bottleneck.
Similar trends can also be observed from \gn dataset.
However, as to \yelp dataset, we notice that even though \dmprefix method is much faster than \dmkeyword method, it is slower than \dmhashing and \dmlocation methods.
This is mainly because the number of keywords in \yelp message is much larger than \tweets and \gn, and thus the reduction in communication cost (Figure~\ref{fig:journal:exp1:dis_methods_comm}) is not significant enough to pay off the cost contributed by duplicate allocation.
In the following experiments, we omit \dmkeyword method since \dmprefix method has much better overall performance in terms of both throughput and replication ratio.


\noindent \textbf{Effect of number of subscription bolts.}
We evaluate the effect of number of subscription bolts $N_{sb}$ in Figure~\ref{fig:journal:exp2:bolt_num} by varying $N_{sb}$ from $4$ to $32$.
In terms of throughput, all the three algorithms can scale very well with increasing number of subscription bolts.
We observe that when $N_{sb}$ is small (i.e., $4$ or $8$), \dmprefix method performs worse than \dmhashing and \dmlocation methods due to the duplicate subscriptions in \tweets and \gn (Figure~\ref{fig:journal:exp2:tweets_bolt_num_throughput} and Figure~\ref{fig:journal:exp2:gn_bolt_num_throughput}).
However, when $N_{sb}$ reaches $16$ or $32$, the benefit of communication cost reduction has become significant (Figure~\ref{fig:journal:exp2:tweets_bolt_num_comm} and Figure~\ref{fig:journal:exp2:gn_bolt_num_comm}), thus contributing to the high throughput of \dmprefix method.
For \yelp dataset, \dmprefix method processes less messages per second compared to \dmhashing and \dmlocation methods while managed to reduce communication by about $50\%$.

\noindent \textbf{Effect of number of \topk results.}
We turn to evaluate the effect of number of \topk results in Figure~\ref{fig:journal:exp3:topk}.
We do not show the communication cost because it is irrelevant to the number of \topk results.
As shown in Figure~\ref{fig:journal:exp3:topk}, the throughputs in all the datasets decrease when we vary $k$ from $10$ to $50$.
This is because large $k$ usually yields high processing cost in subscription index.
However, the influence of $k$ is not very significant as the throughputs decrease slowly and linearly.

\noindent \textbf{Effect of number of subscriptions.}
In Figure~\ref{fig:journal:exp4:scale}, we vary the number of subscriptions from $5M$ to $20M$.
It is obvious that the throughputs drop in all the datasets when we increase the number of subscriptions.
However, the decreasing trend indicates that the average processing time of an incoming message is still linearly proportional to the number of subscriptions, considering the factor that the average processing time is inversely proportional to the throughput.

\noindent \textbf{Hybrid methods.}
In this set of experiments, we compare our methods with two possible hybrid methods: \sfirst method and \kfirst method.
Specifically, \sfirst method employs two-level partition scheme.
On the first level, it employs \dmlocation method while on the second level it employs \dmprefix method.
\kfirst method is similar to \sfirst method except that it employs \dmprefix method first and then \dmlocation method.
We compare the throughput and communication cost of these two hybrid methods while changing the number of partitions on each level.
The results of \dmlocation and \dmprefix methods are also shown in dotted line for comparison purpose.
Regarding throughput in Figure~\ref{fig:journal:exp5:hybrid_throughput} where $l_1$ is the number of partitions on the first level and $l_2$ is the number of partitions on the second level, all the datasets exhibit similar trends.
Specifically, for \sfirst method, when $l_1$ increases and $l_2$ decreases, the benefit of keyword partition becomes less while the benefit of spatial partition is still not significant, which leads to decreasing throughput.
As we further increase $l_1$ (e.g., $16 \times 2$), the spatial partition becomes dominant and thus improves the throughput which finally reaches the same throughput as \dmlocation method at $32 \times 1$.
The trends of \kfirst method can be explained similarly.
The communication costs of both \sfirst and \kfirst methods in Figure~\ref{fig:journal:exp5:hybrid_comm} are between those of \dmlocation and \dmprefix due to its hybrid nature.
Overall, the hybrid methods do not exhibit large improvement compared to \dmlocation and \dmprefix methods.

\section{Conclusion}
\label{sec:conclusion}
The popularity of streaming \gt data offers great opportunity for applications such as information dissemination and location-based campaigns.
In this paper, we study a novel problem of continuous \topk \sk \pubsub over sliding window.
To maintain \topk results for a large number of subscriptions over a fast stream simultaneously and continuously, we propose a novel indexing structure, which employs both individual pruning technique and group pruning technique, to process a new message instantly on its arrival.
In addition, to handle the re-evaluations incurred by expired messages from the sliding window, we develop a novel cost-based \skyband model with theoretical analysis to judiciously maintain a partial \skyband for each subscription.
Furthermore, a distributed stream processing system called \oursdis is developed, which extends \ours to \storm to exploit the benefit of parallel processing.
The experiments demonstrate that both \ours and its distributed version \oursdis can achieve high throughput performance over \gt stream.

\vspace{-3mm}

\bibliographystyle{spmpsci}      
\bibliography{ref}   

\end{document}